\newtheorem{Theorem}{Theorem}
\newtheorem{Proposition}{Proposition}
\newtheorem{Lemma}{Lemma}
\newcommand{\C}{\mathbb{C}}
\newcommand{\R}{\mathbb{R}}
\newcommand{\Z}{\mathbb{Z}}
\newcommand{\N}{\mathbb{N}}
\newcommand{\T}{\mathbb{T}}
\newcommand{\dd}{\mathrm{d}}
\newcommand{\ed}{\mathrm{e}}
\newcommand{\supp}{\mathrm{supp}}
\newcommand{\beq}{\begin{equation}}  
\newcommand{\eeq}{\end{equation}}  
\newcommand{\str}{ |}
   \let\ep=\epsilon
  \let\om=\omega
\newcommand{\caN}{{\mathcal N}}
\newcommand{\Id}{{\mathrm{Id}}}
\newcommand{\spanv}{{\mathrm{span}}}
\newcommand{\modek}{{\mathrm k}}
\newcommand{{\underj}}{\underline{j}}
\begin{document}

\title{
Asymptotic localization of energy in \\ non-disordered oscillator chains}

\author[1]{Wojciech De Roeck} 
\affil[1]{\footnotesize{Instituut voor Theoretische Fysica, KU Leuven, Belgium, \texttt{Wojciech.DeRoeck@fys.kuleuven.be}}}

\author[2]{Fran\c cois Huveneers}
\affil[2]{\footnotesize{CEREMADE, Universit\' e Paris-Dauphine, France, \texttt{huveneers@ceremade.dauphine.fr}}}

\date{\today}

\maketitle

\begin{abstract}
We study two popular one-dimensional chains of classical anharmonic oscillators: the rotor chain and a version of the discrete non-linear Schr\"odinger chain.  
We assume that the interaction between neighboring oscillators, controlled by the parameter $\epsilon >0$, is small.
We rigorously establish that the thermal conductivity of the chains has a non-perturbative origin, with respect to the coupling constant $\epsilon$, 
and we provide strong evidence that it decays faster than any power law in $\epsilon$ as $\epsilon \rightarrow 0$. 
The weak coupling regime also translates into a high temperature regime, suggesting that the conductivity vanishes faster than any power of the inverse temperature. 
To our knowledge, it is the first time that a clear connection is established between KAM-like phenomena and thermal conductivity.
\end{abstract}

\section{Introduction}

The rigorous derivation of transport properties of solids from molecular dynamics is a big and inspiring challenge in statistical mechanics out of equilibrium. 
It has been recognized since a long time that the transfer of energy could be strongly reduced, or even suppressed, in some Hamiltonian systems.
Anderson localization provides probably the  clearest example of this phenomenon. 
In the context of thermal transport, it is realized in disordered harmonic crystals \cite{ber}\cite{nac}, which constitute however a very untypical class of solids, 
since they are equivalent to an ideal gas of non-interacting phonons. 
For interacting quantum systems, one expects that the phenomenon of Anderson localization can persist  in some regimes, giving rise to the so-called `many-body localization' \cite{altschuler}. 
Recently, a mathematical approach to this question was developed in \cite{imbriespencer}.  
We learned of this work shortly after starting the present project, and it was a source of inspiration for us, especially for the perturbative part in Section \ref{sec: approximate change of variables}.

At finite volume, Nekhoroshev estimates \cite{pos 1999} and the KAM theorem \cite{pos 2001}
furnish a whole class of classical Hamiltonians that allow energy to be spread through the system only at very slow rates for all initial condition,
and not at all for some of them. 
These results partially extend to finite energy excitations of Hamiltonians depending on infinitely many variables (see \cite{ben}\cite{fro}\cite{pos 1990} among others).
At infinite volume, time-periodic and spatially localized solutions, called breathers, are also known to exist for generic type of classical oscillators chains \cite{mac}.  
In \cite{Maiocchi Carati}, an extensive adiabatic invariant is shown to exist for the Klein Gordon lattice at low but positive temperature 
(i.e.\ for initial conditions of infinite energy), implying an asymptotically slow mixing rate of the system.
As such however, all these results are of little help to understand the thermal conductivity of solids.

In this paper, we analyze two classical chains of strongly anharmonic oscillators (without disorder), 
and we show asymptotic localization of energy in a regime characterized by high thermal fluctuations, in comparison with the coupling strength between near atoms.
Let $\epsilon > 0$ denote a parameter controlling the strength of the coupling. 
We establish that energy can only diffuse through these systems at times that are larger than any inverse power of $\epsilon$ as $\epsilon \rightarrow 0$, 
except perhaps for a set of states, whose probability is itself smaller than any inverse power in $\epsilon$, with respect to the Gibbs state at a positive temperature $T$. 
In that sense, our result could be thought of as an analog of Nekhoroshev estimates, at infinite volume and positive temperature.
We hope that these results also provide some complementary view on the slow relaxation to equilibrium observed for chains with strong anharmonic on-site pinning \cite{hai}.

The first system we consider is a chain of rotors, 
consisting of particles constrained to move on circles, and weakly coupled through cosine interactions.  
Numerical studies indicate that this chain behaves as a normal conductor \cite{gia liv}, though the conductivity becomes divergent as temperature is sent to zero. 
The defocusing discrete non-linear Schr\"odinger chain is the second system we look at.
We study this chain in the regime where the on-site anharmonic pining dominates the weak harmonic coupling.   
Here as well, simulations show this chain to be a normal conductor \cite{iub}. 
It is known that, besides energy, these chains preserve a second quantity (see Section \ref{sec: models and results} below). 
To stress that our results do not depend on this, we allow for an extra interaction, that breaks the second conservation law. 

Our results ultimately rest on a phenomenon that is at the heart of all the results above:  
close individual atoms typically oscillate each at different frequencies, so that resonances, that are responsible for energy transfer in a perturbative regime, are rarely observed. 
To explain this a bit further, we find it useful to introduce a comparison with weakly coupled disordered chains. 
Assuming there the on-site potential to be harmonic, uncoupled atoms simply oscillate at a fixed but random eigenfrequency.
When a small interaction is turned on, only a few disconnected resonant spots are created here and there, 
corresponding to places where the eigenfrequencies of near atoms are, in very good approximation, in specific ratios with respect to each others. 
This observation has allowed to conclude to asymptotic localization of energy for a wide class of interaction potentials \cite{huv}, 
and, for harmonic interactions, to a true localization \cite{fro spe} as long as the coupling is not too large.  

Let us now move back to our non-disordered chains. 
Since the on-site interaction is strongly anharmonic, each uncoupled atom oscillates at a frequency that depends on its energy.  
Moreover, in the absence of interaction, the Gibbs state is a product measure, so that the eigenfrequency of each oscillator is here as well chosen randomly.
So far, the comparison with inhomogeneous chains is thus perfect. 
When the interaction is turned on, it is still so that rare resonant spots will appear here and there. 
However, these resonant islands are no longer attached to a fixed place. 
Instead, as a bit of energy get transferred, the eigenfrequencies are slightly modified, so that resonant sites can be destroyed here and recreated there. 
This phenomenon a priori favors the transport of energy. 
In fact, our main difficulty compared to \cite{huv}, was to show that this process itself occurs so slowly that it is irrelevant at the time scales we consider.   
Once this difficulty is overcome, the results of this paper resemble very closely the analogous statements in \cite{huv}.  
Those results were in turn inspired by \cite{liv} where the weak coupling limit for oscillator chains with energy-conserving dynamics was analyzed rigorously for the first time. 

The paper is organized as follows. 
Our results are stated in Section \ref{sec: models and results}.
The rest of the paper is devoted to the proofs.
We have not been able to handle both chains in a unified way, though it is mainly a question of details. 
As a consequence, Sections \ref{sec: approximate change of variables} to \ref{sec: proof of the principal theorem} exclusively deal with the rotor chain, 
while the non-linear Schr\"odinger chain is considered in Section \ref{sec: adaptation to nls}.

In Section \ref{sec: approximate change of variables}, 
a KAM-like change of variables is constructed, that isolates from the rest the part of the interaction giving rise to resonances. 
The stability of resonant islands is studied in Section \ref{sec: resonant frequencies}.
Our main result is finally shown in Section \ref{sec: proof of the principal theorem} for the rotor chain. 
Adaptations needed to handle the non-linear Schr\"odinger chain are explained in Section \ref{sec: adaptation to nls}. 
The final Section \ref{sec: last section}  contains the proof of three corollaries.

\section{Models and Results}\label{sec: models and results}

We define precisely the chains under study as well as the thermal conductivity, 
and state our results together with some comments.

\subsection{Models}

Let $N\ge 1$ be an odd integer and let $\Z_N = \{ -(N-1)/2, \dots , (N-1)/2 \}$. 
Let also $\gamma \ge 0$.
For $\gamma = 0$, the two dynamics we study preserve both the total energy and a second quantity: 
the momentum for the rotor chain, and the $\ell^2$-norm for the discrete non-linear Schr\"odinger chain. 
When $\gamma > 0$, these extra conservation laws are broken, so that only energy remains conserved. 
We will assume free boundary conditions, though that is only a matter of convenience: 
all our conclusions would still hold for other choices of boundary conditions.

\paragraph{The rotor chain.} 
The phase space consists of the points 
\begin{equation*}
(q,\omega) 
\; = \;
(q_x,\omega_x)_{x \in \Z_N} 
\; \in \; 
\Omega  = \Omega_N =   (\T \times \R)^{N} \quad \text{with} \quad \T = \R/(2\pi\Z).
\end{equation*}
The Hamiltonian is
\begin{align}
H (q,\omega) 
\; &= \;
D(\omega) + \epsilon V(q,\omega)
\; = \; \sum_{x\in\Z_N} H_x (q,\omega)
\; = \; \sum_{x\in \Z_N} \big( D_x (\omega) + \epsilon V_x (q) \big)
\nonumber\\
\; &= \; 
\frac{1}{2} \sum_{x\in\Z_N} \omega_x^2
\; + \;
\epsilon \sum_{x\in\Z_N} 
\Big(
 \gamma  (1 - \cos q_x) +  \big(1 - \cos (q_x - q_{x+1}) \big)  
\Big) ,
\label{Hamiltonian Rotors}
\end{align}
with the convention $q_{(N+1)/2} = q_{(N-1)/2}$, so that free boundary conditions are imposed on both sides.
The Hamilton equations of motion are
\begin{equation}\label{Hamilton equations}
\dot{q} \; = \; \nabla_\omega H 
\qquad \text{and} \qquad 
\dot{\omega} \; = \; - \nabla_q H. 
\end{equation}
The total momentum $\sum_{x} \omega_x$ is a conserved quantity only at $\gamma = 0$.
Given an initial condition $(q,\omega) \in \Omega$, we denote the Hamiltonian flow by $ (X_\epsilon^t (q,\omega))_{t\geq 0} \subset \Omega$.

\paragraph{The discrete non-linear Schr\"odinger chain.}
The phase space consists of the points 
\begin{equation*}
\psi \; = \; (\psi_x)_{x\in\Z_N} 
\; \in \; \Omega  = \Omega_N =  \C^{N} \simeq (\R^2)^N. 
\end{equation*}
The Hamiltonian is
\begin{align}
H (\psi ) 
\; &= \; 
D (\psi ) + \epsilon V (\psi)
\; = \; 
\sum_{x\in \Z_N} H_x (\psi)
\; = \; \sum_{x \in \Z_N} 
\big( 
D_x(\psi) + \epsilon V_x (\psi)
\big)
\nonumber\\
\; &= \; 
\frac{1}{2}\sum_{x\in \Z_N} |\psi_x|^4
\; + \; \epsilon
\sum_{x\in\Z_N}
\Big(
\gamma (\psi_x + \overline{\psi}_x)^2 + |\psi_x - \psi_{x+1}|^2
\Big),
\label{Hamiltonian DNLS}
\end{align}
with again the convention $\psi_{(N+1)/2} = \psi_{(N-1)/2}$.
Writing  $H(\psi)$ as $H(\psi,\overline{\psi})$, the Hamilton equations of motion take the redundant form
\begin{equation*}
i \dot{\psi} \; = \; \nabla_{\overline{\psi}} H  
\qquad \text{and} \qquad
i \dot{\overline{\psi}} \; = \; - \nabla_{\psi} H .
\end{equation*}
The total $\ell^2$-norm $\sum_x |\psi_x|^2$ is a conserved quantity only at $\gamma = 0$.
Given an initial condition $\psi \in \Omega$, we denote the Hamiltonian flow by $ (X_\epsilon^t (\psi))_{t \geq 0} \subset \Omega$.

To see the analogy between this chain and the rotor chain, we could move to action-angle, or polar, coordinates. 
Writing 
\begin{equation*}
\omega_x \; = \; |\psi_x|^2
\qquad \text{and} \qquad 
\tan q_x \; = \;  \Im \psi_x/ \Re \psi_x,
\end{equation*}
the Hamiltonian \eqref{Hamiltonian DNLS} is recast as
\begin{equation*}
H (q,\omega) 
\; = \; 
\sum_{x\in\Z_N}
\Big(
\frac{\omega_x^2}{2} +4 \gamma \epsilon\omega_x \cos^2 {q}_x + \epsilon \big(\omega_x + \omega_{x+1} - 2\sqrt{\omega_x \omega_{x+1}} \cos (q_x - q_{x+1})\big)
\Big)
\end{equation*}
while Hamilton equations now precisely take the form \eqref{Hamilton equations}.
Unfortunately, this change of variable is not well defined if some frequency $\omega_x$ vanishes, 
implying that the field $\nabla_\omega H$ becomes singular as $\omega_x\rightarrow 0$.
We will, for this reason, not make explicitly use of it.

\subsection{Heat current and thermal conductivity}

Given two functions $f,g\in \mathcal C^\infty (\Omega)$, we define, for rotors, 
\begin{equation}\label{Poisson bracket for rotors}
L_f g 
\; = \; 
\{ f , g \} 
\; = \; 
\nabla_\omega f \cdot \nabla_q g - \nabla_q f \cdot \nabla_\omega g
\; = \; 
- \{ g,f \}
\; = \; 
- L_g f,
\end{equation}
and for the non-linear Schr\"odinger chain,
\begin{equation}\label{Poisson bracket for DNLS}
L_f g 
\; = \; 
\{ f , g \} 
\; = \; 
- i \big( \nabla_{\overline \psi} f \cdot \nabla_{\psi} g - \nabla_{\psi} f \cdot \nabla_{\overline \psi} g \big).
\end{equation}
Given $a \in \Z_N$, we define the energy current $\epsilon J_{a,a+1}$ across the bond $(a,a+1)$ by 
\begin{equation} \label{current}
\epsilon J_{a,a+1} \; = \; L_H \sum_{x > a} H_x
\; = \; 
\bigg\{ \sum_{y \le a} H_y \, ,\, \sum_{x > a} H_x \bigg\}
\; = \; 
\{ H_a , H_{a+1} \}.
\end{equation}
We then define the total, normalized, current $\epsilon \mathcal J$  by 
\begin{equation*}
\epsilon \mathcal J \; = \; \frac{\epsilon}{N^{1/2}} \sum_{a\in \Z_N} J_{a,a+1}.
\end{equation*}

Let $T > 0$ be some fixed temperature. 
The Gibbs state is a measure on $\Omega$ defined, for the rotor chain, by 
\begin{equation*}
f \; \mapsto \; \langle f \rangle_T \; = \; \frac{1}{Z(T)} \int_\Omega f (q,\omega) \, \ed^{- H(q,\omega)/T} \, \dd q \dd \omega, 
\end{equation*}
where $Z (T)$ is a normalization factor such that this measure is a probability measure. 
For the non-linear Schr\"odinger chain, the expression is analogous: $H(q,\omega)$ is replaced by $H(\psi)$, and $\dd q \dd \omega$ is replaced by $\dd \Re (\psi) \dd \Im (\psi)$.
The Green-Kubo conductivity of the system is defined, if the limits exist, as a space-time variance \cite{lep}: 
\begin{equation}\label{conductivity}
\kappa (T,\epsilon) \; = \; \lim_{t\rightarrow \infty} \lim_{N\rightarrow \infty} 
\frac{1}{T^2}
\Big\langle\Big( 
\frac{\epsilon}{\sqrt{t}} \int_0^t \mathcal J_N (X^s_\epsilon) \, \dd s
\Big)^2\Big\rangle_T
\end{equation} 
where we have written $\mathcal J_N$ instead of $\mathcal J$, to remind ourselves that this quantity depends on $N$, and we have set $\mathcal J_N (X^s_\epsilon)=\mathcal J_N \circ X^s_\epsilon$.
We note that, thanks to good decorrelation properties of the Gibbs measure (see Section \ref{sec: last section}), 
the limit $N\rightarrow \infty$ is independent of the boundary conditions.

\subsection{Results}

We start by an abstract result expressing that, in all orders in perturbation in $\epsilon$, 
only local oscillations of the energy field (and hence no persistent currents) can be produced by the dynamics.
\begin{Theorem}\label{the: decomposition fo the current}
Let the Hamiltonian be given by \eqref{Hamiltonian Rotors} or \eqref{Hamiltonian DNLS}. 
Let $T > 0$ be fixed.
Choose any $n \ge 1$ and let then $\mathrm C_n < + \infty$ be large enough. 
For any  $N \ge 1$ and $a \in \Z_N$,
the current across the bond $(a,a+1)$ can be decomposed as 
\begin{equation*}
\epsilon J_{a,a+1} \; = \; L_H U_{a} + \epsilon^{n+1} G_{a}
\end{equation*}
The functions $U_{a}$ and $G_{a}$ are 
smooth, 
of zero average,   $\langle U_a \rangle_T=\langle G_a \rangle_T=0$, and they
depend only on variables labeled by $z\in \Z_N$ with $|z - a| \le \mathrm C_n$,
and satisfy the bounds
\begin{equation}\label{bornes sur U et G}
\langle U_a^2 \rangle_T \; \le \; \mathrm C_n\epsilon^{1/4}, 
\;
\langle (\partial_\sharp U_a)^2 \rangle_T \; \le \; \mathrm C_n \epsilon^{-1/4},
\;
\langle G_a^2 \rangle_T \; \le \; \mathrm C_n, 
\; 
\langle (\partial_\sharp G_a)^2 \rangle_T \; \le \; \mathrm C_n
\end{equation}
where $\sharp$ stands for any of the variables.
\end{Theorem}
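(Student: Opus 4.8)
The plan is to construct the decomposition iteratively, order by order in $\epsilon$, by a sequence of KAM-like averaging transformations that remove the non-resonant part of the interaction. At zeroth order, $D(\omega)$ generates the free dynamics under which each $\omega_x$ is frozen and each $q_x$ rotates at the fixed frequency $\omega_x$; the operator $L_D$ acting on functions that are trigonometric polynomials in the angles is diagonal, with $L_D \ed^{\id (k \cdot q)} = \id (k \cdot \omega) \ed^{\id (k \cdot q)}$, so it is invertible on the subspace of angle-modes $k$ whose frequency combination $k\cdot\omega$ stays away from zero. I would fix a small cutoff $\delta = \delta(\epsilon)$ (a fractional power of $\epsilon$) separating "resonant" modes $|k\cdot\omega| \le \delta$ from "non-resonant" ones, and at each step split the perturbation $\epsilon^j W_j$ into its resonant part $\epsilon^j W_j^{\mathrm{res}}$ (kept) and non-resonant part $\epsilon^j W_j^{\mathrm{nr}}$ (killed by a generating function $\epsilon^j S_j$ solving the homological equation $L_D S_j = -W_j^{\mathrm{nr}}$). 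After $n$ steps one has conjugated $H$ to $D + \sum_{j=1}^n \epsilon^j W_j^{\mathrm{res}} + \epsilon^{n+1}(\text{remainder})$, where every $W_j^{\mathrm{res}}$ is supported on bounded-range resonant-mode contributions. This is essentially the construction promised in Section \ref{sec: approximate change of variables}, so I would simply invoke it.

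Given that change of variables $\Phi$, the point is that $\epsilon J_{a,a+1} = \{H_a, H_{a+1}\}$ pulled back through $\Phi$ differs from a total $L_H$-derivative by a term of order $\epsilon^{n+1}$. Concretely: in the new variables the resonant Hamiltonian still has each $H_x$ Poisson-commuting with $D$ to the orders that matter except through the resonant terms, and the resonant terms, being supported on measure-small frequency sets, contribute currents that are themselves $L_H$-exact up to the remainder. More precisely, I would write $\epsilon J_{a,a+1} = L_H U_a + (\text{resonant contribution}) + \epsilon^{n+1} R_a$, then absorb the resonant contribution: since the resonant modes are those with $|k\cdot\omega|\le\delta$, one can still formally invert $L_D$ on them at the cost of a factor $\delta^{-1}$, and choosing $\delta$ a suitable power of $\epsilon$ makes the resulting $U_a$ obey $\langle U_a^2\rangle_T \le \mathrm{C}_n \epsilon^{1/4}$ while the lost power of $\delta^{-1}$ in one derivative yields $\langle(\partial_\sharp U_a)^2\rangle_T \le \mathrm{C}_n\epsilon^{-1/4}$. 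The zero-average property follows because currents are Poisson brackets, hence $\langle L_H(\cdot)\rangle_T = 0$ by invariance of the Gibbs measure under the flow, and one subtracts the constant from $G_a$ by hand. The locality statement (dependence only on $|z-a|\le\mathrm{C}_n$) is automatic: each averaging step spreads the support by a bounded amount (nearest-neighbour interaction, $n$ steps), so $\mathrm{C}_n$ grows linearly in $n$.

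The bounds on $\langle G_a^2\rangle_T$ and $\langle(\partial_\sharp G_a)^2\rangle_T$ require controlling the $n$-th order remainder of the averaging procedure in a weighted norm adapted to the Gibbs measure; this is where one needs the Gaussian-type tails of $\ed^{-D(\omega)/T}$ in $\omega$ (quartic for DNLS, quadratic for rotors) to dominate the polynomial growth in $\omega$ generated by repeated Poisson brackets and by the small-divisor factors $(k\cdot\omega)^{-1}$ on the non-resonant set. The key analytic input is that on the non-resonant set $|k\cdot\omega|>\delta$ one has the deterministic lower bound $\delta$, but in the Gibbs average the weight also controls how often $\omega$ is large, so moments of $(k\cdot\omega)^{-1}\mathbf 1_{|k\cdot\omega|>\delta}$ against the Gibbs measure are bounded by powers of $\delta^{-1}$ and Gaussian moments — these combine to the stated $\epsilon^{\pm 1/4}$ once $\delta$ is optimized.

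The main obstacle I expect is \emph{not} the formal algebra of the averaging, which is standard, but the quantitative bookkeeping: tracking simultaneously (i) the spatial spreading of supports, (ii) the polynomial-in-$\omega$ degree growth under iterated brackets, and (iii) the accumulation of small-divisor factors $\delta^{-1}$, and then showing that after $n$ steps the remainder $G_a$ is genuinely $O(1)$ in the Gibbs norm while $U_a$ carries a positive power $\epsilon^{1/4}$. Balancing these forces the choice $\delta \sim \epsilon^{\alpha}$ for a specific $\alpha$, and the exponent $1/4$ in \eqref{bornes sur U et G} is presumably the output of that optimization; getting the exponents to come out with the right signs, uniformly in $N$, is the delicate part. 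A secondary technical nuisance specific to DNLS is the singularity of the angle variables when some $\omega_x\to 0$, flagged in the text — so for DNLS I would carry out the whole construction directly in the $\psi$ variables using the Poisson bracket \eqref{Poisson bracket for DNLS}, never passing to action-angle coordinates, which only affects notational details of the homological equation.
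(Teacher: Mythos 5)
Your proposal correctly identifies the averaging transformation of Section \ref{sec: approximate change of variables} as the starting point, and your discussion of the bookkeeping (spatial spread, polynomial growth in $\omega$, working directly in the $\psi$ variables for DNLS) is sound. However, the central step --- how to dispose of the resonant contribution --- is wrong as stated. You propose to ``formally invert $L_D$ on the resonant modes at the cost of a factor $\delta^{-1}$.'' This cannot work: the cutoff separates $|\modek\cdot\omega|>\delta$ from $|\modek\cdot\omega|\le\delta$, and on the latter set $\modek\cdot\omega$ can be exactly zero, so $L_D$ is genuinely non-invertible there. The cutoff $\rho_\delta$ exists precisely so that the homological equation is never solved on resonant modes; the resonant part $\mathcal R(\cdots)$ is kept in $\widetilde H$, and the whole problem is to show it carries no persistent current. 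The mechanism the paper actually uses is geometric, not analytic, and your proposal misses both of its ingredients.

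First, Section \ref{sec: resonant frequencies} shows that the resonant frequencies near a site form a set of $O(\delta)$ Gibbs measure, and --- this is the heart of the argument, the ``moving resonant islands'' obstruction flagged in the introduction as the main advance over \cite{huv} --- that this set is \emph{approximately invariant} under the $\widetilde H$-flow: $L_{\widetilde H}\theta_x$ vanishes outside a much rarer multi-resonant set $\mathsf S(x)$ (Proposition \ref{pro: invariance resonant set}(2)). Your proposal does not engage with this at all. Second, Section \ref{sec: proof of the principal theorem} does not write $\epsilon J_{a,a+1}$ as a total $L_H$-derivative plus a resonant correction to be absorbed; instead it constructs a smooth $\omega$-dependent partition of unity $\vartheta_{a,x},\vartheta_{a,*}$ from the indicators $\theta_x$ that \emph{moves the cut} $(a,a+1)$ to a non-resonant location whenever one is available, defines $H=H_{\le a}+H_{>a}$ accordingly, and then sets $U_a=H^{\mathrm O}_{>a}-H_{>a}$ (centered) and $\epsilon^{n+1}G_a=L_H H_{>a}$. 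Consequently, the exponent $1/4$ in \eqref{bornes sur U et G} is not the output of a small-divisor optimization as you suggest: $\langle U_a^2\rangle_T=O(\delta)$ because the zeroth-order difference is supported on the $O(\delta)$-probability set where some $\theta_x<1$ near $a$, and higher orders scale like $(\epsilon\delta^{-2})^n$; setting $\delta=\epsilon^{1/4}$ balances the two. Likewise $\langle G_a^2\rangle_T=O(1)$ because $L_{\widetilde H}\widetilde H_{>a}$ is supported on the exceptional set $\mathsf Z$ of probability $O(\delta^{n_2})$ (Lemma \ref{Lem: exceptional set Z}), and a Cauchy--Schwarz against $\chi_{\mathsf Z}$, with $n_2$ chosen large enough, overwhelms the $\delta^{-2n_1}$ blowup of the remainder. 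Without the cut-shifting partition and the approximate-invariance result, your plan has no route to the stated quantitative bounds.
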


We deduce two results on the thermal conductivity from this abstract statement. 
The analysis of the conductivity as defined by \eqref{conductivity} is probably out of reach at the present time. 
We can however obtain some conclusion by assuming that the true value of the integral in \eqref{conductivity} is already attained at a time $t$ that grows as some inverse power in $\epsilon$ as $\epsilon \rightarrow 0$.  One can argue (see e.g.\ Chapter 5 of \cite{ollabernardin}) that this is equivalent to exciting the system locally, and observing the relaxation for a time $t$ of this order. Our result is quite similar in spirit to results about weak coupling limits in such systems, e.g.\  \cite{lukkarinenspohn}\cite{dolgopyatliverani}\cite{liv}, where one describes the dynamics in a scaling limit where coupling vanishes but time goes to infinity.  However, in our case, these scaling limits are trivial in the sense that we do not see any transport on the time scales that we study.  We would find it very interesting to push the analysis to longer time scales and to exhibit a non-vanishing contribution to the conductivity.

So first, we follow the dynamics for a time of order $\epsilon^{-n}$, for an arbitrary large $n$, and let $\epsilon \rightarrow 0$. 
We believe the next theorem to be a strong indication that $\kappa (T,\epsilon) = \mathcal O (\epsilon^m)$ for any $m\ge 1$. To establish this rigorously, one would need to 
exchange the limits $t\rightarrow \infty$ and $\epsilon \rightarrow 0$. 
\begin{Theorem}\label{the: weak coupling conductivity}
Let the Hamiltonian be given by \eqref{Hamiltonian Rotors} or \eqref{Hamiltonian DNLS}. 
Let $T > 0$ be fixed.
Let $1 \le m < n$. Then
\begin{equation*}
\lim_{t\rightarrow \infty} \limsup_{\epsilon \rightarrow 0} \limsup_{N\rightarrow \infty} 
\frac{\epsilon^{-m}}{T^2}
\Big\langle\Big( 
\frac{\epsilon}{\sqrt{\epsilon^{-n}t}} \int_0^{\epsilon^{-n}t} \mathcal J_N (X^s_\epsilon) \, \dd s
\Big)^2\Big\rangle_T
\; = \; 
0.
\end{equation*}
\end{Theorem}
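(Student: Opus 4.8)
The claim follows from Theorem~\ref{the: decomposition fo the current} by elementary manipulations, the only substantial ingredient being the $N$-uniform decay of correlations of the Gibbs state recorded in Section~\ref{sec: last section}. First I would apply Theorem~\ref{the: decomposition fo the current} with the given $n$ and sum the decomposition of $\epsilon J_{a,a+1}$ over $a\in\Z_N$. Setting
\[
\mathcal{U} \;=\; \frac{1}{N^{1/2}}\sum_{a\in\Z_N}U_a
\qquad\text{and}\qquad
\mathcal{G} \;=\; \frac{1}{N^{1/2}}\sum_{a\in\Z_N}G_a ,
\]
this gives $\epsilon\mathcal{J}_N = L_H\mathcal{U}+\epsilon^{n+1}\mathcal{G}$ with $\langle\mathcal{U}\rangle_T=\langle\mathcal{G}\rangle_T=0$. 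Since $\tfrac{\dd}{\dd s}\big(f\circ X^s_\epsilon\big)=\big(L_Hf\big)\circ X^s_\epsilon$, the fundamental theorem of calculus yields, with $\tau:=\epsilon^{-n}t$,
\[
\frac{\epsilon}{\sqrt{\tau}}\int_0^{\tau}\mathcal{J}_N\circ X^s_\epsilon\,\dd s
\;=\;
\frac{\mathcal{U}\circ X^{\tau}_\epsilon-\mathcal{U}}{\sqrt{\tau}}
\;+\;
\frac{\epsilon^{n+1}}{\sqrt{\tau}}\int_0^{\tau}\mathcal{G}\circ X^s_\epsilon\,\dd s .
\]
The point of the decomposition is that a total derivative along the flow integrates to a telescoping boundary term, hence cannot build up with $\tau$, while the remainder carries an extra factor $\epsilon^{n+1}$.

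Next I would estimate the two contributions in $\Lp^2(\langle\,\cdot\,\rangle_T)$. The Gibbs state is invariant under $X^s_\epsilon$ (Liouville's theorem together with conservation of $H$), so $\langle(\mathcal{U}\circ X^{\tau}_\epsilon)^2\rangle_T=\langle\mathcal{U}^2\rangle_T$ and, applying $(a-b)^2\le 2a^2+2b^2$, the boundary term contributes at most $8\,\tau^{-1}\langle\mathcal{U}^2\rangle_T$. For the remainder, Cauchy--Schwarz in $s$ gives $\big(\int_0^{\tau}\mathcal{G}\circ X^s_\epsilon\,\dd s\big)^2\le\tau\int_0^{\tau}(\mathcal{G}\circ X^s_\epsilon)^2\,\dd s$, which after averaging and using invariance again is $\le\tau^2\langle\mathcal{G}^2\rangle_T$; so that term contributes at most $2\,\epsilon^{2n+2}\tau\,\langle\mathcal{G}^2\rangle_T$. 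It then remains to bound $\langle\mathcal{U}^2\rangle_T$ and $\langle\mathcal{G}^2\rangle_T$ uniformly in $N$. Writing $\langle\mathcal{U}^2\rangle_T=\tfrac1N\sum_{a,b\in\Z_N}\langle U_aU_b\rangle_T$ and using that $U_a,U_b$ are supported in windows of size $2\mathrm C_n+1$, that $\langle U_a\rangle_T=0$, and the decay of correlations of the Gibbs state from Section~\ref{sec: last section} (which is only improved by taking $\epsilon$ small), one obtains $|\langle U_aU_b\rangle_T|\le C\,\ed^{-c|a-b|}\langle U_a^2\rangle_T^{1/2}\langle U_b^2\rangle_T^{1/2}$ with $c,C$ depending only on $T$ and $\mathrm C_n$; summing the geometric series and using $\langle U_a^2\rangle_T\le\mathrm C_n\epsilon^{1/4}$ yields $\langle\mathcal{U}^2\rangle_T\le C'\epsilon^{1/4}$, and identically $\langle\mathcal{G}^2\rangle_T\le C'$, both uniform in $N$.

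Putting the pieces together, the quantity under the limits in the statement is bounded, uniformly in $N$, by
\[
\frac{\epsilon^{-m}}{T^2}\Big(\frac{8C'\epsilon^{1/4}}{\epsilon^{-n}t}+2C'\epsilon^{2n+2}\epsilon^{-n}t\Big)
\;=\;
\frac{C''}{T^2}\Big(\frac{\epsilon^{\,n-m+1/4}}{t}+\epsilon^{\,n-m+2}\,t\Big) ,
\]
with $C''$ independent of $N$, $\epsilon$, $t$. Since $m<n$, both exponents $n-m+\tfrac14$ and $n-m+2$ are strictly positive, so for each fixed $t>0$ the right-hand side tends to $0$ as $\epsilon\to0$; taking first $N\to\infty$ (harmless, the bound is uniform), then $\epsilon\to0$, then $t\to\infty$ completes the proof.

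The only delicate step is the $N$-uniformity in the last part of the second paragraph: a naive triangle inequality would only give $\langle\mathcal{U}^2\rangle_T\le N\,\mathrm C_n\epsilon^{1/4}$, which diverges and is useless, so the argument genuinely rests on the one-dimensional Gibbs state with finite-range (and weak) interaction having summable, $N$-uniform correlation decay. Granting that input from Section~\ref{sec: last section}, everything else---summation over bonds, the fundamental theorem of calculus along the flow, Cauchy--Schwarz, and invariance of $\langle\,\cdot\,\rangle_T$---is routine bookkeeping.
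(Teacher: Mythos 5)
Your argument reproduces the paper's proof essentially line by line: apply Theorem~\ref{the: decomposition fo the current}, write the time-integrated current as a telescoping boundary term plus an $\epsilon^{n+1}$-suppressed remainder, bound each in $\Lp^2(\langle\cdot\rangle_T)$ using stationarity, Cauchy--Schwarz in time, and the $N$-uniform decay of correlations from Section~\ref{sec: last section}. The overall structure and the conclusion are correct.

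There is, however, one misstep in the bookkeeping that you should fix. You write
\[
|\langle U_a U_b\rangle_T|\;\le\; C\,\ed^{-c|a-b|}\langle U_a^2\rangle_T^{1/2}\langle U_b^2\rangle_T^{1/2},
\]
but the decorrelation inequality \eqref{decorrelation} in the paper does not give this: it controls $|\langle f g\rangle_T|$ by $\langle|\nabla f|^2\rangle_T^{1/2}\langle|\nabla g|^2\rangle_T^{1/2}$, i.e.\ by the gradient norms, not the second moments of $f,g$ themselves. Since Theorem~\ref{the: decomposition fo the current} gives $\langle(\partial_\sharp U_a)^2\rangle_T\le\mathrm C_n\epsilon^{-1/4}$ (and $U_a$ has support of size $\mathcal O(1)$), the decorrelation estimate yields $|\langle U_a U_b\rangle_T|\le\mathrm C\,\ed^{-c|a-b|}\epsilon^{-1/4}$ for $|a-b|$ large, and only the \emph{diagonal} terms enjoy $\langle U_a^2\rangle_T\le\mathrm C_n\epsilon^{1/4}$. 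Summing the geometric series therefore gives $\langle\mathcal U^2\rangle_T\le\mathrm C\epsilon^{-1/4}$ (the paper records the slightly looser $\mathrm C\epsilon^{-1/2}$), not $\mathrm C\epsilon^{1/4}$ as you claim. Consequently your final bound should read, for some $C''$ independent of $N,\epsilon,t$,
\[
\frac{C''}{T^2}\Big(\frac{\epsilon^{\,n-m-1/4}}{t}+\epsilon^{\,n-m+2}\,t\Big)
\]
rather than having exponent $n-m+\tfrac14$ on the first term. Since $m<n$ are integers, $n-m\ge 1$ and thus $n-m-\tfrac14>0$, so the conclusion is unaffected; but as written the application of \eqref{decorrelation} is wrong, and in a setting where the margin were tighter it would have produced an incorrect result. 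The same caveat applies to your estimate of $\langle\mathcal G^2\rangle_T$: there you need $\langle(\partial_\sharp G_a)^2\rangle_T\le\mathrm C_n$, which the theorem also provides, so that part is fine.
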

One could speculate whether some non-perturbative effects could lead to a breakdown of the conjecture $\kappa (T,\epsilon) = \mathcal O (\epsilon^m)$. 
We cannot exclude this, and in fact we do not even rigorously know whether the chains we consider are normal conductors for some $\epsilon > 0$, that is, whether $\kappa (T,\epsilon) <\infty$. 
It is however commonly believed that, on  sufficiently large time scales, the dynamics of such systems becomes chaotic. 
As in \cite{huv}, we can, for the rotor chain, mimic this hypothetic non-perturbative chaotic behavior by a stochastic noise that conserves energy, 
and that becomes perceptible on very large time scales, namely $\epsilon^{-(n+1)}$, for some arbitrarily large $n$.
We  are then able to show that the conductivity is finite and not larger than $\epsilon^{n}$, so that it can be attributed to the noise. Instead, we do not know what could be the effect of non-perturbative integrable structures, such as solitons traveling ballistically. 

Let us consider the rotor chain. For $n\ge 1$, we let
\begin{equation}\label{noisy generator}
\mathcal L 
\; = \; 
L_H \; + \; \epsilon^{n+1} S
\quad \text{with} \quad 
S u (q, \omega) \; = \; \sum_{x\in \Z_N} \big( u (q, \dots ,-\omega_x , \dots ) - u (q,\omega) \big) 
\end{equation}
be the generator of a Markov process on $\Omega$. 
Let us denote by $(\mathcal X_\epsilon^t(q,\omega))_{t\geq 0}$ the Markov process generated by $\mathcal L$ and started from the point $(q,\omega)$. 
We denote by $\mathsf E$ the expectation with respect to the realizations of the noise $S$.
\begin{Theorem}\label{the: noisy dynamics}
Let the Hamiltonian be given by \eqref{Hamiltonian Rotors}. 
Let $T > 0$ be fixed.
For any $n\ge 1$, it holds that there is $\mathrm C_n < \infty$ such that, for sufficiently small $\ep>0$,
\begin{equation*}
\lim_{t\rightarrow \infty}  \limsup_{N\rightarrow \infty} 
\frac{1}{T^2}
\Big\langle \mathsf E \Big( 
\frac{\epsilon}{\sqrt{t}} \int_0^{t} \mathcal J_N (\mathcal X^s_\epsilon) \, \dd s
\Big)^2\Big\rangle_T
\; \leq \; \mathrm C_n \epsilon^n
\end{equation*}
\end{Theorem}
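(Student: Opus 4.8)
The plan is to combine the decomposition of Theorem~\ref{the: decomposition fo the current} with a standard (Kipnis--Varadhan-type) bound on the time integral of a centred observable along the stationary Markov process generated by $\mathcal L$; the only non-routine ingredient is a time-reversal symmetry that makes the non-gradient remainder invisible to the momentum-flip noise. Writing $\mathcal J_N=N^{-1/2}\sum_a J_{a,a+1}$ and applying Theorem~\ref{the: decomposition fo the current} bond by bond, $\epsilon\sqrt N\,\mathcal J_N=L_HU+\epsilon^{n+1}G$ with $U=\sum_aU_a$, $G=\sum_aG_a$. The Gibbs state $\langle\,\cdot\,\rangle_T$ is $\mathcal L$-invariant: it is invariant under $L_H$, and under $S$ because each flip $\theta_x\colon\omega_x\mapsto-\omega_x$ preserves $H$ and $\dd q\,\dd\omega$. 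Since $L_H=\mathcal L-\epsilon^{n+1}S$, Dynkin's formula gives, for the process started from $\langle\,\cdot\,\rangle_T$,
\begin{equation*}
\epsilon\int_0^t\mathcal J_N(\mathcal X^s_\epsilon)\,\dd s=\frac1{\sqrt N}\big(U(\mathcal X^t_\epsilon)-U(\mathcal X^0_\epsilon)-M_t\big)+\frac{\epsilon^{n+1}}{\sqrt N}\int_0^t(G-SU)(\mathcal X^s_\epsilon)\,\dd s,
\end{equation*}
where $M_t$ is the martingale part of $U(\mathcal X^t_\epsilon)$, with predictable quadratic variation $\epsilon^{n+1}\int_0^t\sum_x(\theta_xU-U)^2(\mathcal X^s_\epsilon)\,\dd s$ (the Hamiltonian part contributes nothing). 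Bounding the square of the right-hand side by three times the sum of the squares, I must control three terms after dividing by $NtT^2$. As the $U_a,G_a$ are local and centred, exponential decay of correlations of the one-dimensional Gibbs state (Section~\ref{sec: last section}) together with \eqref{bornes sur U et G} gives $\langle U^2\rangle_T\le\mathrm C_nN\epsilon^{1/4}$, $\langle\sum_x(\theta_xU-U)^2\rangle_T\le\mathrm C_nN\epsilon^{1/4}$ and $\langle(G-SU)^2\rangle_T\le\mathrm C_nN$. By stationarity the boundary term is $\le\tfrac{C}{Nt}\langle U^2\rangle_T\le\mathrm C_n\epsilon^{1/4}/t$, which vanishes as $t\to\infty$, and the martingale term is $\tfrac1{Nt}\langle\mathsf E M_t^2\rangle_T=\tfrac{\epsilon^{n+1}}N\langle\sum_x(\theta_xU-U)^2\rangle_T\le\mathrm C_n\epsilon^{n+5/4}\le\mathrm C_n\epsilon^n$.

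There remains the term $\tfrac{\epsilon^{2n+2}}{Nt}\langle\mathsf E(\int_0^t(G-SU)(\mathcal X^s_\epsilon)\,\dd s)^2\rangle_T$. I would use the elementary bound valid for any centred $h$ and any $\lambda>0$,
\begin{equation*}
\Big\langle\mathsf E\Big(\int_0^t h(\mathcal X^s_\epsilon)\,\dd s\Big)^2\Big\rangle_T\le C\,t\,\big\langle h,(\lambda-\mathcal S)^{-1}h\big\rangle_T\Big|_{\lambda=1/t},
\end{equation*}
where $\mathcal S$ is the symmetric part of $\mathcal L$ in $\Lp^2(\langle\,\cdot\,\rangle_T)$; it follows by writing $\int_0^th=\lambda\int_0^th_\lambda-\big(h_\lambda(\mathcal X^t_\epsilon)-h_\lambda(\mathcal X^0_\epsilon)-N_t\big)$ with $h_\lambda=(\lambda-\mathcal L)^{-1}h$, using $\mathsf E N_t^2=2t\langle h_\lambda,(-\mathcal S)h_\lambda\rangle_T$ and the inequality $\langle h,(\lambda-\mathcal L)^{-1}h\rangle_T\le\langle h,(\lambda-\mathcal S)^{-1}h\rangle_T$. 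Here $L_H$ is skew-adjoint on $\Lp^2(\langle\,\cdot\,\rangle_T)$ — its flow is complete since $H$ is conserved, and preserves the Gibbs measure — so $\mathcal S=\epsilon^{n+1}S$, and $-S$ is bounded below by $2$ on $(\ker S)^\perp$. Thus if $h=G-SU$ is orthogonal to $\ker S$ one gets $\langle h,(1/t-\epsilon^{n+1}S)^{-1}h\rangle_T\le(2\epsilon^{n+1})^{-1}\langle h^2\rangle_T$, and the term in question is $\le\tfrac{\epsilon^{2n+2}}{Nt}\cdot Ct\cdot(2\epsilon^{n+1})^{-1}\mathrm C_nN=\mathrm C_n\epsilon^{n+1}\le\mathrm C_n\epsilon^n$.

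So everything reduces to checking that $G-SU\perp\ker S$. Since every element of $\ker S$ is fixed by all $\theta_x$, $\langle SU,w\rangle_T=\langle U,Sw\rangle_T=0$ there, so $SU\perp\ker S$; the point is $G$. Let $\Theta=\prod_x\theta_x$ be the total momentum reversal. The rotor Hamiltonian \eqref{Hamiltonian Rotors} is $\Theta$-invariant (kinetic part even in $\omega$, potential $\omega$-independent), $\Theta$ preserves $\langle\,\cdot\,\rangle_T$ and locality, and it is an anti-symplectic involution, whence $(L_Hg)\circ\Theta=-L_H(g\circ\Theta)$, i.e.\ $L_H$ reverses $\Theta$-parity. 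The current $\epsilon\sqrt N\,\mathcal J_N$ is linear in the momenta, hence $\Theta$-odd; comparing $\Theta$-odd parts in $\epsilon\sqrt N\,\mathcal J_N=L_HU+\epsilon^{n+1}G$ and replacing $U$ by $\tfrac12(U+U\circ\Theta)$, $G$ by $\tfrac12(G-G\circ\Theta)$ — which keeps locality, zero average, and the bounds \eqref{bornes sur U et G} — I may assume $U$ is $\Theta$-even and $G$ is $\Theta$-odd. Since every $w\in\ker S$ is fixed by all $\theta_x$ and so $\Theta$-even, $\langle G,w\rangle_T=\langle G\circ\Theta,w\circ\Theta\rangle_T=-\langle G,w\rangle_T=0$, i.e.\ $G\perp\ker S$. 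Collecting the three contributions, taking $\limsup_N$ (all bounds are uniform in $N$), then $\lim_t$ (which removes the boundary term), and dividing by the fixed $T^2$ yields the statement.

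I expect the genuine obstacle to be precisely this orthogonality $G\perp\ker S$. The momentum-flip noise dissipates only in the directions odd under some $\theta_x$; a component of the non-gradient remainder $G$ that were invariant under all flips would be controlled by neither the noise nor the (uncontrolled) Hamiltonian dynamics, and the long-time variance of $\int_0^tG(\mathcal X^s_\epsilon)\,\dd s$ would then grow like $t^2$ rather than $t$, so the estimate would diverge as $t\to\infty$. What saves the argument is that $G$ inherits the time-reversal parity of the heat current, which in turn requires the perturbative construction of Theorem~\ref{the: decomposition fo the current} to be carried out $\Theta$-equivariantly — automatic if the homological equations are solved with the $\Theta$-invariant Hamiltonian, and in any case enforceable by the symmetrization above. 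The remaining points (skew-adjointness of $L_H$, the additive-functional bound, and the decorrelation estimates for the Gibbs state) are routine.
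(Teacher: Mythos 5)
Your proof is correct and follows essentially the same route as the paper's: decompose the current via Theorem~\ref{the: decomposition fo the current}, rewrite $L_H U_a = \mathcal L U_a - \epsilon^{n+1}SU_a$, use Dynkin's formula for the gradient part, and control the $\int_0^t(G-SU)$ term by a Kipnis--Varadhan-type bound, with the crucial point being that the noise $S$ must actually dissipate the remainder. There are two cosmetic but genuine streamlinings in your write-up. First, where the paper explicitly solves the Poisson equation $SF_a = G_a$ (invoking Lemma 2 of \cite{huv}) and then applies the $H_{-1}$ bound of \cite{kip} to $S\mathcal K_N$, you instead observe that $-S$ has spectral gap $\ge 2$ on $(\ker S)^\perp$ and use the resolvent bound $\langle h,(\lambda-\mathcal S)^{-1}h\rangle_T$ at $\lambda=1/t$; since $\operatorname{Range}(S)=(\ker S)^\perp$ here, these two formulations are equivalent, but yours avoids any discussion of the locality or regularity of the solution $F_a$. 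Second, where the paper relies on part 3 of Proposition~\ref{pro: change of variable} (the $\omega\mapsto-\omega$ equivariance of the KAM-like change of variables) to know a priori that $G_a$ is $\Theta$-odd, you enforce this a posteriori by replacing $U_a\mapsto\tfrac12(U_a+U_a\circ\Theta)$ and $G_a\mapsto\tfrac12(G_a-G_a\circ\Theta)$, noting that $J_{a,a+1}=\omega_{a+1}\sin(q_a-q_{a+1})$ is $\Theta$-odd and $L_H$ anticommutes with $\Theta$, and that this symmetrization preserves locality, centering, and the variance bounds \eqref{bornes sur U et G}. This is a neat observation: it makes the argument independent of whether the perturbative construction was carried out $\Theta$-equivariantly. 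The remaining ingredients — skew-adjointness of $L_H$ on $\mathrm L^2(\langle\cdot\rangle_T)$, carré-du-champ for the momentum flips, and decorrelation for the Gibbs state giving the $O(N)$ variance estimates — are used in the same way in both proofs.
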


From Theorem \ref{the: decomposition fo the current}, we can also deduce a statement that mirrors the well-known Nekhorohsev theorem 
for systems  consisting of a finite number of degrees of freedom (see e.g.\  \cite{pos 1999}).
We recall that the latter states that, for all initial conditions, 
the action coordinates of the uncoupled system remain $\mathrm C \epsilon^b$-close to their original value for a time $\ed^{ \mathrm{C} (1/\epsilon)^{a}}$, 
for some $a,b>0$  and with $\epsilon$ the coupling strength. 
We can reproduce this statement for arbitrary polynomial times, rather than exponential ones, and for a set of configurations that has large probability with respect to the Gibbs state.
In \cite{Carati}, a similar result was obtained, limited however to much shorter time scales.
Let $I =\{a_1, a_1+1,\dots, a_2 \} \subset \Z_N$ be a discrete interval, and let $H_I=\sum_{x} H_x$. Then
\begin{Theorem}\label{the: nekoroshev}
Let the Hamiltonian be given by \eqref{Hamiltonian Rotors} or \eqref{Hamiltonian DNLS}. 
Let $T > 0$ be fixed.
For any $n \ge 1$,  there is  $\mathrm C_n <\infty$ such that, for sufficiently small $\epsilon>0$, and for any $I$ as above, 
\begin{equation}
\Big\langle \big( H_I(X^t_\epsilon)- H_I \big)^2 \Big\rangle_T \leq  \mathrm C_n \ep^{1/4}, \qquad \text{for any}\quad  0\leq t \leq \epsilon^{-n}.
\end{equation}
\end{Theorem}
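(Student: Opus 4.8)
The plan is to deduce this Nekhoroshev-type bound directly from Theorem~\ref{the: decomposition fo the current}, by writing the variation of the block energy $H_I$ along the flow as a time integral of the two boundary currents and exploiting the telescoping structure of the decomposition. Set $b = a_1 - 1$ and $c = a_2$, so that $H_I = \sum_{x>b} H_x - \sum_{x>c} H_x$; then by the definition~\eqref{current} of the current and the identity $(\dd/\dd s)\,f(X^s_\epsilon) = (L_H f)(X^s_\epsilon)$,
\begin{equation*}
H_I(X^t_\epsilon) - H_I \; = \; \int_0^t \big( \epsilon J_{b,b+1} - \epsilon J_{c,c+1} \big)(X^s_\epsilon)\,\dd s ,
\end{equation*}
where the term attached to $b$ (resp.\ $c$) is absent when $a_1$ (resp.\ $a_2$) sits at the boundary of $\Z_N$, since the corresponding partial sum is then the conserved total energy or is empty. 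Applying Theorem~\ref{the: decomposition fo the current} with the same $n$ to the bonds $(b,b+1)$ and $(c,c+1)$, the $L_H U$ parts integrate to boundary terms and one is left with
\begin{equation*}
H_I(X^t_\epsilon) - H_I \; = \; \big( U_b(X^t_\epsilon) - U_b \big) - \big( U_c(X^t_\epsilon) - U_c \big) + \epsilon^{n+1} \int_0^t \big( G_b - G_c \big)(X^s_\epsilon)\,\dd s .
\end{equation*}

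Next I would square and take the Gibbs average. The Hamiltonian flow is complete (energy conservation bounds $\omega$, resp.\ $\psi$) and preserves both Liouville measure and $H$, hence it leaves $\langle\argument\rangle_T$ invariant; consequently $\langle U_a(X^t_\epsilon)^2\rangle_T = \langle U_a^2\rangle_T \le \mathrm C_n \epsilon^{1/4}$ for $a \in \{b,c\}$ by the first bound in~\eqref{bornes sur U et G}. For the remainder, Cauchy--Schwarz in $s$ together with stationarity and the third bound in~\eqref{bornes sur U et G} yield
\begin{equation*}
\Big\langle \Big( \int_0^t \big( G_b - G_c \big)(X^s_\epsilon)\,\dd s \Big)^2 \Big\rangle_T \; \le \; t^2 \, \big\langle \big( G_b - G_c \big)^2 \big\rangle_T \; \le \; \mathrm C_n\, t^2 \; \le \; \mathrm C_n\, \epsilon^{-2n}
\end{equation*}
whenever $t \le \epsilon^{-n}$, so this term contributes at most $\mathrm C_n \epsilon^{2(n+1)} \epsilon^{-2n} = \mathrm C_n \epsilon^2$ to the variance. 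Bounding the square of the (at most five) summands above by five times the sum of their squares, one obtains $\langle ( H_I(X^t_\epsilon) - H_I )^2 \rangle_T \le \mathrm C_n ( \epsilon^{1/4} + \epsilon^2 ) \le \mathrm C_n \epsilon^{1/4}$ for $\epsilon$ small. The bound is uniform in $N$ and in the interval $I$: uniformity in $N$ is automatic since $U_a$ and $G_a$ depend only on the (at most $2\mathrm C_n+1$) sites within distance $\mathrm C_n$ of the bond.

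I do not expect a genuine obstacle here; all the real difficulty has been pushed into Theorem~\ref{the: decomposition fo the current}. The single point requiring a little care is the balance between the time horizon and the remainder: the gain $\epsilon^{n+1}$ carried by $G_a$ in Theorem~\ref{the: decomposition fo the current} has to beat the worst-case linear growth $t \le \epsilon^{-n}$ of the time integral, and it is exactly this matching (leaving a surplus factor $\epsilon$) that makes the non-telescoping part negligible at the scales considered. Beyond that, only the invariance and the $L^2$-integrability of the Gibbs state are used.
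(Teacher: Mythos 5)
Your proposal is correct and follows essentially the same route as the paper's proof: write $H_I$ as a difference of partial sums, express $L_H H_I$ as a difference of two boundary currents, integrate, apply the decomposition of Theorem~\ref{the: decomposition fo the current}, use invariance of the Gibbs state for the $U$-terms and Cauchy--Schwarz plus stationarity for the $\epsilon^{n+1}\!\int G$ term, and balance $\epsilon^{2(n+1)}$ against $t^2\le\epsilon^{-2n}$. The only (trivial) difference is that you are a bit more careful with the boundary index, writing the left current at the bond $(a_1-1,a_1)$ rather than $(a_1,a_1+1)$, which in fact corrects a harmless slip in the paper's own proof.
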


\subsection{Remarks}

\paragraph{Temperature dependence.}
The behavior of the thermal conductivity $\kappa (T,\epsilon)$ defined by \eqref{conductivity} 
as $\epsilon\rightarrow 0$ for fixed $T > 0$, 
is directly connected to its behavior as $T \rightarrow \infty$ for fixed $\epsilon > 0$.
Indeed, assuming that \eqref{conductivity} is well defined, we have as we will see that, for every $\sigma > 0$,
\begin{eqnarray}
\kappa (\epsilon, T) 
\; &=& \;
\frac{1}{\sigma} \, \kappa (\sigma^2 \epsilon, \sigma^2 T)
\qquad \text{for the rotor chain},
\label{scaling rotors}\\ 
\kappa (\epsilon,T)
\; &=& \;
\frac{1}{\sigma} \, \kappa (\sigma \epsilon, \sigma^2 T)
\qquad \; \,\text{for the non-linear Schr\"odinger chain.}
\label{scaling NLS}
\end{eqnarray}
We therefore also conjecture for the two chains that $\kappa (T,\epsilon \sim 1) = \mathcal O (1/T^m)$ for every $m\ge 1$ as $T\rightarrow \infty$.  As one can check from the calculations below, we obtain also scaling relations like \eqref{scaling rotors}, \eqref{scaling NLS} for the finite-time approximations to the conductivity $\kappa$ that figure in Theorem \ref{the: weak coupling conductivity}, so we could literally restate this result for the high-temperature regime.
An analogous scaling result was obtained in \cite{aok} for a different chain. 

Let us see how to obtain \eqref{scaling rotors}.
Let $\sigma > 0$.
Let us write $H_\epsilon$ instead of $H$ to explicitly keep track of the coupling strength. 
It is computed that, if $(q(t),\omega(t))_{t\ge 0}$ is a solution to Hamilton's equation \eqref{Hamilton equations} for the Hamiltonian $H_\epsilon$ given by \eqref{Hamiltonian Rotors},
then $(q'(t),\omega'(t))_{t\ge 0}$ given by
\begin{equation*}
q'(t) \; = \; q(\sigma t), 
\qquad 
\omega' (t) \; = \; \sigma \, \omega (\sigma t),
\end{equation*}
solves Hamilton's equations for the Hamiltonian $H_{\sigma^2 \epsilon}$.
It is then computed by means of \eqref{current} that $\epsilon J_{a,a+1} = \epsilon \omega_{a+1} \sin (q_a - q_{a+1})$, 
where $\epsilon J_{a,a+1}$ denotes the current through $(a,a+1)$ corresponding to the Hamiltonian $H_\epsilon$. 
Let us denote by $\sigma^2 \epsilon J'_{a,a+1}$ the current corresponding to $H_{\sigma^2 \epsilon}(q',\omega')$.
It holds that  
\begin{equation*}
\frac{\epsilon}{\sqrt t} \int_0^t \mathcal J (q(s),\omega (s)) \, \dd s
\; = \;
\frac{1}{\sigma^{5/2}} \, \cdot \,
\frac{\sigma^2\epsilon}{\sqrt{t'}} \int_0^{t'} \mathcal J' (q'(s'),\omega' (s')) \, \dd s'
\quad \text{with} \quad 
t' = t/\sigma.
\end{equation*}
In the Gibbs measure, the change of variables implies the change $T \mapsto \sigma^2 T$ for the temperature:
\begin{equation*}
\frac{\int u(q,\omega) \ed^{-H_\epsilon(q,\omega)/T} \, \dd q \dd \omega}{\int \ed^{-H_\epsilon(q,\omega)/T} \, \dd q \dd \omega}
\; = \; 
\frac{\int u(q',\omega') \ed^{-H_{\sigma^2\epsilon}(q',\omega')/\sigma^2T} \, \dd q' \dd \omega'}{\int \ed^{-H_{\sigma^2\epsilon}(q',\omega')/{\sigma^2T}} \, \dd q' \dd \omega'}. 
\end{equation*}
The scaling relation \eqref{scaling rotors} then follows from the definition \eqref{conductivity}. 
The scaling relation \eqref{scaling NLS} is obtained analogously: 
it is here observed that, if $(\psi(t))_{t\ge 0}$ is a solution to Hamilton's equations for the hamiltonian $H_\epsilon$ given by \eqref{Hamiltonian DNLS}, 
then $(\psi' (t) = \sqrt \sigma \psi (\sigma t))_{t\ge 0}$ solves Hamilton's equations for the Hamiltonian $H_{\sigma \epsilon}$.

\paragraph{Higher dimensions.}
We conjecture that our results extend to higher dimensional lattices. 
The arguments in Sections \ref{sec: approximate change of variables} and \ref{sec: resonant frequencies} would indeed carry over straightforwardly. 
The evolution of energy appears thus equally frozen for two or three-dimensional lattices as for a one-dimensional one.   
Unfortunately, the proof of Theorem \ref{the: decomposition fo the current} that appears in Section \ref{sec: proof of the principal theorem},  does not extend as such to higher dimensions. 
Although the problem seems to us purely technical and we find it very plausible that one can adapt it to higher dimensions, we have not pursued this here.

\paragraph{Other models.}
Our results depend mainly on three properties of the models:
First, the dynamics of isolated oscillators is one-dimensional, and thus integrable, so that the frequency of oscillation is a well defined concept. 
Second, the isolated oscillators are strongly anharmonic, implying that the frequencies depend on the energy in a non-trivial way. 
Third, the coupling is weak, so that perturbation theory applies. 
It is thus natural to ask whether, for example, our results would also hold for the Hamiltonian 
\begin{equation}\label{quartic chain}
H (q,p) \; = \; \sum_{x\in \Z_N} \Big( \frac{p_x^2}{2} + \frac{q_x^4}{4} + \frac{\epsilon}{2} (q_x - q_{x+1})^2 \Big),
\end{equation}
as it possesses the three listed properties, i.e.\ whether its conductivity is also non-perturbative as $\epsilon \rightarrow 0$.

It turns out that we actually exploit a specific characteristic of the chains that we look at: 
the perturbation only involves a finite number of combinations of the eigenfrequencies of neighboring oscillators, 
meaning technically that we may work with finite trigonometric polynomials (see Section \ref{sec: approximate change of variables}). 
This would not longer be true for the chain defined by \eqref{quartic chain}, 
for which trigonometric polynomials should be replaced by more generic analytic functions. 
While this extra difficulty can be overcome in usual KAM or Nekhoroshev theorems, 
part of our proof would likely break down (see Section \ref{sec: resonant frequencies}).
The generalisation of our theorems to the chain defined by \eqref{quartic chain}  appears thus to us as an open question.

\paragraph{How optimal are our bounds?}
It is numerically observed that the chains under study are normal conductors \cite{gia liv}\cite{iub}, 
so that we expect localization of energy to be at best asymptotic. 
Still, the time scales at which energy starts diffusing could be much larger than any inverse power in $\epsilon$.
At finite volume for example, Nekhoroshev estimates imply the absence of diffusion over exponentially long times. 
However, in \cite{bas}, the thermal conductivity of a classical non-linear disordered chain is studied, 
and it is argued that the localization is broken at a scale that is roughly  of the order of $\ed^{-c\ln^3 (1/\epsilon)}$. 
Since we expect the energy to travel more easily in the non-disordered chains thanks to the mobility of resonants spots, 
we conjecture that, here as well, localization does not persist on longer times than that.     
In other words, we do not think that one can obtain Nekhoroshev estimates in infinite volume for times as long as those in finite volume.

\section{Approximate change of variables}\label{sec: approximate change of variables}

We introduce an auxiliary Hamiltonian $\widetilde H = \widetilde{H}_{n_1}$, 
defined for an arbitrary $n_1\ge1$, and give the needed links between $\widetilde{H}$ and the original Hamiltonian $H$. 
We first introduce some definitions, then state the results, and finally prove them. 
The formulas introduced in the second part are probably best demystified by first reading the beginning of the proof. 
It is seen there that we define a KAM-like formal change of variable.
In contrast to the KAM-scheme however, our expansion is only perturbative, and does not involve any renormalization of the energy of individual atoms at each step.

\subsection{Preliminary definitions}
Throughout all this work, we will deal with functions $f$ in a subspace $\mathcal S (\Omega)$ of $\mathcal C^{\infty} (\Omega)$. 
A function $f$ belongs to $\mathcal S (\Omega)$ if the three following conditions are realized for some number
\begin{equation}\label{parameter r for functions in S}
r = r(f) > 0.
\end{equation}
\begin{enumerate}
\item The function $f$ is a sum of local terms:
\begin{equation}\label{f sum of local terms}
f \; = \; \sum_{x\in \Z_N} f_x
\qquad \text{with} \qquad
\frac{\partial f_x}{\partial q_y} \; = \;  \frac{\partial f_x}{\partial \omega_y} \; = \;  0
\qquad \text{if} \qquad 
|x-y| > r(f).
\end{equation}
This decomposition is not unique.
\item The function $f$ depends on the variable $q$ through a finite number of Fourier modes only:
\begin{equation}\label{f finite number of Fourier modes}
f(q,\omega) 
\; = \;
\sum_{\modek \in \Z^{N}} \widehat{f} (\modek,\omega) \ed^{i \modek \cdot q}
\quad \text{with} \quad
\widehat{f} (\modek,\omega) \; = \; 0 
\;\; \text{if} \;\; \max_x |\modek_x| \;\ge\; r(f).
\end{equation}
As a consequence of the spatial locality in $1.$, it also holds $\widehat{f} (\modek,\omega) = 0$ as soon as $\supp (\modek)$ cannot be included in a ball of radius $r$,  
where $\supp (\modek) = \{ x \in \Z_N : \modek_x \ne 0 \}$.
\item 
Given any $m\ge 1$, and given any differential operator $\mathrm D$, 
with either $\mathrm D = \mathrm{Id}$ or $\mathrm D = \partial_{\sharp_1} \dots \partial_{\sharp_m}$, 
where $\sharp_k$ stands for any of the variables, 
there is a polynomial $p_{\mathrm D}$ on $\R^{2r+1}$ so that, for every $x\in \Z_N$, and $(q,\omega) \in \Omega$,
\begin{equation} \label{polynomial bounds}
| \mathrm D f_x(q,\om) | \; \leq \;  | p_{\mathrm D}(\omega_{x-r}, \dots , \omega_{x+r}) |.
\end{equation}
\end{enumerate}
In the sequel, the dependence on the length $N$ will always be written explicitly;
the parameter $r$ in \eqref{parameter r for functions in S} will never depend on $N$.

Let $\rho \in \mathcal C^\infty ( \R , [0,1]) $ be a smooth cut-off function: 
$\rho (-x) = \rho (x)$ for every $x\in\R$,
$\rho (x) = 1$ for every $x\in [-1,1]$ and $\rho (x) = 0$ for every $x\notin [-2,2]$.
For any $a>0$, we define also $\rho_a$ by $\rho_a (x) = \rho (x/a)$.

Let $0 < \delta < 1$. 
In this section, we assume this number to be independent of $\epsilon$. 
We define an operator $\mathcal R$ on $\mathcal S (\Omega)$ that acts as
\begin{equation*}
(\mathcal R f ) (q,\omega) \; = \; \sum_{\modek \in \Z^{N}} \rho_\delta (\modek \cdot \omega) \widehat{f} (\modek,\omega) \ed^{i \modek \cdot q} .
\end{equation*}
Let $D$ be the function defined in \eqref{Hamiltonian Rotors}. 
Given $f\in \mathcal S (\Omega)$, the equation 
\begin{equation*}
L_D u \; = \; (\Id - \mathcal R)f, 
\end{equation*}
where $L_D = \{ D , \cdot \}$ is defined in \eqref{Poisson bracket for rotors},
can be solved in $\mathcal S (\Omega)$. 
A solution $u$ is given by
\begin{equation*}
u(q,\omega) \; = \; \sum_{\modek \in \Z^N} \frac{1- \rho_\delta (\modek\cdot \omega)}{i \, \modek\cdot \omega} \widehat{f} (\modek,\omega) \ed^{i \modek \cdot q} ,
\end{equation*}
where the sum only goes over terms for which $\modek\cdot \omega \ne 0$.
This is the only solution such that $\widehat{u} (0,\omega) = 0$ for all $\omega \in \R^{N}$ ; 
we will refer to it as the solution to the equation $L_D u  =  (\Id - \mathcal R)f$.

Finally, we will find it convenient to work with formal power series in $\epsilon$:
given a vector space $E$, these are expressions of the form $Y = \sum_{k\ge 0} \epsilon^k Y^{(k)}$, 
where $Y^{(k)} \in E$  for every $k \ge 0$.   
We naturally extend algebraic operations in $E$ to operations between formal series. 
Given $l\ge 0$ and given a formal series $Y$, we define the truncation 
\begin{equation}\label{troncature}
\mathcal T_l (Y) \; = \;  \sum_{k=0}^l \epsilon^k Y^{(k)} \; \in \; E. 
\end{equation}
If a formal power series $Y$ is such that $Y^{(k)} = 0$ for all $k> l$ for some $l \in \N$, 
we will allow ourselves to identify $Y$ with its truncation $\mathcal T_l (Y) \in E$.

\subsection{Statement of the results}\label{subsection: statement of result in perturbative part}
Given $k \ge 1$, let $\pi(k) \subset \mathbb{N}^k$ be the collection of $k$-tupels $\underline{j}=(j_l)_{l=1,\ldots,k}$ of nonnegative integers satisfying the constraint
$$
\sum_{l=1}^k   l j_l =k.
$$
In particular $0 \le j_l \le k$. 

For $k\ge 0$, we recursively define operators $Q^{(k)},R^{(k)}$ and $S^{(k)}$ on $\mathcal S (\Omega)$, as well as functions $U^{(k)}\in\mathcal S (\Omega)$. 
Here and below, let us adopt the convention $A^0= \mathrm{Id}$ for an operator $A$. 
We first set $Q^{(0)} = R^{(0)} = \Id$, $S^{(0)} = 0$ and $U^{(0)} = 0$.
Next, for $k\ge 1$, we define $U^{(k)}$ as the solution to the equation
\begin{equation}\label{definition U k}
L_D U^{(k)} \; = \; (\Id - \mathcal R) \big( S^{(k-1)} D + Q^{(k-1)} V \big).
\end{equation}
and then set
\begin{align}
Q^{(k)} \; &= \; 
\sum_{\underj \in \pi(k)}
\frac{1}{j_1! \ldots j_k!} 
L_{U^{(k)}}^{j_{k}}  \dots L_{U^{(1)}}^{j_{1}}, 
\label{definition Q k}\\
R^{(k)} \; &= \; 
\sum_{\underline{j} \in \pi(k)}
\frac{ (-1)^{j_1 + \dots + j_k}}{j_1! \ldots j_k!} 
L_{U^{(1)}}^{j_{1}} \dots L_{U^{(k)}}^{j_{k}},  
\label{definition R k}\\
S^{(k)} \; &= \; 
\sum_{\substack{\underline{j} \in \pi(k+1):\\ j_{k+1}=0}}
\frac{1}{j_1! \ldots j_k!} 
L_{U^{(k)}}^{j_{k}}  \dots L_{U^{(1)}}^{j_{1}} .
\label{definition S k}
\end{align}

For $n_1\ge 1$, we define 
\begin{equation}\label{definition of H tilde}
\widetilde{H} 
\; = \; 
\widetilde{H}_{n_1} 
\; = \; 
D + \sum_{k=1}^{n_1} \epsilon^k \mathcal R (S^{(k-1)} D + Q^{(k-1)}V).
\end{equation}
The following Proposition is shown in Subsection \ref{sub: proofs change of variable} below.
\begin{Proposition}\label{pro: change of variable}
Let us consider the formal series $R = \sum_{k\ge 0} \epsilon^k R^{(k)}$ of operators on $\mathcal S (\Omega)$. 
\begin{enumerate}
\item
$H \; = \; \mathcal T_{n_1} \big( R \widetilde{H}_{n_1} \big)$. 
\item
For every $f = \sum_{k=0}^{n_1} \epsilon^k f^{(k)}$, it holds that 
\begin{equation*}
L_H \big( \mathcal T_{n_1} (R f) \big)  
\: = \; 
\mathcal T_{n_1} \big( R L_{\widetilde{H}_{n_1}} f \big) 
\; + \; 
\epsilon^{n_1+1} L_V\sum_{k=0}^{n_1}
R^{(n_1-k)} f^{(k)}.
\end{equation*}
\item
The function $\widetilde{H}$ is symmetric under the exchange $\omega \mapsto - \omega$.
Moreover, for any $k \ge 0$, the operator $R^{(k)}$ maps symmetric functions with respect to this operation, to symmetric functions.
\end{enumerate}
\end{Proposition}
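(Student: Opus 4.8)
The plan is to recognize the recursive definitions \eqref{definition U k}--\eqref{definition S k} as the bookkeeping of the Lie-series expansion of the time-one flow generated by the formal series $U = \sum_{k\ge 1}\epsilon^k U^{(k)}$, truncated at order $n_1$. Concretely, I would introduce the formal exponential $\exp(\epsilon L_U)$ and its inverse $\exp(-\epsilon L_U)$ acting on $\mathcal S(\Omega)$, expand each in powers of $\epsilon$ using the multinomial structure, and check that the coefficient of $\epsilon^k$ in $\exp(L_U) = \sum_m \frac{1}{m!}L_U^m$ — after substituting $L_U = \sum_{l\ge1}\epsilon^l L_{U^{(l)}}$ and collecting terms — is exactly $Q^{(k)}$ as written in \eqref{definition Q k}, the sum over $\pi(k)$ with the weights $1/(j_1!\cdots j_k!)$ accounting for the number of orderings. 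Similarly $R^{(k)}$ is the $\epsilon^k$-coefficient of $\exp(-\epsilon L_U)$ (whence the sign $(-1)^{j_1+\cdots+j_k}$ and the reversed ordering of the factors, which is harmless since powers of a fixed $L_{U^{(l)}}$ commute with themselves but the overall product must match the standard expansion of the exponential of a series), and $S^{(k)}$ is the $\epsilon^k$-coefficient of the operator $\sum_{m\ge 0}\frac{1}{(m+1)!}L_U^m$ — i.e.\ the ``divided'' exponential $(\exp(\epsilon L_U) - \Id)/(\epsilon L_U)$ — which is precisely the object appearing when one differentiates a conjugation-by-flow and which explains the constraint $j_{k+1}=0$ in \eqref{definition S k}.

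Granting this identification, item 1 follows from the standard Lie-transform identity: if $\widetilde H$ is obtained from $H$ by the (formal) time-one flow of $L_U$, then $H = \exp(\epsilon L_U)^{-1}$-image... more precisely one verifies the algebraic identity $R\,\widetilde H = H$ at the level of formal series, truncates, and uses that all the relevant $U^{(k)}$, $Q^{(k)}$, $R^{(k)}$, $S^{(k)}$ vanish for the purposes of $\mathcal T_{n_1}$ beyond order $n_1$. The key computation is to check that $\sum_{k=1}^{n_1}\epsilon^k\mathcal R(S^{(k-1)}D + Q^{(k-1)}V)$ together with $D$ really is $\mathcal T_{n_1}(R^{-1}H)$, which amounts to: apply $\exp(\epsilon L_U)$ to $D + \epsilon V$, isolate the ``non-resonant'' part (killed by $\Id-\mathcal R$) which is absorbed into the definition of $U^{(k)}$ via the cohomological equation \eqref{definition U k}, and observe that what survives at each order is the $\mathcal R$-part, $\mathcal R(S^{(k-1)}D + Q^{(k-1)}V)$. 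This is the classical telescoping: the defining equation $L_D U^{(k)} = (\Id-\mathcal R)(S^{(k-1)}D + Q^{(k-1)}V)$ is exactly what is needed for the order-$k$ term of $\exp(\epsilon L_U)(D+\epsilon V)$ to reduce to its $\mathcal R$-part.

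For item 2, I would use that $L_H = L_{R\widetilde H}$ and the fact that a Lie transform intertwines Poisson brackets: $\exp(\epsilon L_U)\{f,g\} = \{\exp(\epsilon L_U)f, \exp(\epsilon L_U)g\}$, equivalently $L_{\exp(\epsilon L_U)g} = \exp(\epsilon L_U)\,L_g\,\exp(-\epsilon L_U)$ as operators. Writing $R = \exp(\epsilon L_U)$ formally, this gives $L_H (Rf) = R L_{\widetilde H} f$ as an identity of formal series; the point of item 2 is to control the error made when everything is truncated at order $n_1$, and the stated remainder $\epsilon^{n_1+1}L_V\sum_{k=0}^{n_1}R^{(n_1-k)}f^{(k)}$ is what is left over when one commutes $\mathcal T_{n_1}$ past the bracket — the only uncancelled contribution at order $n_1+1$ comes from the $\epsilon V$ piece of $H$ paired with the top of $Rf$, since the $D$-part is handled exactly by the cohomological equations. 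I would make this precise by writing $H = D + \epsilon V$, expanding $L_H(\mathcal T_{n_1}(Rf))$, using $L_D(\mathcal T_{n_1}(Rf)) = \mathcal T_{n_1}(R L_{\widetilde H} f) - \epsilon L_V(\cdots) + (\text{terms of order} > n_1)$ from item 1 applied carefully, and tracking that the surviving order-$(n_1+1)$ term is exactly $\epsilon^{n_1+1}L_V\sum_k R^{(n_1-k)}f^{(k)}$.

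Item 3 is a parity bookkeeping argument. Since $D = \tfrac12\sum\omega_x^2$ is even in $\omega$ and $V = V(q)$ is independent of $\omega$ (hence trivially even), and since for rotors $L_D = \sum_x \omega_x\partial_{q_x}$ is odd in $\omega$ while $L_g$ for $g$ even (resp.\ odd) in $\omega$ changes parity accordingly, one checks by induction on $k$ that: $U^{(k)}$ is odd in $\omega$ (it solves $L_D U^{(k)} = (\Id-\mathcal R)(\text{even})$, and $L_D^{-1}$ maps even to odd, using that $\mathcal R$ and the multiplier $1/(i\modek\cdot\omega)$ preserve the relevant parity structure), whence each $L_{U^{(k)}}$ is parity-reversing composed with... — more simply, each $L_{U^{(k)}}$ preserves the space of $\omega$-even functions because an odd generator acting via the Poisson bracket... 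I would verify this cleanly: $\{U^{(k)}, \cdot\}$ with $U^{(k)}$ odd in $\omega$ maps even functions to even functions for the rotor bracket (the factor $\nabla_\omega\cdot\nabla_q - \nabla_q\cdot\nabla_\omega$ balances parities). Then $Q^{(k)}$, $R^{(k)}$, $S^{(k)}$, being polynomials in the $L_{U^{(l)}}$, all preserve $\omega$-parity; since $D$ is even, every term $\mathcal R(S^{(k-1)}D + Q^{(k-1)}V)$ is even, and so $\widetilde H$ is even in $\omega$; and $R^{(k)}$ preserves evenness as claimed. The same argument covers the DNLS chain with the appropriate involution.

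The main obstacle I anticipate is item 2: getting the truncation error exactly right, i.e.\ proving that no other order-$(n_1+1)$ terms survive and that the combinatorial coefficients in the leftover term assemble precisely into $\sum_{k=0}^{n_1}R^{(n_1-k)}f^{(k)}$. This requires careful manipulation of the nested multinomial sums over the sets $\pi(k)$ and a clean statement of how $\mathcal T_{n_1}$ fails to be a homomorphism — essentially a formal-series Taylor-remainder computation that must be organized so the bookkeeping does not explode. The Lie-transform identities themselves (items 1 and 3) are classical and should go through routinely once the exponential-of-a-series reinterpretation of $Q,R,S$ is established.
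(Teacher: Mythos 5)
The central misstep is your identification of $Q$ and $R$ with $\exp(\pm L_U)$ for the single formal series $U = \sum_{k\ge 1}\epsilon^k U^{(k)}$. The combinatorial formulas \eqref{definition Q k}--\eqref{definition S k} do not arise from the exponential of a sum; they arise from a \emph{product} of individual exponentials,
\begin{equation*}
Q \;=\; \cdots\, \ed^{\epsilon^2 L_{U^{(2)}}}\, \ed^{\epsilon L_{U^{(1)}}},
\qquad
R \;=\; \ed^{-\epsilon L_{U^{(1)}}}\, \ed^{-\epsilon^2 L_{U^{(2)}}}\, \cdots .
\end{equation*}
Both agree with $\exp(L_U)$ at orders $0,1,2$, but they diverge at order $3$: expanding $\exp(L_U)$ and collecting $\epsilon^3$ gives the symmetrized mixed term $\tfrac{1}{2}\bigl(L_{U^{(1)}}L_{U^{(2)}}+L_{U^{(2)}}L_{U^{(1)}}\bigr)$, whereas \eqref{definition Q k} produces the single ordered term $L_{U^{(2)}}L_{U^{(1)}}$. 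These differ by $\tfrac12[L_{U^{(1)}},L_{U^{(2)}}]$, which is generically nonzero. The weights $1/(j_1!\cdots j_k!)$ in \eqref{definition Q k} do not ``account for the number of orderings''; they are the factorials of each block, attached to the rigid ordering $L_{U^{(k)}}^{j_k}\cdots L_{U^{(1)}}^{j_1}$ that a right-to-left product of exponentials enforces. Similarly the reversed ordering in $R^{(k)}$ is not harmless commutation — it is the hallmark that $R$ is the reverse-ordered product of inverse factors. Your identification of $S^{(k)}$ with the divided exponential $\sum_m \frac{1}{(m+1)!}L_U^m$ is also off: in the paper $S^{(k)}$ is simply $Q^{(k+1)}$ with the pure $L_{U^{(k+1)}}$ term removed (the constraint $j_{k+1}=0$), and it is this, not a divided exponential, that makes $Q^{(k)} = S^{(k-1)} + L_{U^{(k)}}$ true.

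This misidentification is load-bearing. The telescoping in item 1 hinges exactly on $Q^{(k)} = S^{(k-1)} + L_{U^{(k)}}$, which is obtained by splitting the sum over $\underj\in\pi(k)$ according to $j_k\in\{0,1\}$ in the product expansion; the symmetrized $\exp(L_U)$-expansion does not satisfy this identity, so the substitution $L_{U^{(k)}}D = -L_D U^{(k)}$ followed by the cohomological equation \eqref{definition U k} would not close. Once you replace $\exp(\pm L_U)$ by the product-of-exponentials, the rest of your outline does coincide with the paper: item 1 via $QH=D+\sum_k\epsilon^k(Q^{(k)}D+Q^{(k-1)}V)$ and the telescoping; item 2 via the canonical-transformation identity $L_H R = R L_{QH}$ (your intertwining), together with $\mathcal T_{n_1}(R L_{\widetilde H_{n_1}}f)=\mathcal T_{n_1}(R L_{QH}f)$, after which the remainder is a one-line truncation computation using $L_H=L_D+\epsilon L_V$ rather than the multinomial Taylor-remainder bookkeeping you anticipated as the hard point; item 3 by essentially the same parity induction as in the paper.
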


The function $\widetilde{H}$ and the formal operator $R$ have several characteristic that are good to remember. 
\begin{enumerate}
\item
Both $\widetilde H$ and $R$ are expressed as power series in $\epsilon$, 
as is seen from \eqref{definition of H tilde} and from the definition of $R$ given in Proposition \ref{pro: change of variable}.
We introduce also the notation 
\begin{equation*}
\widetilde{H} = \sum_{k=0}^{n_1} \epsilon^k \widetilde{H}^{(k)}
\quad \text{with} \quad 
\widetilde{H}^{(0)} = D \;\;\text{and}\;\; \widetilde{H}^{(k)} = \mathcal R (S^{(k-1)} D + Q^{(k-1)}V) \;\,\text{for} \;\, k\ge 1.
\end{equation*}

\item
For each $k \ge 0$, $\widetilde{H}^{(k)}$ is a function belonging to $\mathcal S(\Omega)$, and $R^{(k)}$ an operator on $\mathcal S(\Omega)$.
Let $f=\sum_{x\in\Z_N} f_x \in \mathcal S(\Omega)$ be given.
The functions $\widetilde H^{(k)}$ and $R^{(k)} f$ can be decomposed as a sum of local terms, with for example, for $k\ge 1$, 
\begin{equation*}
\widetilde{H}^{(k)}_x \; = \; \mathcal R (S^{(k-1)} D_x + Q^{(k-1)}V_x)
\qquad \text{and} \qquad 
(R^{(k)}f)_x \; = \; R^{(k)} f_x.
\end{equation*}
Moreover, we will show in Subsection \ref{sub: proofs change of variable} below, that there exists an integer $r_k$ such that 
\begin{equation}\label{dependence on r in change of variables}
r \big(\widetilde{H}^{(k)} \big) \; \le \; r_k
\qquad \text{and} \qquad
r \big( R^{(k)} f \big) \; \le \; r_k + r(f),
\end{equation}
where $r$ is the parameter introduced in \eqref{parameter r for functions in S}.

\item
The function $\widetilde{H}^{(k)}$ and the operator $R^{(k)}$ depend on $\delta$ if $k\ge 1$. 
Let thus $k\ge 1$.
In what follows, we will use the symbol $b$ to denote smooth, bounded functions on $\R^N \times (0,1)$ with bounded derivatives of all order, and 
the symbol $f$ for functions in $\mathcal S (\Omega)$. 
We will show the two following assertions in Subsection \ref{sub: proofs change of variable} below. 
First, there is an integer $m_k$ such that, given $x\in \Z_N$,  $\widetilde H_x^{(k)}$ can be expressed as a sum of the type
\begin{equation}\label{H depends on delta}
\widetilde{H}_x^{(k)} (q,\omega ;\delta)
\; = \;
\delta^{-2 (k-1)} 
\sum_{j=1}^{m_k} b_{j,x} (\omega/\delta, \delta) \, f_{j,x} (q,\omega)
\end{equation}
such that the functions $b_{j,x}$ and $f_{j,x}$ depend on the same variables and the same Fourier modes as $\widetilde{H}_x^{(k)}$ and the bounds on them can be chosen uniformly in $x$.
Second, consider a function $g(\,\cdot \, ; \delta) \in \mathcal S (\Omega)$ such that $g_x(q, \omega ; \delta) = b_x(\omega/\delta,\delta) f_x (q, \omega)$.
Then, there is an integer $m_{k,g}$ such that $(R^{(k)} g)_x$ can be expressed as a sum of the type
\begin{equation}\label{R depends on delta}
(R^{(k)} g)_x (q,\omega ; \delta)
\; = \; 
\delta^{-2k}\sum_{j=1}^{m_{k,g}} b'_{j,x} (\omega/\delta,\delta) f'_{j,x} (q, \omega)
\end{equation}
and such that the functions $b'_{j,x}$ and $f'_{j,x}$ depend on the same variables and the same Fourier modes as $(R^{(k)} g)_x$.
\end{enumerate}

\subsection{Proof of Proposition \ref{pro: change of variable} and relations 
(\ref{dependence on r in change of variables}-\ref{R depends on delta})}\label{sub: proofs change of variable}

\begin{proof}[Proof of Proposition \ref{pro: change of variable}]
Given a function $U\in \mathcal S(\Omega)$, a formal change of variable, seen as an operator on $\mathcal S (\Omega)$, is defined through
\begin{equation*}
\ed^{\epsilon L_U} \; = \; \sum_{k\ge 0} \frac{\epsilon^k}{k!} L_U^k.
\end{equation*}
Given now a sequence $(U^{(k)})_{k\ge 1}\subset \mathcal S(\Omega)$, that we later will identify with the sequence defined by \eqref{definition U k}, 
we construct the formal change of variable
\begin{align*}
Q
\; &= \;
\dots \ed^{\epsilon^n L_{U^{(n)}}} \dots  \ed^{\epsilon^2 L_{U^{(2)}}} \ed^{\epsilon L_{U^{(1)}}} 
\; = \; 
\sum_{j_1 \ge 0, \dots , j_n\ge 0 , \dots } \frac{\epsilon^{j_1 + \dots + nj_n + \dots }}{j_1 ! \dots j_n! \dots } \big(  \dots L_{U^{(n)}}^{j_n}\dots L_{U^{(1)}}^{j_1} \big) \\
\; &= \; 
\Id + \sum_{k\ge 1} \epsilon^k \sum_{\underj \in \pi(k)} \frac{1}{j_1! \dots j_k!} L_{U^{(k)}}^{j_k } \dots L_{U^{(1)}}^{j_1}
\; = \; 
\sum_{k\ge 0} \epsilon^k Q^{(k)}. 
\end{align*}
The second line was obtained using the definition of $\underj \in \pi(k)$ introduced at the beginning of Subsection \ref{subsection: statement of result in perturbative part}.
The formal inverse of $Q$ is given by 
\begin{align*}
R
\; &= \;
\ed^{-\epsilon L_{U^{(1)}}} \ed^{-\epsilon^2 L_{U^{(2)}}} \dots \ed^{-\epsilon^n L_{U^{(n)}}} \dots \\
\; &= \; 
\Id + \sum_{k\ge 1} \epsilon^k \sum_{\underj \in \pi(k)} \frac{(-1)^{j_1 + \dots + j_n}}{j_1! \dots j_k!}
 L_{U^{(1)}}^{j_1} \dots L_{U^{(k)}}^{j_k} 
\; = \; 
\sum_{k\ge 0} \epsilon^k R^{(k)}. 
\end{align*}

Let us show the first part of Proposition \ref{pro: change of variable}.
The operators $Q$ and $R$ are formal inverse of each others, so that,   
for every $f\in \mathcal S (\Omega)$ such that $\mathcal T_{n_1} f = f$ (see the remark after \eqref{troncature}), it holds that
\begin{equation*}
f \; = \; \mathcal T_{n_1} \big( R \mathcal T_{n_1} (Qf)\big),
\end{equation*}
as can be checked by a direct computations with formal series. 
We will thus be done if we show that 
\begin{equation}\label{Hn tilde = Tn (QH)}
\widetilde{H}_{n_1} \; = \; \mathcal T_{n_1} (QH)
\end{equation}
We compute
\begin{equation*}
Q H 
\; = \;
\sum_{k\ge 0} \epsilon^k Q^{(k)} (D + \epsilon V)
\; = \; 
D + \sum_{k\ge 1} \epsilon^k (Q^{(k)} D + Q^{(k-1)}V). 
\end{equation*}
It holds that 
\begin{equation*}
Q^{(k)} \; = \; S^{(k-1)} + L_{U^{(k)}} \quad \text{for} \quad k\ge 1. 
\end{equation*}
Since $L_{U^{(k)}} D = - L_D U^{(k)}$ for every $k \ge 1$, 
and taking now $U^{(k)}$ as defined by \eqref{definition U k},
we obtain
\begin{equation*}
QH \; = \; D + \sum_{k\ge 1} \epsilon^k (S^{(k-1)} D + Q^{(k-1)}V - L_D U^{(k)})
\; = \; D  + \sum_{k\ge 1} \epsilon^k \mathcal R (S^{(k-1)} D + Q^{(k-1)}V).
\end{equation*}
From this, we derive \eqref{Hn tilde = Tn (QH)}.

Let us then show the second part of Proposition \ref{pro: change of variable}.
The operators $Q$ and $R$ are formal canonical transformations, inverse of each other.
Therefore
\begin{equation}\label{2e brol}
L_H R \; = \; R L_{QH} ,
\end{equation}
as a direct, but lengthy, computation with formal series can confirm. 
Let us next take $f$ such that $f = \mathcal T_{n_1} (f)$. 
By \eqref{Hn tilde = Tn (QH)}, we find that 
\begin{equation*}
\mathcal T_{n_1} \big( R L_{\widetilde{H}_{n_1}} f \big)
\; = \; 
\mathcal T_{n_1} \big( R L_{\mathcal T_{n_1} QH} f \big)
\; = \; 
\mathcal T_{n_1} \big( R  L_{ QH} f \big),
\end{equation*} 
since higher order terms do not contributre thanks to the overall truncation $\mathcal T_{n_1}$. 
Therefore, by \eqref{2e brol},
\begin{align*}
L_H \big( \mathcal T_{n_1} (R f) \big)  
-
\mathcal T_{n_1} \big( R L_{\widetilde{H}_{n_1}} f \big) 
\; &= \; 
L_H \big( \mathcal T_{n_1} (R f) \big)  
-
\mathcal T_{n_1} \big( R  L_{ QH} f \big)\\
\; &= \; 
L_H \big( \mathcal T_{n_1} (R f) \big)   - \mathcal T_{n_1} \big( L_H R f \big).
\end{align*}
Since $L_H = L_D + \epsilon L_V$, it is finally computed that
\begin{equation*}
L_H \big( \mathcal T_{n_1} (R f) \big)   - \mathcal T_{n_1} \big( L_H R f \big)
\; = \; 
\epsilon^{n_1+1} L_V
\sum_{k=0}^{n_1}
R^{{n_1}-k} f^k.
\end{equation*}

Let us finally establish the last part of Proposition \ref{pro: change of variable}.
A function will be said symmetric or antisymmetric if it is symmetric or antisymmetric with respect to the operation $\omega \mapsto - \omega$. 
We observe that, if a function $U$ is symmetric, then $L_U$ exchanges symmetric and antisymmetric functions, 
while $L_U$ preserves the symmetry if $U$ is antisymmetric. 
The action of $\mathcal R$ also preserves the symmetry. 
We deduce that the operation $L_D^{-1}(\Id - \mathcal R)$ exchange symmetric and antisymmetric functions, since $D$ is symmetric. 
It is then recursively established from (\ref{definition U k}-\ref{definition S k}) that the functions $U_k$ are antisymmetric for $k\ge 0$, 
while the operators $Q_k$, $R_k$ and $S_k$ preserve the symmetry for $k\ge 0$. 
Since $D$ and $V$ are symmetric, we conclude from \eqref{definition of H tilde} that $\widetilde H$ is symmetric.
\end{proof}

\begin{proof}[Proof of (\ref{dependence on r in change of variables}-\ref{R depends on delta})]
Let us first establish \eqref{dependence on r in change of variables}.
Given two functions $f,g\in \mathcal S(\Omega)$, 
the function $L_g f$ is decomposed as a sum of local terms $(L_g f)_x$, that we have chosen to be given by $(L_g f)_x = L_g f_x$. 
A direct computations shows that $r(L_g f) \le 2 r(g) + r (f)$. 
Since $r(L_D^{-1}(\Id - \mathcal R)f) = r(f)$, we readily deduce \eqref{dependence on r in change of variables} from (\ref{definition U k}-\ref{definition S k}).

Let us next show \eqref{H depends on delta} and \eqref{R depends on delta}.
Since we are only interested in tracking the dependence on $\delta$, we may simplify notations as much as possible in the following way. 
We use the symbols $b$ and $f$ with the same meaning as in the paragraph where \eqref{H depends on delta} and \eqref{R depends on delta} are stated.
Let $n\ge 0$.
First, if $g\in \mathcal S(\Omega)$, we just write $g \sim \delta^{-n}$ to express that $g$ is of the following form: 
$g=\sum_x g_x$  as in \eqref{f sum of local terms} and $g_x$ take the form $g_x (q,\omega;\delta) = \delta^{-n}\sum_j b_{j,x} (\omega/\delta,\delta)f_{j,x}(\omega,q)$ 
with all bounds on $b_{j,x}, f_{j,x}$ uniform in $x$.
Next, if $A$ is an operator on $\mathcal S(\Omega)$, we just write $A \sim \delta^{-n}$
to express that, for any $h\in \mathcal S(\Omega)$  such that $h \sim \delta^{-m}$,  we have $A h \sim \delta^{-n-m}$. 

We now observe that, if $g\sim \delta^{-n}$ and $h \sim \delta^{-m}$, then $L_g h \sim \delta^{-(n+m+1)}$, and that
if $u$ solves the equation $L_D u =(\Id - \mathcal R)g$ and if $g\sim \delta^{-n}$, then $u \sim \delta^{-(n+1)}$.
It is then established recursively that, for $k\ge 1$, we have
\begin{equation}\label{to be shown for the delta dependence}
Q^{(k-1)} \;  \sim \; \delta^{-2(k-1)}, 
\;\;
R^{(k-1)} \; \sim \; \delta^{-2(k-1)}, 
\;\;
S^{(k-1)}D \; \sim \; \delta^{-2(k-1)}, 
\;\; 
U^{(k)} \; \sim \; \delta^{-(2k-1)}, 
\end{equation}
from which \eqref{H depends on delta} and \eqref{R depends on delta} are readily derived. 
By the definitions, the relations \eqref{to be shown for the delta dependence} hold for $k = 1$. 
Let us see that the claim for $1, \dots, k\ge 1$ implies the claim for $k+1$. 

Let us start with $Q^{(k)}$. 
For $\underj \in \pi(k)$, since $j_1 + 2j_2 + \dots + kj_k = k$, we get from the definition \eqref{definition Q k} that
\begin{equation*}
Q^{(k)} 
\; \sim \;
(\delta^{-(2k-1) -1})^{j_k } \dots (\delta^{-(2-1)-1})^{j_1} \; = \; \delta^{-2k}.
\end{equation*}
The case of $R^{(k)}$ is handled in the same way.  
Let us then treat $S^{(k)}D$. 
We decompose $S^{(k)}D = \sum_{\underj} S^{(k)}_{\underj} D$ according to the definition \eqref{definition S k},
we pick one of the sequences $\underj$, and we let $l\ge 1$ be the smallest integer such that $j_l \ge 1$.
Thanks to \eqref{definition U k} and to our inductive hypothesis, and because of the constraint $ j_1+ 2 j_2 +  \dots + k j_k = k+1$ on $\underj \in \pi(k+1)$,  
we get, for some constant $\mathrm C(\underj)$,
\begin{align*}
S^{(k)}_{\underj} D 
\; &= \; 
\mathrm C(\underj)\,
L_{U^{(k)}}^{j_k} \dots  L_{U^{(l)}}^{j_l- 1} \big( L_{U^{(l)}} D  \big)\\
\; &= \; 
- \mathrm C (\underj)\,
 L_{U^{(k)}}^{j_k} \dots  L_{U^{(l)}}^{j_l - 1} \big( (\mathrm{Id} - \mathcal R) (S^{(l-1)}) D + Q^{(l-1)} V \big) \\
\; &\sim \; 
\delta^{-2 \{ l(j_l - 1) + \dots + k j_k  \}} \delta^{-2(l-1)}
\; = \; 
\delta^{-2(k+1) + 2l - 2(l-1)}
\; = \; 
\delta^{-2k}. 
\end{align*}
So we conclude that $S^{(k)} D \sim \delta^{-2k}$.
The statement for $U^{(k+1)}$ is finally derived using \eqref{definition U k}. 
\end{proof}

\section{Resonant frequencies}\label{sec: resonant frequencies}

Given a point $x\in \Z_N$, we construct a subset $\mathsf R(x)$ of the frequencies $\omega$, 
seen as a subset of the full phase space $\Omega$ that does not depend on the positions $q$, with the two following characteristics.  
First, if a state does not belong to this set, then the energy current for the Hamiltonian $\widetilde H$ vanishes through the bonds near $x$.
Second, it is approximately invariant under the dynamics generated by $\widetilde{H}$, 
meaning that in a small time interval, only the  frequencies in a subset $\mathsf S(x)$, of small probability with respect to the Gibbs measure, can leave or enter the set $\mathsf R (x)$. 

In our opinion, the ideas of this Section are best understood visually. 
We hope that figure \ref{figure: facebook group} will help in that respect (see below for the definition of the set $\mathsf B(\modek_1,\modek_2)$).
We let 
\begin{equation}\label{definition du parametre r}
r \; = \; r(n_1) \; = \; \max_{1 \le k \le n_1} r_k, 
\end{equation}
where the numbers $r_k$ are defined in \eqref{dependence on r in change of variables}. 
We let $\delta > 0$ be as in Section \ref{sec: approximate change of variables}.

\subsection{Preliminary definitions}

We recall that, given $\modek\in \Z^{N}$, we denote by $\supp (\modek) \subset \Z^{N}$ the set of points $x$ such that $\modek_x \ne 0$.
We define the set $K_r \subset \Z^{N}$ of vectors $\modek = (\modek_x)_{x\in\Z_N}$ such that 
$\max_{x\in\Z_N} |\modek_x| \le r$ and $\supp (\modek) \subset \mathrm B (r)$ for some ball $\mathrm B(r)$ of radius $r$.  
We write $\str \modek \str^2_2=\sum_x\str \modek_x \str^2$.
One easily checks that for any $\modek \in K_r$ and $r>1$, we have $\str \modek \str_2\leq r^2$ and this will be used without further comment.

Given $x\in\Z^d$, we say that a subset $\{ \modek_1, \dots , \modek_p \} \subset K_r$ is a cluster around $x$ if 
\begin{enumerate}
\item 
the vectors $\modek_1, \dots , \modek_p$ are linearly independent,
\item 
if $p\ge 2$, for all $1 \le i \ne j \le p$, there exist $1 \le i_1, \dots , i_m \le p$ 
such that $i_1=i$, $i_m=j$ and $\supp (\modek_{i_s}) \cap \supp (\modek_{i_{s+1}}) \ne \varnothing$ for all $1 \le s \le m-1$,


\item 
$\supp (\modek_j)\subset \mathrm B(x,4r)$ for some $1 \le j \le p$.  
\end{enumerate}

Finally, given $\modek\in K_r$, we define
\begin{equation*}
\pi (\modek) \; = \; \{ \omega \in \R^{N} : \modek\cdot \omega = 0 \}.
\end{equation*}
Given a subspace $E \subset \R^{N}$, and given $\omega \in \R^{N}$, we denote by $P(\omega, E)$ the orthogonal projection of $\omega$ on the subspace $E$.

\subsection{Approximately invariant sets of resonant frequencies}

Let $L > 0$, 
let $n_2\ge 1$, 
and let $x\in \Z_N$. 
Let us define two subsets of $\R^{N}$: 
a set $\mathsf R_{\delta,n_2}(x) \subset\R^{N}$ of resonant frequencies, 
and a small set $\mathsf S_{\delta,n_2}(x) \subset\R^{N}$ of ``multi-resonant" frequencies.

To define $\mathsf R_{\delta,n_2}(x)$, let us first define the sets $\mathsf B_\delta (\modek_1, \dots , \modek_p) \subset \R^{N}$, 
where $\{ \modek_1, \dots , \modek_p \}$ is a cluster around $x$.
We say that $\omega \in \mathsf B_\delta (\modek_1, \dots , \modek_p)$ if
\begin{equation}\label{first condition B set}
\big| \omega - P \big( \omega , \pi (\modek_1) \cap \dots \cap \pi (\modek_p) \big) \big|_2
\; \le \; L^p \delta
\end{equation}
and if, for every linearly independent $\modek'_1, \dots , \modek'_{p'} \in K_r \cap \spanv \{ \modek_1, \dots ,\modek_p\}$, 
\begin{equation*}
\big| P \big( \omega , \pi (\modek_1') \cap \dots \cap \pi (\modek'_{p'}) \big) - P \big( \omega , \pi (\modek_1) \cap \dots \cap \pi (\modek_p) \big) \big|_2
\; \le \;
\big( L^p - L^{p'} \big) \delta .
\end{equation*}
We next define $\mathsf R_{\delta,n_2}(x)$ as the union of all the sets $\mathsf B_\delta (\modek_1, \dots , \modek_p) \subset \R^{N}$ with $p \le n_2$.

We then define $\mathsf S_{\delta,n_2} (x)$ as the set of points $\omega\in \R^{N}$ for which there exists a cluster $\{ \modek_1 , \dots , \modek_{n_2} \}$ around $x$, 
such that $|\modek_j \cdot \omega| \le L^{n_2+1} \delta$ for every $1 \le j \le n_2$.

We finally define a smooth indicator function of the complement of $\mathsf R_{\delta,n_2}(x)$ by means of a convolution: 
\begin{equation}\label{smooth indicator good set}
\theta_{x,\delta,n_2} (\omega) 
\; = \;
1 - 
\frac{1}{\Big( \int_\R \rho_\delta (z) \, \dd z \Big)^{N}} \;
\int_{\R^{N}} \chi_{\mathsf R_{\delta,n_2} (x)} (\omega + \omega') \Big( \prod_{x\in \Z_N} \rho_\delta (\omega'_x) \Big) \, \dd \omega'. 
\end{equation}
This naturally may be seen as a function on the full phase space $\Omega$ that is independent of the $q-$variable.

\begin{Proposition}\label{pro: invariance resonant set}
Let $n_1$ be given, and so $r(n_1)$ defined by \eqref{definition du parametre r} be fixed as well. 
Let then $n_2 \ge 1$ be fixed. 
The following holds for $L$ large enough.
\begin{enumerate}
\item
If
$\theta_{x,\delta,n_2} (\omega) > 0$
then 
$\rho_\delta (\omega \cdot \modek) = 0$
for all
$\modek\in K_r$
such that 
$\supp (\modek) \subset \mathrm B(x,4r)$.
\item
$L_{\widetilde H_{n_1}} \theta_{x,\delta,n_2} (q,\omega)= 0$
for all 
$(q,\omega) \in\Omega$
such that 
$q\in \T^{N}$ 
and
$\omega \notin \mathsf S_{n_2} (x)$.
\end{enumerate}
\end{Proposition}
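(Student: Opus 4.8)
The plan is to treat the two statements separately, the first being essentially a matter of unwinding definitions and the second requiring the bulk of the work. For part 1, I would argue by contraposition: suppose $\rho_\delta(\omega\cdot\modek)\neq 0$ for some $\modek\in K_r$ with $\supp(\modek)\subset\mathrm B(x,4r)$. Then $|\omega\cdot\modek|\le 2\delta$, hence $\{\modek\}$ is a cluster around $x$ (a single vector $\modek\in K_r$ is trivially linearly independent, condition 2 is vacuous, and condition 3 holds since $\supp(\modek)\subset\mathrm B(x,4r)$). One then checks that $\omega$ is within distance $\le C\delta$ of $\pi(\modek)$ in the $|\cdot|_2$-norm, with $C$ a bound depending on $r$ only (using $|\modek|_2\le r^2$ and $|\omega\cdot\modek|\le 2\delta$); the second defining condition of $\mathsf B_\delta(\modek)$ is vacuous since $\spanv\{\modek\}$ contains no independent family of size $>1$ in $K_r$. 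So for $L$ large enough that $L\ge C$, we get $\omega\in\mathsf B_\delta(\modek)\subset\mathsf R_{\delta,n_2}(x)$, and then (because $\chi_{\mathsf R}$ is mollified on scale $\delta$ while the defining inequalities of $\mathsf B$ have slack of order $L\delta\gg\delta$) the mollified indicator in \eqref{smooth indicator good set} equals $1$ at $\omega$, i.e.\ $\theta_{x,\delta,n_2}(\omega)=0$. The quantitative point here is merely to absorb the $O(\delta)$ mollification radius into the $L^p\delta$ margins, which forces $L$ large.

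For part 2, I would first use part 1 together with the structure of $\widetilde H_{n_1}$. Recall from \eqref{definition of H tilde} and the discussion after Proposition \ref{pro: change of variable} that $\widetilde H_{n_1}=D+\sum_{k\ge1}\epsilon^k\mathcal R(S^{(k-1)}D+Q^{(k-1)}V)$, where each $\mathcal R(\cdots)$ involves only Fourier modes $\modek\in K_r$ (by the locality and finite-mode properties in $\mathcal S(\Omega)$, with $r=r(n_1)$), and $\mathcal R$ kills the mode $\modek$ unless $\rho_\delta(\modek\cdot\omega)\neq0$. Now write $L_{\widetilde H}\theta_{x}=L_D\theta_x+\sum_k\epsilon^k L_{\widetilde H^{(k)}}\theta_x$. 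The term $L_D\theta_x=\nabla_\omega D\cdot\nabla_q\theta_x$ vanishes because $\theta_x$ is independent of $q$. For the remaining terms, $L_{\widetilde H^{(k)}}\theta_x=\{\widetilde H^{(k)},\theta_x\}$; expanding $\widetilde H^{(k)}$ in Fourier modes, only modes $\modek$ with $\rho_\delta(\modek\cdot\omega)\neq0$ survive, and since $\widetilde H^{(k)}$ is a sum of local terms of range $\le r$, only modes with $\supp(\modek)$ in some ball of radius $\le r$ contribute. The key is that $\theta_x$ depends only on frequencies in a bounded neighborhood of $x$ (say $\mathrm B(x,3r)$, coming from the definition of $\mathsf R_{\delta,n_2}(x)$ via clusters around $x$), so in the Poisson bracket $\{\widetilde H^{(k)},\theta_x\}$ only the modes $\modek$ of $\widetilde H^{(k)}$ whose support meets $\mathrm B(x,3r)$ — hence lie in $\mathrm B(x,4r)$ — can give a nonzero contribution, after possibly shifting the decomposition point. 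For such a $\modek$ one has $\supp(\modek)\subset\mathrm B(x,4r)$, so by part 1, on the set $\{\theta_{x,\delta,n_2}>0\}$ the factor $\rho_\delta(\modek\cdot\omega)=0$ and the mode is absent.

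The real content, then, is the following stability claim: wherever $0<\theta_{x,\delta,n_2}(\omega)<1$ — i.e.\ $\omega$ is in the "transition region" near the boundary of $\mathsf R_{\delta,n_2}(x)$ — and $\omega\notin\mathsf S_{n_2}(x)$, the bracket still vanishes. Here part 1 no longer directly applies since $\theta_x$ itself may vary. I would argue that, on $\{0<\theta_x<1\}\setminus\mathsf S_{n_2}(x)$, the relevant resonant modes $\modek$ with $\supp(\modek)\subset\mathrm B(x,4r)$ that appear in $\widetilde H^{(k)}$ (those with $\rho_\delta(\modek\cdot\omega)\neq0$) must already lie in the span of a cluster $\{\modek_1,\dots,\modek_p\}$ with $p<n_2$ that "realizes" the transition, because the alternative — needing $n_2$ or more independent near-resonant vectors — would put $\omega$ into $\mathsf S_{n_2}(x)$, contradiction. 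Then the hierarchical/nested structure of the sets $\mathsf B_\delta(\modek_1,\dots,\modek_p)$ (the second defining inequality, comparing projections onto nested intersections of hyperplanes, with margins $(L^p-L^{p'})\delta$) forces the flow generated by the resonant part of $\widetilde H^{(k)}$ — which moves $\omega$ only along directions in $\spanv\{\modek_1,\dots,\modek_p\}$, by the form of the Poisson bracket — to stay inside the same $\mathsf B_\delta(\modek_1,\dots,\modek_p)$, so $\theta_x$ is constant along it and the bracket is zero. Getting the bookkeeping of "which modes are resonant" to mesh exactly with "which cluster realizes the transition", uniformly as $\delta\to0$, and choosing $L$ large enough that the geometric-series margins $L^p-L^{p'}$ dominate all the accumulated $O(\delta)$ and mollification errors, is the step I expect to be the main obstacle; everything else is locality bookkeeping and definition-chasing.
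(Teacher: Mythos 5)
Your argument for part 1 is essentially the contrapositive of the paper's and is fine: showing that $\{\theta_{x}>0\}$ and $\{|\omega\cdot\modek|\le 2\delta\}$ are disjoint for $\modek\in K_r$ with $\supp(\modek)\subset\mathrm B(x,4r)$, using that $\{\modek\}$ alone is a cluster around $x$ and absorbing the mollification scale $\delta$ into the margin $L\delta$.

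For part 2, however, there is a genuine gap in your locality bookkeeping. You assert that $\theta_{x}$ depends only on frequencies in a ball of radius $O(r)$ around $x$, and hence that the only Fourier modes $\modek$ of $\widetilde H^{(k)}$ that can give a nonzero contribution to $\{\widetilde H,\theta_x\}$ satisfy $\supp(\modek)\subset\mathrm B(x,4r)$, so that part 1 kills them. This is false: the set $\mathsf R_{\delta,n_2}(x)$ is built from clusters around $x$, and a cluster of up to $n_2$ vectors in $K_r$, each with support of diameter $\le 2r$ and linked by overlapping supports, can stretch out to distance of order $n_2 r$ from $x$. Thus $\theta_x$ genuinely depends on $\omega_z$ for $z$ in a ball of radius $\sim 2n_2 r$, and there are contributing modes $\modek$ whose support does \emph{not} lie in $\mathrm B(x,4r)$ --- for these, $\{\modek\}$ is not a cluster around $x$ and part 1 gives you nothing. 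The paper does not rely on any such support restriction: it reduces directly to showing $\modek\cdot\nabla_\omega\theta_x(\omega)=0$ for \emph{every} $\modek\in K_r$ with $|\modek\cdot\omega|\le 2\delta$, whenever $\omega\notin\mathsf S_{n_2}(x)$.

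The piece you defer as the ``main obstacle'' is in fact the content of the proof, and your sketch of it is both incomplete and somewhat off-target. The paper's argument fixes a shifted point $\omega+\omega'\in\mathsf R_{\delta,n_2}(x)$ (with $\omega'$ in the mollifier's support), takes a cluster $\{\modek_1,\dots,\modek_p\}$ realizing $\omega+\omega'\in\mathsf B_\delta(\modek_1,\dots,\modek_p)$, and splits into three cases according to the relation of $\modek$ to $\spanv\{\modek_1,\dots,\modek_p\}$: orthogonal (translation invariance of $\mathsf B_\delta$ is immediate), in the span (Lemma \ref{lem: invariance of the sets B} gives invariance of $\mathsf B_\delta$ under $\omega\mapsto\omega+t\modek$ for $|t|\le\delta$), or neither (Lemma \ref{lem: extension B} extends the cluster by $\modek$ as long as $p<n_2$, and $p=n_2$ forces $\omega\in\mathsf S_{n_2}(x)$, a contradiction). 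Your sketch gestures only at the ``$\modek$ in the span'' mechanism and at an ``extend the cluster or fall into $\mathsf S$'' alternative, but omits the orthogonal case entirely and never isolates the two lemmas whose geometric content (the nested $L^p\delta$ / $(L^p-L^{p'})\delta$ margins, and cluster extension) carries the argument. Also, the ``flow generated by the resonant part'' framing is the wrong picture: one is not flowing, one is checking that the convolved indicator is constant in the single direction $\modek$, uniformly over the mollifier's support, which is a pointwise statement about the sets $\mathsf B_\delta$.
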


\begin{figure}[t]
\begin{center}
\begin{tikzpicture}[scale=0.5]

\draw (-7,0) -- (8.5,0) ;
\draw (245:6) -- (65:8) ;

\draw [very thick] (-25:5) -- (25:5) ;
\draw [very thick] (155:5) -- (205:5) ;
\draw [very thick] (40:5) -- (90:5) ;
\draw [very thick] (220:5) -- (270:5) ;

\draw [color=gray!60] (0,0) circle (5) ;

\draw [very thick] (25:5) arc (25:40:5) ;
\draw [very thick] (90:5) arc (90:155:5) ;
\draw [very thick] (205:5) arc (205:220:5) ;
\draw [very thick] (270:5) arc (270:335:5) ;

\draw [thick,>=stealth,->] (5.8,0) -- (5.8,3) ;
\draw [rotate=65,thick,>=stealth,->] (5.8,0) -- (5.8,3) ;

\draw (5.9,1.6) node[right]{{\Large$\modek_1$}} ;
\draw (1.2,6) node[above]{{\Large$\modek_2$}} ;

\draw (7.2,0) node[below]{{\Large$\pi (\modek_1)$}} ;
\draw (2.9,6.2) node[right]{{\Large$\pi (\modek_2$)}} ;

\draw[>=stealth,<->] (-50:0.03) -- (-50:4.97) ; 

\draw (0,0) -- (155:1) ;
\draw[xshift=-15,yshift=7,>=stealth,<->] (0:0) -- (65:4.5) ; 

\draw(2.4,-1.5) node(above){{\large$L^2\delta$}} ;
\draw(-1.5,2.3) node(left){{\large$(L^2-L)\delta$}} ;
\end{tikzpicture}
\end{center}

\caption{\label{figure: facebook group}
The set $\mathsf B_\delta(\modek_1,\modek_2)$.
The plane is the subspace of points of the form $\omega - P(\omega, \pi (\modek_1) \cap \pi (\modek_2))$, for $\omega \in \R^N$.
We have drawn a disk of radius $L^2 \delta$ that is `flattened' by an amount $L\delta$  at the intersection of its boundary with the lines $\pi(\modek_1),\pi(\modek_2)$. 
To simplify the figure, we have pretended that $\modek_1$ and $\modek_2$ are the only vectors in $\spanv \{\modek_1,\modek_2\}\cap K_r$.
This is not so in reality: the disk still needs to be flattened by an amount $L\delta$ 
at each intersection point of its boundary with a line $\pi(\modek)$ for all $\modek \in \spanv \{\modek_1,\modek_2\}\cap K_r$. 
The set $\mathsf B(\modek_1,\modek_2,\modek_3)$ could be similarly visualized as a ball of radius $L^3 \delta$, 
that is flattened by $L\delta$ along the circles corresponding to the intersection of its boundary with a plane $\pi (\modek)$, 
and flattened by $L^2 \delta$ at the points where its boundary intersect a line $\pi (\modek) \cap \pi (\modek')$, 
for all $\modek,\modek' \in \spanv \{\modek_1,\modek_2,\modek_3\}\cap K_r$.
}
\end{figure}
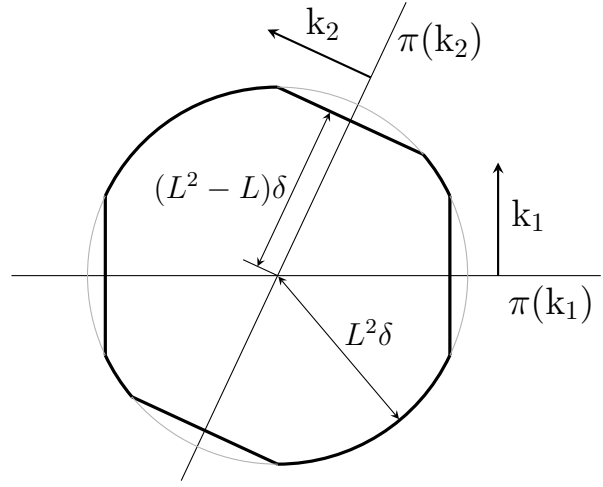

\subsection{Proof of Proposition \ref{pro: invariance resonant set}}\label{sub: proof of the invariance of the resonant set}

We start by a series of lemmas. 
The first one simply expresses, in a particular case, that if a point is close to two vector spaces, then it is also close to their intersection. 
The uniformity of the constant $\mathrm C$ comes from the fact that we impose the vectors to sit in the set $K_r$.

\begin{Lemma}\label{lem: close to two subspaces implies close to the intersection}
Let $p\ge 1$.
There exists a constant $\mathrm C = \mathrm C (r,p) < + \infty$ such that, 
given linearly independent vectors $\modek_1, \dots , \modek_p,\modek_{p+1} \in K_r$ and given $\omega \in \R^{N}$, it holds that 
\begin{multline*}
\big| \omega - P\big( \omega, \pi (\modek_1) \cap \dots \cap \pi (\modek_{p}) \cap \pi (\modek_{p+1}) \big) \big|_2
\\ \le \; \mathrm C \,
\Big( 
|\modek_{p+1} \cdot \omega | + \big| \omega - P\big( \omega, \pi (\modek_1) \cap \dots \cap \pi (\modek_{p}) \big) \big|_2
\Big) .
\end{multline*}
\end{Lemma}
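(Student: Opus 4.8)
The plan is to reduce the statement to an elementary linear-algebra fact about angles between subspaces, and then to use compactness (finiteness of $K_r$) to make the constant $\mathrm C$ uniform. First I would fix $p\ge 1$ and linearly independent $\modek_1,\dots,\modek_{p+1}\in K_r$, and write $E = \pi(\modek_1)\cap\dots\cap\pi(\modek_p)$ and $F = E\cap\pi(\modek_{p+1})$, so that $F$ is a codimension-one subspace of $E$ (codimension one because $\modek_{p+1}$ is not in the span of $\modek_1,\dots,\modek_p$, hence $\modek_{p+1}$ restricted to $E$ is a nonzero linear functional). For any $\omega\in\R^N$, decompose $\omega = P(\omega,E) + \omega^{\perp}$ with $\omega^{\perp}\perp E$, so $|\omega - P(\omega,E)|_2 = |\omega^{\perp}|_2$. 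Inside $E$, the vector $P(\omega,E)$ decomposes as $P(\omega,F) + w$ with $w\in E\cap F^{\perp}$, a one-dimensional space spanned by the component $\modek_{p+1}^{E}$ of $\modek_{p+1}$ orthogonal (within $E$) to $F$. The point is that $|\omega - P(\omega,F)|_2^2 = |\omega^{\perp}|_2^2 + |w|_2^2$, so it suffices to bound $|w|_2$.

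Next I would bound $|w|_2$ in terms of $|\modek_{p+1}\cdot\omega|$ and $|\omega^{\perp}|_2$. Write $\modek_{p+1} = \modek_{p+1}^{E} + \modek_{p+1}^{\perp}$ with $\modek_{p+1}^{\perp}\perp E$. Then $\modek_{p+1}\cdot\omega = \modek_{p+1}^{E}\cdot P(\omega,E) + \modek_{p+1}^{\perp}\cdot\omega^{\perp}$, and since $\modek_{p+1}^{E}\cdot P(\omega,E) = \modek_{p+1}^{E}\cdot w$ (because $\modek_{p+1}^{E}\perp F$ and $P(\omega,E) - w = P(\omega,F)\in F$), we get $|\modek_{p+1}^{E}\cdot w| \le |\modek_{p+1}\cdot\omega| + |\modek_{p+1}^{\perp}|_2\,|\omega^{\perp}|_2 \le |\modek_{p+1}\cdot\omega| + |\modek_{p+1}|_2\,|\omega^{\perp}|_2$. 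Since $w$ is parallel to $\modek_{p+1}^{E}$, $|\modek_{p+1}^{E}\cdot w| = |\modek_{p+1}^{E}|_2\,|w|_2$, so $|w|_2 \le |\modek_{p+1}^{E}|_2^{-1}\big(|\modek_{p+1}\cdot\omega| + |\modek_{p+1}|_2\,|\omega^{\perp}|_2\big)$. Using $|\modek_{p+1}|_2\le r^2$ and combining with $|\omega - P(\omega,F)|_2 \le |\omega^{\perp}|_2 + |w|_2$ gives the claimed inequality with $\mathrm C$ depending only on $r$, $p$, and $\min |\modek_{p+1}^{E}|_2$ over the relevant configurations.

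The remaining, and really the only substantive, point is the uniformity of $\mathrm C$: one must show $|\modek_{p+1}^{E}|_2$, i.e. the norm of the component of $\modek_{p+1}$ orthogonal to $\pi(\modek_1)\cap\dots\cap\pi(\modek_p)$ within $\pi(\modek_1)\cap\dots\cap\pi(\modek_p)$, is bounded below by a positive constant depending only on $r$ and $p$. Here I would invoke that $K_r$ is a \emph{finite} set of integer vectors (and that, for a fixed ball, the ambient dimension may be taken bounded, or one works with the span of $\modek_1,\dots,\modek_{p+1}$ which has dimension at most $p+1$): there are only finitely many tuples $(\modek_1,\dots,\modek_{p+1})$ of linearly independent vectors in $K_r$ up to translation/embedding, for each the quantity $|\modek_{p+1}^{E}|_2$ is strictly positive (it vanishes only if $\modek_{p+1}$ lies in the orthogonal complement of $E$'s orthogonal complement intersected appropriately — concretely only if $\modek_{p+1}\in\spanv\{\modek_1,\dots,\modek_p\}$, excluded by linear independence), so the minimum over this finite collection is positive. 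I expect this compactness/finiteness step to be the main obstacle only in the sense of bookkeeping — one must phrase it so that the translation-invariance of the whole picture (clusters can sit anywhere in $\Z_N$) does not spoil the uniformity, which is why the hypothesis $\modek_i\in K_r$ (bounded support, bounded entries) is exactly what is needed.
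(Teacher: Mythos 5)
Your proposal is correct and follows essentially the same route as the paper: both reduce the estimate to bounding the component of $\modek_{p+1}$ orthogonal to $\spanv\{\modek_1,\dots,\modek_p\}$ (the paper's unit vector $v$ is exactly your $\modek_{p+1}^{E}/|\modek_{p+1}^{E}|_2$), and both obtain the uniform lower bound on $|\modek_{p+1}^{E}|_2$ from linear independence and finiteness of the relevant configurations in $K_r$. The only cosmetic difference is that you invoke the Pythagorean split $|\omega-P(\omega,F)|_2^2=|\omega^{\perp}|_2^2+|w|_2^2$ where the paper starts from the triangle inequality, which changes nothing of substance.
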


\begin{proof}
First, 
\begin{multline}\label{first inequality in close to two subspaces implies close to the intersection}
\big| \omega - P\big( \omega, \pi (\modek_1) \cap \dots \cap  \pi (\modek_{p+1}) \big) \big|_2
\; \le \; 
\big| \omega - P\big( \omega, \pi (\modek_1) \cap \dots \cap \pi (\modek_{p})  \big) \big|_2
\\
+ 
\big| P\big( \omega, \pi (\modek_1) \cap \dots \cap  \pi (\modek_{p+1}) \big) - P\big( \omega, \pi (\modek_1) \cap \dots \cap \pi (\modek_{p})  \big) \big|_2.
\end{multline}
The lemma is already shown if the second term in the right hand side is zero. We further assume this  not to be the case. 
Next, since $\modek_{p+1} \cdot P\big( \omega, \pi (\modek_1) \cap \dots \cap \pi (\modek_{p+1}) \big) = 0$, we obtain
\begin{multline*}
\modek_{p+1} \cdot \omega 
\; = \; 
\modek_{p+1} \cdot \Big( \omega - P\big( \omega, \pi (\modek_1) \cap \dots \cap \pi (\modek_{p})  \big) \Big)
\; + \\ 
\modek_{p+1} \cdot \Big( 
P\big( \omega, \pi (\modek_1) \cap \dots \cap \pi (\modek_{p}) \big) - P\big( \omega, \pi (\modek_1) \cap \dots \cap \pi (\modek_{p+1}) \big)
\Big).
\end{multline*}
This implies
\begin{multline}\label{second inequality in close to two subspaces implies close to the intersection}
\Big| \modek_{p+1} \cdot \Big( 
P\big( \omega, \pi (\modek_1) \cap \dots \cap \pi (\modek_{p}) \big) - P\big( \omega, \pi (\modek_1) \cap \dots \cap \pi (\modek_{p+1}) \big) \Big)
\Big|
\; \le 
\\
|\modek_{p+1} \cdot \omega |
+ 
|\modek_{p+1}|_2
\Big|
\omega - P\big( \omega, \pi (\modek_1) \cap \dots \cap \pi (\modek_{p})  \big) 
\Big|_2.
\end{multline}
The vector
\begin{equation}\label{definition of v in close to two subspaces implies close to the intersection}
v \; = \; \frac{P\big( \omega, \pi (\modek_1) \cap \dots \cap \pi (\modek_{p}) \big) - P\big( \omega, \pi (\modek_1) \cap \dots \cap \pi (\modek_{p+1}) \big) }
{\big| P\big( \omega, \pi (\modek_1) \cap \dots \cap \pi (\modek_{p}) \big) - P\big( \omega, \pi (\modek_1) \cap \dots \cap \pi (\modek_{p+1}) \big) \big|_2}
\end{equation}
is well defined since we have assumed that the denominator in this expression does not vanish. 
The bound \eqref{second inequality in close to two subspaces implies close to the intersection} is rewritten as
\begin{multline*}
\big| P\big( \omega, \pi (\modek_1) \cap \dots \cap \pi (\modek_{p}) \big) - P\big( \omega, \pi (\modek_1) \cap \dots \cap \pi (\modek_{p+1}) \big) \big|_2
\; \le 
\\
\frac{|\modek_{p+1} \cdot \omega |
+ 
|\modek_{p+1}|_2
\Big|
\omega - P\big( \omega, \pi (\modek_1) \cap \dots \cap \pi (\modek_{p})  \big) 
\Big|_2}{|\modek_{p+1} \cdot v|}
\end{multline*}
Inserting this last inequality in \eqref{first inequality in close to two subspaces implies close to the intersection}, we arrive at 
\begin{multline*}
\big| \omega - P\big( \omega, \pi (\modek_1) \cap \dots \cap  \pi (\modek_{p+1}) \big) \big|_2
\; \le \\ 
\frac{|\modek_{p+1} \cdot \omega |}{|\modek_{p+1} \cdot v|}
+ 
\bigg( 1 + \frac{|\modek_{p+1}|_2}{|\modek_{p+1} \cdot v|} \bigg) 
\Big|
\omega - P\big( \omega, \pi (\modek_1) \cap \dots \cap \pi (\modek_{p})  \big) 
\Big|_2 .
\end{multline*}

To finish the proof, it remains to establish that $|\modek_{p+1} \cdot v|$ can be bounded from below by some strictly positive constant, 
where $v$ is given by \eqref{definition of v in close to two subspaces implies close to the intersection}. 
Let us show that
\begin{equation}\label{v is one dimensional}
v = \pm \frac{P \big(\modek_{p+1}, \pi (\modek_1) \cap \dots \cap \pi (\modek_p)\big)}{\big| P \big(\modek_{p+1}, \pi (\modek_1) \cap \dots \cap \pi (\modek_p)\big) \big|_2}.
\end{equation}
We can find vectors $\modek_{p+2}, \dots , \modek_N$ so that $\{ \modek_1, \dots, \modek_N \}$ forms a basis of $\R^N$
and so that every vector $k_j$ with $p+2 \le j \le N$ is orthrogonal to $\spanv \{ \modek_1, \dots , \modek_{p+1} \}$. 
We express the vector $\omega$ in this basis, $\omega = \sum_{j=1}^N \omega^j \modek_j$, and, from \eqref{definition of v in close to two subspaces implies close to the intersection}, 
we deduce that, for some non-zero constant $R$, we have
\begin{equation*}
v \; = \; R\sum_{j=1}^N \omega^j \Big\{ P\big( \modek_j, \pi (\modek_1) \cap \dots \cap \pi (\modek_{p}) \big) - P\big( \modek_j, \pi (\modek_1) \cap \dots \cap \pi (\modek_{p+1}) \big)  \Big\}. 
\end{equation*}
All the terms corresponding to $1 \le j \le p$ vanish since $\modek_j \perp \pi (\modek_j)$, 
the term $P\big( \modek_{p+1}, \pi (\modek_1) \cap \dots \cap \pi (\modek_{p+1}) \big)$ vanishes for the same reason, 
and all the terms corresponding to $j \ge p+2$ vanish too, as they read in fact $\modek_j - \modek_j = 0$. 
So, the only term left is $R \omega^{p+1}P\big( \modek_{p+1}, \pi (\modek_1) \cap \dots \cap \pi (\modek_{p}) \big)$ and, since $|v|=1$, we arrive at \eqref{v is one dimensional}. 

From \eqref{v is one dimensional} we deduce that 
\begin{equation*}
|v\cdot \modek_{p+1}| \; = \;
\big| P \big(\modek_{p+1}, \pi (\modek_1) \cap \dots \cap \pi (\modek_p)\big) \big|_2 .
\end{equation*}
If $\modek_{p+1} \bot \spanv \{ \modek_1, \dots , \modek_p \}$, then the right hand side just becomes $|\modek_{p+1}|_2$.
This quantity is bounded from below by a strictly positive constant since so is the the norm of any nonzero vector in $K_r$. 
Otherwise, if $\modek_{p+1} \cancel{\bot} \spanv \{ \modek_1, \dots ,\modek_p \}$, 
we know however that the quantity cannot vanish since $\modek_{p+1} \notin \spanv\{ \modek_1, \dots , \modek_p \}$.
Because they are only finitely many vectors $\modek \in K_r$ with the property that $\modek \cancel{\bot} \spanv \{ \modek_1, \dots , \modek_p \}$, we conclude that 
the quantity is bounded from below by a strictly positive constant.
\end{proof}

The next Lemma describes the crucial geometrical properties of the sets $\mathsf B_\delta (\modek_1, \dots ,\modek_p)$ 
that allows to establish the second assertion of Proposition \ref{pro: invariance resonant set}.

\begin{Lemma}\label{lem: invariance of the sets B}
Let $\{ \modek_1 , \dots , \modek_p \}$ be a cluster around $x$. 
If, given $\mathrm K < + \infty$, $L$ is taken large enough, then, 
for every $\modek \in K_r \cap \spanv \{\modek_1, \dots ,\modek_p\}$, it holds that if
\begin{equation*}
\omega \in \mathsf B_\delta (\modek_1 , \dots , \modek_p) \quad \text{and} \quad  |\modek \cdot \omega| \; \le \; \mathrm K \delta
\end{equation*}
then
\begin{equation*}
\omega + t \modek \in \mathsf B_\delta (\modek_1 , \dots , \modek_p)
\quad \text{as long as} \quad  |t| \le \delta.  
\end{equation*}
\end{Lemma}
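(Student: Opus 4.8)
The plan is to verify directly that $\omega+t\modek$ satisfies the two conditions defining $\mathsf{B}_\delta(\modek_1,\dots,\modek_p)$. Set $W=\spanv\{\modek_1,\dots,\modek_p\}$; then $\pi(\modek_1)\cap\dots\cap\pi(\modek_p)=W^\perp$, and since $\modek\in W$ we have $\modek\perp W^\perp$, so $P(\omega+t\modek,W^\perp)=P(\omega,W^\perp)$: moving along $\modek$ changes nothing outside $W$. It therefore suffices to work with $\bar\omega:=\omega-P(\omega,W^\perp)\in W$, on which both conditions depend, noting $|\modek\cdot\bar\omega|=|\modek\cdot\omega|\le\mathrm K\delta$ and $1\le|\modek|_2\le r^2$ since $\modek\in K_r$. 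I will use freely that, the vectors $\modek,\modek'_j$ all lying in the finite set $K_r$, any Gram- or angle-type quantity formed from them is bounded away from $0$ and from $\infty$ by constants depending only on $r$ — the same uniformity that underlies Lemma~\ref{lem: close to two subspaces implies close to the intersection}.

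For the first condition, decompose orthogonally $\bar\omega=P(\bar\omega,\pi(\modek)\cap W)+\tfrac{\modek\cdot\bar\omega}{|\modek|_2^2}\modek$. The second defining inequality of $\mathsf{B}_\delta$, applied to the single vector $\modek\in K_r\cap W$, bounds the first term by $(L^p-L)\delta$, while the second term plus $t\modek$ has norm at most $|\modek\cdot\bar\omega|/|\modek|_2+|t|\,|\modek|_2\le(\mathrm K+r^2)\delta$. By orthogonality, $|\bar\omega+t\modek|_2^2\le(L^p-L)^2\delta^2+(\mathrm K+r^2)^2\delta^2\le L^{2p}\delta^2$ as soon as $2L^{p+1}\ge L^2+(\mathrm K+r^2)^2$, i.e.\ for $L$ large; this is \eqref{first condition B set} for $\omega+t\modek$.

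For the second condition, fix linearly independent $\modek'_1,\dots,\modek'_{p'}\in K_r\cap W$ and put $V=\pi(\modek'_1)\cap\dots\cap\pi(\modek'_{p'})$; as $V\supset W^\perp$, one needs $|P(\bar\omega+t\modek,V\cap W)|_2\le(L^p-L^{p'})\delta$. If $\modek\in\spanv\{\modek'_1,\dots,\modek'_{p'}\}$ then $\modek\perp V\cap W$, so $P(\bar\omega+t\modek,V\cap W)=P(\bar\omega,V\cap W)$ and the bound is exactly the second defining inequality for $\omega$ itself. Otherwise $\{\modek'_1,\dots,\modek'_{p'},\modek\}$ is again a linearly independent family in $K_r\cap W$, now of size $p'+1\le p$, and $w:=P(\modek,V\cap W)\ne0$; writing $V\cap W=(V\cap\pi(\modek)\cap W)\oplus\spanv\{\hat w\}$ with $\hat w=w/|w|_2$ gives $|P(\bar\omega+t\modek,V\cap W)|_2^2=|P(\bar\omega,V\cap\pi(\modek)\cap W)|_2^2+(\bar\omega\cdot\hat w+t|w|_2)^2$. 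The first term is $\le(L^p-L^{p'+1})^2\delta^2$ by the second defining inequality for $\omega$ on the enlarged family. For the scalar along $\hat w$, the move contributes at most $|t|\,|w|_2\le r^2\delta$, and one shows, using $|\modek\cdot\omega|\le\mathrm K\delta$ together with the $K_r$-uniformity above, that $|\bar\omega\cdot\hat w|$ falls below the a priori threshold $(L^p-L^{p'})\delta$ by a margin that grows like a positive power of $L$ (rather than staying bounded), so that $(\bar\omega\cdot\hat w+t|w|_2)^2\le(L^p-L^{p'})^2\delta^2-(L^p-L^{p'+1})^2\delta^2$ for $L$ large. Since $(L^p-L^{p'})^2-(L^p-L^{p'+1})^2=(L^{p'+1}-L^{p'})(2L^p-L^{p'}-L^{p'+1})\to\infty$ as $L\to\infty$ (using $1\le p'\le p-1$), adding back the first term keeps the total $\le(L^p-L^{p'})^2\delta^2$ once $L=L(\mathrm K,r)$ is large enough.

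I expect the scalar estimate along $\hat w$ to be the genuine obstacle. One cannot simply argue ``$\bar\omega$ is displaced by at most $r^2\delta$ and $|P(\bar\omega,V\cap W)|_2\le(L^p-L^{p'})\delta$'', because that inequality can be saturated; one must show that the one-dimensional $\hat w$-component of $\bar\omega$ — the direction tied to $\pi(\modek)$, which is the only direction that moves — is controlled well below its a priori bound, with slack a positive power of $L$ rather than a constant, and it is precisely for this that $|\modek\cdot\omega|\le\mathrm K\delta$ is needed. Making this quantitative amounts to disentangling $\bar\omega\cdot\hat w$ from the a priori large quantities $\modek'_j\cdot\omega$ using only the smallness of $\modek\cdot\omega$ and the finiteness of $K_r$; this is the single non-routine computation, and it is what would fail if the finite trigonometric polynomials of this model were replaced by the general analytic functions of \eqref{quartic chain}, as already noted in the paper.
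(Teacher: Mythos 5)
The reduction to $\bar\omega=\omega-P(\omega,W^\perp)\in W$, the verification of the first defining inequality \eqref{first condition B set}, and the disposal of the case $\modek\in\spanv\{\modek'_1,\dots,\modek'_{p'}\}$ all agree in substance with the paper's proof (the paper verifies \eqref{point 1 in proof invariance set B} by a triangle inequality rather than a Pythagorean split, but that is cosmetic). The problem is the remaining case, which you correctly identify as the crux, and there your argument does not close.

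You propose to split the budget rigidly between the two orthogonal pieces of $P(\bar\omega+t\modek,V\cap W)$: bound $|P(\bar\omega,V\cap\pi(\modek)\cap W)|_2^2$ by $(L^p-L^{p'+1})^2\delta^2$ (from the defining inequality for the enlarged family) and then demand
\begin{equation*}
(\bar\omega\cdot\hat w+t|w|_2)^2\;\le\;(L^p-L^{p'})^2\delta^2-(L^p-L^{p'+1})^2\delta^2\;\sim\;2L^{p+p'+1}\delta^2.
\end{equation*}
This last inequality is \emph{not} a consequence of the hypotheses. The only controls available on $\bar\omega\cdot\hat w$ are $|\bar\omega\cdot\hat w|\le|P(\bar\omega,V\cap W)|_2\le(L^p-L^{p'})\delta$, and, by the same manipulation that proves Lemma \ref{lem: close to two subspaces implies close to the intersection},
\begin{equation*}
|w|_2\,|\bar\omega\cdot\hat w|\;=\;\big|P(\bar\omega,V\cap W)\cdot\modek\big|\;\le\;|\modek\cdot\bar\omega|+r^2\big|\bar\omega-P(\bar\omega,V\cap W)\big|_2\;\le\;\mathrm K\delta+r^2|\bar\omega|_2.
\end{equation*}
Both give only $|\bar\omega\cdot\hat w|\lesssim L^p\delta$, hence $(\bar\omega\cdot\hat w+t|w|_2)^2\lesssim L^{2p}\delta^2$, which exceeds $L^{p+p'+1}\delta^2$ as soon as $p'\le p-2$. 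And the bound is essentially sharp: when $|P(\bar\omega,V\cap\pi(\modek)\cap W)|_2$ is far below its maximum $(L^p-L^{p'+1})\delta$ (it can even vanish), the $\hat w$-component is allowed to use up the entire budget $(L^p-L^{p'})\delta$ for $|P(\bar\omega,V\cap W)|_2$, and nothing in the definition of $\mathsf B_\delta$ prevents it. The two orthogonal coordinates are not independently capped; the rigid partition you impose is simply not what the definition provides.

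What actually has to be proved — and what the paper's Lemma \ref{lem: invariance of the sets B} proof establishes, their \eqref{thing to show in invariance of the sets B} — is the unsplit statement
\begin{equation*}
\big|P\big(\bar\omega,\pi(\modek'_1)\cap\dots\cap\pi(\modek'_{p'})\big)\big|_2\;\le\;(L^p-L^{p'}-r^2)\delta,
\end{equation*}
i.e.\ that the \emph{total} norm, not one component of it, has $r^2\delta$ of slack; combined with $|P(t\modek,V\cap W)|_2\le r^2\delta$, this yields \eqref{point 2 in proof invariance set B} by the triangle inequality with no orthogonal bookkeeping at all. To prove this, the paper compares the two Pythagorean decompositions of $|\omega'|_2^2$ with respect to $V$ and $V\cap\pi(\modek)$ (their \eqref{decomp 1 in invariance of the sets B}--\eqref{decomp 2 in invariance of the sets B}), bounds the $V\cap\pi(\modek)$ residual by Lemma \ref{lem: close to two subspaces implies close to the intersection} in terms of $|\modek\cdot\omega'|\le\mathrm K\delta$ and the $V$ residual, reinserts, and then runs a parametrisation $|\omega'|_2=(1-\mu)^{1/2}L^p\delta$, $|P(\omega',V)|_2=(1-\nu)^{1/2}(L^p-L^{p'}-r^2)\delta$ to show that the resulting inequality forces $\nu\ge0$ once $L$ is large. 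The mechanism is that the constraint from the enlarged family $\{\modek'_1,\dots,\modek'_{p'},\modek\}$, fed through Lemma \ref{lem: close to two subspaces implies close to the intersection} and through $|\modek\cdot\omega'|\le\mathrm K\delta$, controls the entire projection onto $V$, not merely its $\hat w$-component. That coupling between the two constraints is exactly what your rigid budget split discards, and it is the missing idea in your proof.
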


\begin{proof}
To simplify some further expressions, let us define
\begin{equation*}
\omega' = \omega - P (\omega, \pi (\modek_1) \cap \dots \cap \pi(\modek_p)) \in \spanv(\modek_1, \dots , \modek_p).
\end{equation*}
The conditions ensuring that $\omega \in \mathsf B(\modek_1, \dots , \modek_p)$ now simply read
\begin{equation}\label{new formulation special set conditions}
|\omega'|_2 \; \le \; L^p \delta \quad \text{and} \quad
\big|P\big(\omega', \pi (\modek_1') \cap \dots\cap \pi (\modek'_{p'}) \big)\big|_2 \; \le \; \big( L^p - L^{p'} \big) \delta \quad (p' < p),
\end{equation}
for all linearly independent $\modek_1', \dots , \modek_{p'}' \in \modek \in K_r \cap \spanv \{\modek_1, \dots ,\modek_p\}$. 
The condition $|\modek \cdot \omega| \le \mathrm K \delta$ implies $|\modek \cdot \omega'|  \le  \mathrm K \delta$.
We need to show that
\begin{align}
&|\omega' + t \modek |_2 \; \le \; L^p \delta&
 \text{for} \qquad |t|\le \delta,
\label{point 1 in proof invariance set B}\\
&\big|P\big(\omega' + t\modek, \pi (\modek_1') \cap \dots\cap \pi (\modek'_{p'}) \big)\big|_2 \; \le \; \big( L^p - L^{p'} \big) \delta&
 \text{for} \qquad |t|\le \delta.
\label{point 2 in proof invariance set B}
\end{align}

Let us start with \eqref{point 1 in proof invariance set B}:
\begin{align*}
|\omega' + t \modek|_2
\; &\le \;
|\omega' - P(\omega',\pi (\modek))|_2 + |P(\omega',\pi (\modek))|_2 + |t| |k|_2\\
\; &\le \;
|\omega' \cdot \modek| + |P(\omega',\pi (\modek))|_2 + |t| |k|_2 \\
\; &\le \; 
\mathrm K \delta + (L^p - L) \delta + r^2 \delta
\; \le \; L^p \delta.
\end{align*}
Here, to get the penultimate inequality, we have used \eqref{new formulation special set conditions} and the hypothesis $\modek \in K_r \cap \spanv \{\modek_1, \dots ,\modek_p\}$,
implying in particular $|\modek|_2 \le r^2$, 
while the last inequality is valid for large enough $L$.

Let us next move to \eqref{point 2 in proof invariance set B}. 
Let us fix $\modek_1', \dots , \modek'_{p'}$. 
It is seen that, if $\modek \in \spanv \{ \modek_1', \dots , \modek'_{p'} \}$, then  \eqref{point 2 in proof invariance set B} is actually satisfied for all $t \in \R$. 
Let us therefore assume $\modek\notin \spanv \{ \modek_1', \dots , \modek'_{p'} \}$.
We write also $\modek = \modek_{p'+1}$. 
We will show that, because $ |\modek \cdot \omega| \; \le \; \mathrm K \delta$, then in fact 
\begin{equation}\label{thing to show in invariance of the sets B}
\big|P\big(\omega', \pi (\modek_1') \cap \dots\cap \pi (\modek'_{p'}) \big)\big|_2 \; \le \; \big( L^p - L^{p'} - r^2 \big) \delta.
\end{equation}
Since $|\modek|_2 \le r^2$, this will imply  \eqref{point 2 in proof invariance set B}.

To establish \eqref{thing to show in invariance of the sets B}, we start by writing the decompositions 
\begin{align}
|\omega'|_2^2 \; = & \; \big| \omega' - P \big(\omega', \pi (\modek'_1) \cap \dots \cap \pi (\modek'_{p'}) \big) \big|_2^2  + \big| P \big(\omega', \pi (\modek'_1) \cap \dots \cap \pi (\modek'_{p'})\big) \big|_2^2, 
\label{decomp 1 in invariance of the sets B}\\
|\omega'|_2^2 \; = & \; \big| \omega' - P \big(\omega', \pi (\modek'_1) \cap \dots \cap \pi (\modek'_{p'+1})\big) \big|_2^2  + \big| P \big(\omega', \pi (\modek'_1) \cap \dots \cap \pi (\modek'_{p'+1})\big) \big|_2^2.
\label{decomp 2 in invariance of the sets B}
\end{align}
We bound the first term in the right hand side of \eqref{decomp 2 in invariance of the sets B} by applying Lemma \ref{lem: close to two subspaces implies close to the intersection}
and then using \eqref{decomp 1 in invariance of the sets B}:
\begin{align*}
&\big| \omega' - P \big(\omega', \pi (\modek'_1) \cap \dots \cap \pi (\modek'_{p+1})\big) \big|_2^2 \\
& \le \; 
\mathrm C \Big( |\modek\cdot \omega'|^2 +  \big| \omega' - P \big(\omega', \pi (\modek'_1) \cap \dots \cap \pi (\modek'_p)\big) \big|_2^2 \Big) \\
&\le \;
\mathrm C \Big( 
|\modek\cdot \omega'|^2 + |\omega'|^2_2 - \big| P \big(\omega', \pi (\modek'_1) \cap \dots \cap \pi (\modek'_p)\big) \big|_2^2
\Big).
\end{align*}
It may be assumed that $\mathrm C \ge 1$.
Reinserting this bound in \eqref{decomp 2 in invariance of the sets B} yields
\begin{align}
|\omega'|_2^2 \; \le &\; 
\mathrm C \Big( 
|\modek\cdot \omega'|^2 + |\omega'|^2_2 - \big| P \big(\omega', \pi (\modek'_1) \cap \dots \cap \pi (\modek'_p)\big) \big|_2^2 \Big) \nonumber\\
&\; + 
\big| P \big(\omega', \pi (\modek'_1) \cap \dots \cap \pi (\modek'_{p'+1})\big) \big|_2^2
\nonumber\\
\le &\; 
\mathrm C \Big( 
\mathrm K^2 \delta^2 + |\omega'|^2_2 - \big| P \big(\omega', \pi (\modek'_1) \cap \dots \cap \pi (\modek'_p)\big) \big|_2^2 \Big)
+ 
(L^p - L^{p' + 1})^2 \delta^2.
\label{inequality that cannot be violated in in invariance of the sets B}
\end{align}
where the hypotheses $|\modek\cdot \omega'| \le \mathrm K \delta$ and $\omega \in \mathsf B(\modek_1, \dots , \modek_p)$ have been used to get the last line. 

Let us now show that \eqref{inequality that cannot be violated in in invariance of the sets B} implies \eqref{thing to show in invariance of the sets B} for $L$ large enough.
For this let us write
\begin{align*}
|\omega'| \; &= \; (1 - \mu)^{1/2} L^p \delta
\qquad \text{with}\qquad 0 \le \mu \le 1, \\
\big| P \big(\omega', \pi (\modek'_1) \cap \dots \cap \pi (\modek'_p)\big) \big|_2 \; &= \; (1 - \nu)^{1/2} (L^p - L^{p'} - r^2) \delta
\qquad \text{with}\qquad \nu \le 1
\end{align*}
($\mu > 0$ actually, thanks to the hypothesis $|\modek\cdot \omega'| \le \mathrm K \delta$, see figure \ref{figure: facebook group}).
Showing \eqref{thing to show in invariance of the sets B} amounts showing $\nu \ge 0$.
With these new notations, inequality \eqref{inequality that cannot be violated in in invariance of the sets B} is rewritten as
\begin{align*}
1 + (\mathrm C - 1) \mu 
\; \le & \;
1 - \frac{2}{L^{p-p'-1}}  \\
&+ \frac{1}{L^{2(p-p'-1)}} + \mathrm C \bigg( 
\frac{\mathrm K^2}{L^{2p}} + \frac{2}{L^{p-p'}} + \frac{2r^2}{L^p} - \frac{1}{L^{2(p-p')}} - \frac{r^4}{L^{2p}} - \frac{2 r^2}{L^{2p-p'}}
\bigg) \\
&+ \mathrm C \nu \bigg( 1 - \frac{L^{p'}+r^2}{L^p} \bigg)^2.
\end{align*}
The left hand side is larger or equal to 1. 
But, when $L$ becomes large, the right hand side is larger or equal to 1 only if $\nu > 0$.
\end{proof}

\begin{Lemma}\label{lem: extension B}
Let $\{ \modek_1 , \dots , \modek_p \}$ be a cluster around $x$, 
and let $\modek \in K_r$ be such that $\modek \notin \spanv\{ \modek_1 , \dots , \modek_p \}$, but such that $\{ \modek_1, \dots , \modek_p,\modek \}$ is a cluster. 
If, given $\mathrm K < + \infty$, $L$ is taken large enough, then
\begin{equation*}
\omega \in  \mathsf B (\modek_1 , \dots , \modek_p) \quad \text{and} \quad  |\modek \cdot \omega| \; \le \; \mathrm K \delta
\qquad  \Rightarrow \qquad 
\omega \in \mathsf B (\modek_1 , \dots , \modek_p,\modek).
\end{equation*}
\end{Lemma}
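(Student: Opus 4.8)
The plan is to verify directly that $\omega$ satisfies the two conditions defining membership in $\mathsf B_\delta(\modek_1,\dots,\modek_p,\modek)$: the inequality \eqref{first condition B set} for the augmented family $\modek_1,\dots,\modek_p,\modek$, and the accompanying ``flattening'' condition for every linearly independent family drawn from $K_r\cap\spanv\{\modek_1,\dots,\modek_p,\modek\}$. Throughout I write $W=\pi(\modek_1)\cap\dots\cap\pi(\modek_p)$ and $W'=W\cap\pi(\modek)=\pi(\modek_1)\cap\dots\cap\pi(\modek_p)\cap\pi(\modek)$. The only non-elementary ingredient is Lemma \ref{lem: close to two subspaces implies close to the intersection}; everything else is linear algebra, and the meaning of ``$L$ large enough'' will be: large relative to $\mathrm K$, to $r$, and to the constant $\mathrm C=\mathrm C(r,p)$ furnished by that lemma.

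For the first condition I would argue as follows. Since $\modek\notin\spanv\{\modek_1,\dots,\modek_p\}$ and $\modek_1,\dots,\modek_p$ are linearly independent (being a cluster), the vectors $\modek_1,\dots,\modek_p,\modek$ are linearly independent and all lie in $K_r$, so Lemma \ref{lem: close to two subspaces implies close to the intersection} (with $\modek$ in the role of $\modek_{p+1}$) gives $|\omega-P(\omega,W')|_2\le \mathrm C\big(|\modek\cdot\omega|+|\omega-P(\omega,W)|_2\big)$. The hypothesis $|\modek\cdot\omega|\le\mathrm K\delta$ together with $\omega\in\mathsf B_\delta(\modek_1,\dots,\modek_p)$, which via \eqref{first condition B set} yields $|\omega-P(\omega,W)|_2\le L^p\delta$, bounds the right-hand side by $\mathrm C(\mathrm K+L^p)\delta$, and this is $\le L^{p+1}\delta$ once $L$ is large. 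Hence \eqref{first condition B set} holds for $\modek_1,\dots,\modek_p,\modek$.

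For the flattening condition the key observation is purely linear-algebraic: if $\modek'_1,\dots,\modek'_{p'}\in K_r\cap\spanv\{\modek_1,\dots,\modek_p,\modek\}$ are linearly independent, then $\spanv\{\modek'_1,\dots,\modek'_{p'}\}\subseteq\spanv\{\modek_1,\dots,\modek_p,\modek\}$, and taking orthogonal complements gives $W'\subseteq V:=\pi(\modek'_1)\cap\dots\cap\pi(\modek'_{p'})$. Because $W'\subseteq V$, projecting $\omega$ first onto $V$ and then onto $W'$ and invoking Pythagoras gives $|P(\omega,V)-P(\omega,W')|_2\le|\omega-P(\omega,W')|_2$, which is exactly the quantity already estimated by $\mathrm C(\mathrm K+L^p)\delta$ in the previous paragraph. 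Now a linearly independent family inside a $(p+1)$-dimensional span has at most $p+1$ members: if $p'=p+1$ it spans all of $\spanv\{\modek_1,\dots,\modek_p,\modek\}$, so $V=W'$ and the required bound $0\le(L^{p+1}-L^{p+1})\delta$ is trivial; if $p'\le p$, then for $L$ large one has $\mathrm C(\mathrm K+L^p)\le L^{p+1}-L^p\le L^{p+1}-L^{p'}$, which is precisely the flattening inequality. Combining the two parts yields $\omega\in\mathsf B_\delta(\modek_1,\dots,\modek_p,\modek)$.

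I do not expect a genuine obstacle here; the proof is short. The one point deserving care is the inclusion $W'\subseteq V$, since it is exactly what makes the flattening condition for the \emph{enlarged} cluster essentially free once $|\omega-P(\omega,W')|_2$ has been controlled; and one must fix the threshold for $L$ only after $\mathrm K$ (and $r$, $p$) are given, because $\mathrm C=\mathrm C(r,p)$ enters all the estimates. The hypothesis that $\{\modek_1,\dots,\modek_p,\modek\}$ is itself a cluster is needed only so that $\mathsf B_\delta(\modek_1,\dots,\modek_p,\modek)$ is a well-defined object in the sense of the preceding definitions.
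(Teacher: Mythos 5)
Your proof is correct and follows essentially the same approach as the paper: both invoke Lemma \ref{lem: close to two subspaces implies close to the intersection} to bound $\bigl|\omega - P\bigl(\omega,\pi(\modek_1)\cap\dots\cap\pi(\modek_{p})\cap\pi(\modek)\bigr)\bigr|_2$ by $\mathrm C(\mathrm K+L^p)\delta$, then observe that the flattening conditions for the enlarged cluster all reduce to that single estimate because $\bigl|P(\omega,V)-P(\omega,W')\bigr|_2 \le \bigl|\omega - P(\omega,W')\bigr|_2$. The only difference is cosmetic: you supply the Pythagoras justification for this last inequality (via the inclusion $W'\subseteq V$), which the paper states without comment.
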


\begin{proof}
Let us write $\modek = \modek_{p+1}$.
Let $\omega \in  \mathsf B (\modek_1 , \dots , \modek_p)$.
By Lemma \ref{lem: close to two subspaces implies close to the intersection} and by hypothesis, it holds that 
\begin{equation*}
\big| \omega - P \big( \omega, \pi (\modek_1) \cap \dots \cap \pi (\modek_{p+1}) \big) \big|_2
\; \le \;
\mathrm C \, (\mathrm K \delta + L^p \delta) \; \le \; (L-1)L^p \delta
\end{equation*}
if $L$ is large enough. 
Then 
\begin{equation*}
\big| \omega - P \big( \omega, \pi (\modek_1) \cap \dots \cap \pi (\modek_{p+1}) \big) \big|_2
\; \le \;
L^{p+1} \delta
\end{equation*}
and, for every $\modek_1', \dots , \modek'_{p'} \in K_r \cap \spanv (\modek_1, \dots ,\modek_{p+1})$, with $p' < p+1$,
\begin{align*}
&\big| P \big( \omega, \pi (\modek_1') \cap \dots \cap \pi (\modek_{p'}') \big) - P \big( \omega, \pi (\modek_1) \cap \dots \cap \pi (\modek_{p+1}) \big) \big|_2 \\
&\le \;
\big| \omega - P \big( \omega, \pi (\modek_1) \cap \dots \cap \pi (\modek_{p+1}) \big) \big|_2 \\
\; &\le \; \big(L^{p+1} - L^{p}\big) \delta
\; \le\; \big(L^{p+1} - L^{p'}\big) \delta.
\end{align*}
This shows $\omega \in \mathsf B (\modek_1 , \dots ,\modek_p,\modek)$.
\end{proof}

\begin{proof}{Proof of Proposition \ref{pro: invariance resonant set}}
Let us start with the first claim. 
Let $\modek\in K_r$ be such that $\supp (\modek) \subset \mathrm B(x,4r)$, and let $\omega \in \R^N$ be such that $\theta_x (\omega) > 0$. 
On the one hand, from the definition \eqref{smooth indicator good set} of $\theta_x$, it holds that there exists $\omega' \in \R^N$, with $\max_x |\omega_x'| \le 2 \delta$,
such that $\omega + \omega' \notin \mathsf R(x)$. 
On another hand, since $\supp (\modek) \subset \mathrm B(x,4r)$, we conclude that $\{ \modek \}$ alone is a cluster around $x$ so that, 
if $\omega'' \in \R^N$ is such that 
\begin{equation*}
\frac{|\modek \cdot \omega''|}{|\modek|_2} \; = \; \big| \omega'' - P(\omega'',\pi (\modek)) \big| \; \le \; L \delta,
\end{equation*}
then $\omega'' \in \mathsf R(x)$.
We thus conclude that $|(\omega + \omega')\cdot \modek| > |\modek|_2 L \delta \ge L \delta$, and so that 
\begin{equation*}
|\omega \cdot \modek| \; = \; |(\omega + \omega') \cdot \modek \; - \; \omega' \cdot \modek| \; \ge \; L \delta - 2 \delta r^2 \; > \; 2\delta 
\end{equation*}
if $L$ is large enough. We conclude that $\rho_\delta (\omega \cdot \modek) = 0$. 

Let us then show the second part of the Proposition.
Since, by \eqref{definition of H tilde} and \eqref{dependence on r in change of variables}, 
the Hamiltonian $\widetilde{H}$ takes the form
\begin{equation*}
\widetilde{H} (q,\omega)
\; = \; 
\sum_{\modek \in K_r} \rho_\delta (\modek \cdot \omega) {G} (\modek,\omega) \ed^{i \modek\cdot q},
\end{equation*}
for some function $G$ on $\Z^N \times \R^N$, and since the function $\theta_{x}$ is independent of the $q-$variable, it holds that
\begin{equation*}
L_{\widetilde H_{n_1}} \theta_{x}  (q,\omega)
\; = \; 
- \nabla_q \widetilde H \cdot \nabla_\omega \theta_{x} (q,\omega)
\; = \; 
- i \sum_{\modek\in K_r}  \big( \modek \cdot \nabla_\omega \theta_{x} (\omega) \big) \rho_\delta (\modek \cdot \omega) G(\modek,\omega) \ed^{i \modek\cdot q} .
\end{equation*}
It is thus enough to show that
\begin{equation*}
\modek \cdot \nabla_\omega \theta_{x} (\omega) = 0
\;\; \text{for every} \;\;
\modek \in K_r
\;\; \text{and every}\;\;
\omega \notin \mathsf S_{n_2} (x)
\;\; \text{such that} \;\;
|\modek \cdot \omega| \le 2 \delta.
\end{equation*}
Let us thus fix $\omega\in \R^{N}$ and $\modek \in K_r$ with these restrictions.
By definition \eqref{smooth indicator good set}, 
we see that $ \modek \cdot \nabla_\omega \theta_{x} (\omega) = 0$ if, for every $\omega'\in \R^{N}$ such that $\max_x|\omega'_x| \le 4\delta$, 
it holds that 
\begin{equation*}
\omega + \omega' \in \mathsf R_{n_2} (x)
\quad \Rightarrow \quad
\omega + \omega' + t \modek \in \mathsf R_{n_2} (x)
\;\; \text{for all $t$ such that $|t|$ is small enough}.
\end{equation*}
Here, the maximal value allowed for $|t|$ may depend on $\omega$ but not on $\omega'$.
We distinguish three cases: either at least one of the cases 1 and 2  is realized, or, if none of them is realized, than case 3 is.

1. 
There exists a cluster $\{ \modek_1 , \dots , \modek_p \}$ around $x$, with $p\le n_2$, such that 
$\omega + \omega' \in \mathsf B (\modek_1 , \dots , \modek_p)$ and that 
$\modek\,\bot\,\spanv \{ \modek_1 , \dots ,\modek_p \}$.
It is then seen from the definition of $\mathsf B (\modek_1 , \dots , \modek_p)$ that, 
for every $t\in\R$, $\omega + \omega' + t \modek \in  \mathsf B (\modek_1 , \dots , \modek_p)$. 
Therefore $\omega + \omega' + t \modek \in \mathsf R_{n_2} (x)$ for every $t\in\R$.

2.
There exists a cluster $\{ \modek_1 , \dots , \modek_p \}$ around $x$, with $p\le n_2$, such that 
$\omega + \omega' \in \mathsf B (\modek_1 , \dots , \modek_p)$ and that 
$\modek\in\spanv \{ \modek_1 , \dots , \modek_p \}$.
Since $|\modek \cdot \omega | \le 2 \delta$ and since $\max_x|\omega_x| \le 4\delta$, it holds that $|\modek \cdot (\omega + \omega')| \le (4r^2 + 2) \delta$.
Then, by Lemma \ref{lem: invariance of the sets B}, for $|t|\le \delta$ we still have $\omega + \omega' + t \modek \in \mathsf B (\modek_1 , \dots , \modek_p)$
if $L$ was chosen large enough. 
Therefore $\omega + \omega' + t \modek \in \mathsf R_{n_2} (x)$ for $|t|\le \delta$. 

3. 
For any cluster $\{ \modek_1 , \dots ,\modek_p \}$ around $x$, with $p\le n_2$, such that 
$\omega + \omega' \in \mathsf B (\modek_1 , \dots , \modek_p)$, it holds that 
$\modek\,\notin\,\spanv \{ \modek_1 , \dots ,\modek_p \}$, 
and that $\modek\,\cancel{\bot}\,\spanv \{ \modek_1 , \dots , \modek_p \}$. 
Let us see that, since we assume that $\omega \notin \mathsf S_{n_2}(x)$, this case actually does not happen. 
First, for all these clusters, we should have $p=n_2$.
Indeed, otherwise $\{ \modek_1 , \dots , \modek_p, \modek \}$ would form a cluster around $x$ containing $p+1 \le n_2$ independent vectors. 
We would then conclude as in case 2 that  $|\modek \cdot (\omega + \omega')| \le (4r^2 + 2) \delta$,
so that, by Lemma \ref{lem: extension B}, 
$\omega + \omega' \in \mathsf B(\modek_1, \dots ,\modek_p, \modek)$ if $L$ has been chosen large enough. 
This would contradict the assumption ensuring that we are in case 3. 
So $p=n_2$ should hold. 
Writing $\omega'' = \omega + \omega'$, 
we should then conclude from the definition of $\mathsf B (\modek_1 , \dots , \modek_p)$ that, for $1 \le j \le n_2$, 
\begin{align*}
|\modek_j \cdot \omega''|
\; &= \;
|\modek_j|_2
\big| \omega'' - P (\omega'',\pi(\modek_j)) \big|_2
\; \le \; 
|\modek_j|_2
\big| \omega'' - P \big(\omega'',\pi(\modek_1) \cap \dots \cap \pi(\modek_{n_2}) \big) \big|_2\\
\; &\le \; 
|\modek_j|_2 L^{n_2}\delta
\end{align*}
But then 
\begin{align*}
|\modek_j \cdot \omega|
\; &= \; 
|\modek_j \cdot (\omega + \omega') - \modek_j \cdot \omega'|
\; \le \; 
|\modek_j \cdot \omega''| + |\modek_j \cdot \omega'|
\; \le \; 
|\modek_j|_2 L^{n_2}\delta + 4 r^2 \delta\\
\; &\le \; 
L^{n_2 + 1} \delta
\end{align*}
if $L$ is large enough. 
This would contradict $\omega \notin \mathsf S_{n_2}(x)$.
\end{proof}

\section{Proof of Theorem \ref{the: decomposition fo the current}: the rotor chain}\label{sec: proof of the principal theorem}

Let $a \in \Z_N$ be given by hypothesis.
Let us assume that the dynamics is generated by Hamiltonian \eqref{Hamiltonian Rotors}.

\subsection{New decomposition of the Hamiltonian}

The original decomposition of the Hamiltonian leading to the definition of the current $\epsilon J_{a,a+1}$ is given by 
\begin{equation}\label{original decomposition of the original hamiltonian}
H 
\; = \; 
H_{\le a}^{\mathrm O} + H_{>a}^{\mathrm O}
\; = \; 
\sum_{x \le a} H_x \; + \sum_{x > a} H_x. 
\end{equation}
We will now obtain a new decomposition of the Hamiltonian 
that is  equivalent to the one above from the point of view of the conductivity, 
but leading to an instantaneous current that vanishes for most of the configurations in the Gibbs state at temperature $T$.  
 
Let $n_3 \ge 1$. 
For $x\in \mathrm B(a,n_3)$, we define
\begin{align}
\vartheta_{a,x}
\; &= \;   \frac{1}{\caN}   \bigg(  
\Big(\prod_{y \in \mathrm B(a,n_3)} \theta_y\Big)\,  \delta_{a,x} \, + \, \Big(1- \prod_{y \in \mathrm B(a,n_3)} \theta_y \Big) \,  \theta_x    
\bigg), 
\label{definition of vartheta a x}\\[2mm]
\vartheta_{a,*}
\; &= \;    \frac{1}{\caN}    \prod_{y\in\mathrm B(a,n_3)}  (1 - \theta_y).
\label{definition of vartheta a *}
\end{align}
with the normalization factor 
\beq
\caN \; = \; 
\Big(\prod_{y \in \mathrm B(a,n_3)} \theta_y\Big) \, +  \,  
\Big(1-\prod_{y \in \mathrm B(a,n_3)} \theta_y\Big)\Big(\sum_{x \in \mathrm B(a,n_3)} \theta_x \Big) 
\, + \, 
\prod_{y\in\mathrm B(a,n_3)}  (1 - \theta_y) 
\eeq
chosen so that
\begin{equation*}
\sum_{x\in\mathrm B(a,n_3)} \vartheta_{a,x} \; + \; \vartheta_{a,*} \; = \; 1,
\end{equation*}
and satisfying $\caN\geq 1$.   
We then define
\begin{align}
\widetilde{H}_{\le a}
\; &= \; 
\sum_{x\in \mathrm B(a,n_3)} \vartheta_{a,x} \sum_{y \le x} \widetilde H_y
\; + \;
\vartheta_{a,*} \sum_{y \le a} \widetilde H_y, 
\label{H tilde less than a}\\
\widetilde{H}_{> a}
\; &= \; 
\sum_{x\in \mathrm B(a,n_3)} \vartheta_{a,x} \sum_{y > x} \widetilde H_y
\; + \;
\vartheta_{a,*} \sum_{y > a} \widetilde H_y.
\label{H tilde bigger than a}
\end{align}
It holds that 
\begin{equation*}
\widetilde{H} \; = \; \widetilde{H}_{\le a} + \widetilde{H}_{>a}.
\end{equation*}
By the first point of Proposition \ref{pro: change of variable}, we finally define a new decomposition
\begin{equation}\label{new decomposition of the original hamiltonian}
H 
\; = \; 
H_{\le a} + H_{>a}
\; = \; 
\mathcal T_{n_1} (R \widetilde{H}_{\le a}) + \mathcal T_{n_1} (R \widetilde{H}_{> a}).
\end{equation}

\subsection{Definition of $U_a$ and $G_a$}

With the definitions \eqref{original decomposition of the original hamiltonian} and \eqref{new decomposition of the original hamiltonian}, 
and applying the second point of Proposition \ref{pro: change of variable}, we find that
\begin{align}
\epsilon J_{a,a+1} 
\; &= \; 
L_H H_{> a}^{\mathrm O} 
\; = \;
L_H (H_{> a}^{\mathrm O}  - H_{> a}) + L_H H_{> a} 
\label{redecoupe du courant}\\
\; &= \; 
L_H (H_{> a}^{\mathrm O}  - H_{> a}) 
+
\mathcal T_{n_1} (R L_{\widetilde{H}} \widetilde{H}_{> a}) 
+ 
\epsilon^{n_1+1} L_V \sum_{k=0}^{n_1} R^{(n_1-k)} \widetilde{H}_{> a}^{(k)}.
\end{align}
Let us call $n_0$ the number $n$ appearing in the statement of the Theorem. 
We define
\begin{align}
U_a 
\; &= \; 
H_{> a}^{\mathrm O}  - H_{> a} \; - \; \langle H_{> a}^{\mathrm O}  - H_{> a} \rangle_T , 
\label{definition of U a}\\
\epsilon^{n_0+1} G_a 
\; &= \; 
\mathcal T_{n_1} (R L_{\widetilde{H}} \widetilde{H}_{> a}) 
+ 
\epsilon^{n_1+1} L_V \sum_{k=0}^{n_1} R^{(n_1-k)} \widetilde{H}_{> a}^{(k)}.
\label{definition fo G a}
\end{align}
We notice that $\langle G_a \rangle_T = 0$ since $\epsilon^{n_0 + 1} \langle G_a \rangle_T = \langle L_H H_{>a} \rangle_T = 0$, by invariance of the Gibbs state.

\subsection{Locality}
Let us show that the functions $U_a$ and $G_a$ are local, meaning that they depend only on variables indexed by $z$
with $|z - a| \le \mathrm C_{n_0}$, for some constant $\mathrm C_{n_0} < + \infty$. 
To study $U_a$ we observe that 
\begin{equation*}
H_{>a}^{\mathrm O} - H_{>a} 
\; = \;
- ( H_{\le a}^{\mathrm O} - H_{\le a}). 
\end{equation*}
Let us see that $H_{>a}^{\mathrm O} - H_{>a}$ depends only on variables indexed by $z$ with $z \ge a -  (n_3 + (n_2 + 5)r)$.
The function $H_{>a}^{\mathrm O}$ only depends on variables indexed by $z$ with $z \ge a$. 
To analyze $H_{>a}$ defined by \eqref{new decomposition of the original hamiltonian}, 
we first notice that the functions $\vartheta_{a,x}$, with $x\in \mathrm B (a,n_3)$, and $\vartheta_{a,*}$, defined by (\ref{definition of vartheta a x}-\ref{definition of vartheta a *}),
only depend on variables indexed by $z$ with $z \ge a - (n_3 + 4 r + n_2 r)$.
By \eqref{H tilde bigger than a}, the same holds true for $\widetilde{H}_{>a}$, 
since, for any $x\in \Z_N$, the functions $\widetilde H_x$ only depend on variables indexed by $z$ with $z\ge x-r$. 
By \eqref{dependence on r in change of variables}, we conclude that $R\widetilde{H}_{>a}$, and so $H_{>a}$, only depends on variables indexed by $z$ with $z \ge a - (n_3 + 4r + n_2 r + r)$. 
The same holds thus for $H_{>a}^{\mathrm O} - H_{>a}$. 
We could similarly show that $ H_{\le a}^{\mathrm O} - H_{\le a}$ only depends on variables indexed by $z$ with $z \le a +   (n_3 + (n_2 + 5)r)$. 
We conclude that $U_a$ defined by \eqref{definition of U a} only depends on variables indexed by $z$ with $|z - a| \le  (n_3 + (n_2 + 5)r)$. 

We then readily conclude that $G_a$ is local as well, since, 
going back to \eqref{redecoupe du courant}, we see that $\epsilon^{n_0+1} G_a$ is the sum of two local functions:
\begin{equation*}
\epsilon^{n_0+1} G_a 
\; = \; 
L_H H_{> a} 
\; = \; 
\epsilon J_{a,a+1} - L_H (H_{>a}^{\mathrm O} - H_{>a}).
\end{equation*}

\subsection{An expression for $L_{\widetilde{H}} \widetilde{H}_{>a}$} \label{sec: expression for commutator}

We have
\begin{align}
L_{\widetilde H}  \widetilde{H}_{>a}
\; =& \; 
\sum_{x\in \mathrm B(a,n_3)} \vartheta_{a,x} \cdot \bigg( L_{\widetilde H} \sum_{y > x} \widetilde H_y \bigg)
\; + \; 
\vartheta_{a,*} \cdot\bigg( L_{\widetilde H} \sum_{y > a} \widetilde H_y \bigg) 
\nonumber\\
& \; + \; 
\sum_{x\in \mathrm B(a,n_3)} \big( L_{\widetilde H}\vartheta_{a,x} \big) \sum_{y > x} {\widetilde H}_y
\; + \;
\big( L_{\widetilde H} \vartheta_{a,*} \big) \sum_{y > a} \widetilde{H}_y .
\label{first expression for the main commutator}
\end{align}
Let us show that the terms in the first sum in the right hand side vanish, i.e.\
\begin{equation}\label{crucial cancellation}
\vartheta_{a,x} \cdot\bigg( L_{\widetilde H} \sum_{y > x} \widetilde H_y \bigg) \; = \; 0
\quad \text{for all} \quad x\in\mathrm B(a,n_3).
\end{equation}
Thanks to the presence of the operator $\mathcal R$ in \eqref{definition of H tilde}, and thanks to \eqref{dependence on r in change of variables}, 
we decompose $\widetilde H=\sum_x \widetilde H_x$ where $\widetilde H_x$ takes the form
\begin{align} \label{eq: splitting D and V}
\widetilde{H}_x (q,\omega)
\; &= \;
\sum_{\substack{\modek \in K_r: \\ \supp \modek \subset \mathrm B(x,r)}} \rho_\delta (\modek \cdot \omega) \widetilde{V}_x(\modek, \omega) \ed^{i\modek\cdot q} \nonumber\\
\; &= \;
 \widetilde D_x (\omega)  \; + \;  \sum_{\substack{\modek\in K_r: \modek\ne 0 \\ \supp \modek \subset \mathrm B(x,r)}} \rho_\delta (\modek \cdot \omega) \widetilde{V}_x(\modek, \omega) \ed^{i\modek\cdot q}, 
\end{align}
where we have singled out the $\modek=0$ mode by setting $ \widetilde D_x (\omega) = \widetilde{V}_x(0, \omega)$. We note that
 $\widetilde D_x$  depends only on variables indexed by $z$ with $z \in \mathrm B(x,r)$. 

Using the more handy notation $L_f g = \{ f,g\}$, we compute
\begin{align}
L_{\widetilde H} \sum_{y > x} \widetilde H_y
\; = & \; 
\Big\{ 
\sum_{z \le x} \widetilde H_z , \sum_{y > x} \widetilde H_y
\Big\}   \; = \; 
 \sum_{z \le x < y: | z-y | \le 2r} \Big\{ 
 \widetilde H_z ,  \widetilde H_y
\Big\}  \label{eq: towards crucial cancellation}
\end{align} 
where we have used the fact  that the Poisson bracket of two functions depending on variables indexed by points belonging to different, non-intersecting, subsets of $\Z_N$, vanishes.    Relation \eqref{crucial cancellation} surely holds if $\omega$ is such that $\vartheta_{a,x} (\omega) = 0$. 
Let us thus take $\omega$ such that $\vartheta_{a,x} (\omega) > 0$, which implies $\theta_x (\omega) > 0$.
It then follows from the first point of Proposition \ref{pro: invariance resonant set} 
that all the factors $\rho_\delta (\modek \cdot \omega)$ appearing when the operators $ \widetilde H_z ,  \widetilde H_y$ in \eqref{eq: towards crucial cancellation}   are written as in \eqref{eq: splitting D and V}, vanish.  Therefore,  when  $\vartheta_{a,x} (\omega) > 0$, the expression \eqref{eq: towards crucial cancellation} equals
\begin{equation*}
\sum_{z \le x < y: \str z-y \str \le 2r} \Big\{ 
\widetilde D_z ,  \widetilde D_y
\Big\} 
\end{equation*}
However, since $\widetilde D_z, \widetilde D_y$ depend only on the $\om$-variables, their Poisson bracket vanishes.
This establishes \eqref{crucial cancellation}.

Thanks to \eqref{crucial cancellation}, the Poisson bracket \eqref{first expression for the main commutator} is rewritten as
\begin{multline}\label{expression for L H tilde after cancellation}
L_{\widetilde H}  \widetilde{H}_{>a}
\; = \;
\vartheta_{a,*} \,
\Big\{
\sum_{a-r \le x \le a} \widetilde{H}_x \, , \sum_{a < y \le a+r} \widetilde H_y
\Big\} \\
\; + \; 
\sum_{x\in \mathrm B(a,n_3)} \big( L_{\widetilde H}\vartheta_{a,x} \big) \sum_{x < y \le a + n_3} {\widetilde H}_y
\; + \;
\big( L_{\widetilde H} \vartheta_{a,*} \big) \sum_{a < y \le a+ n_3} \widetilde{H}_y .
\end{multline}

\subsection{Definition of an exceptional set $\mathsf Z\subset \Omega$}
  
Let $\mathsf Z  \subset \Omega$ be such that $(\omega ,q) \in \mathsf Z$ if and only if
theres exists $n_2$ linearly independent vectors $\modek_1, \dots , \modek_{n_2} \in K_r$ such that $( \cup_j \supp (\modek_j) ) \subset \mathrm B (a,2n_3)$
and such that $|\modek_j \cdot \omega | \le L^{n_2 + 1} \delta$ for $1\le j \le n_2$. 
The set $\mathsf Z$ is closed. 

\begin{Lemma}\label{Lem: exceptional set Z}
If, given $n_1$ and $n_2$, the numbers $L$ and $n_3$ have been taken large enough, then
\begin{enumerate}
\item
If $L_{\widetilde H}  \widetilde{H}_{>a} (\omega , q) \ne 0$, then $(\omega, q) \in \mathsf Z$. 
\item
There exists $\mathrm C = \mathrm C(L,n_1,n_2,n_3) < + \infty$ such that 
$\langle \chi_\mathsf Z \rangle_T  \le \mathrm C \delta^{n_2}$.
\end{enumerate}
\end{Lemma}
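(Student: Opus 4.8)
The plan is to handle the two assertions by separate arguments: the first from the structural formula \eqref{expression for L H tilde after cancellation}, the second from the fact that, for the rotor Hamiltonian \eqref{Hamiltonian Rotors}, the $\omega$-marginal of the Gibbs state is a standard Gaussian. Throughout, $n_1$ and hence $r = r(n_1)$ are fixed, $n_2$ is fixed, and we are free to enlarge $L$ and then $n_3$; besides the constraints already needed in Proposition \ref{pro: invariance resonant set}, we will impose only a handful of new inequalities of the form $r^2 L^{n_2} + \mathrm C(r) \le L^{n_2+1}$ and $n_3 \ge \mathrm C(n_2,r)\, n_2$, all satisfied for $L,n_3$ large.

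For the first assertion I would read off from \eqref{expression for L H tilde after cancellation} that $L_{\widetilde H}\widetilde H_{>a}$ is a sum of three pieces and that nonvanishing at $(\omega,q)$ forces one of: \emph{(a)} $\vartheta_{a,*}(\omega)\neq 0$; \emph{(b)} $(L_{\widetilde H}\vartheta_{a,x})(q,\omega)\neq 0$ for some $x\in\mathrm B(a,n_3)$; \emph{(c)} $(L_{\widetilde H}\vartheta_{a,*})(q,\omega)\neq 0$. In case (a), since $\caN\ge 1$, \eqref{definition of vartheta a *} gives $\theta_y(\omega)<1$ for \emph{every} $y\in\mathrm B(a,n_3)$, so unwinding the convolution in \eqref{smooth indicator good set} produces, for each such $y$, a point $\omega'$ with $\max_z|\omega'_z|\le 2\delta$ and $\omega+\omega'\in\mathsf R_{\delta,n_2}(y)$, hence a cluster $\{\modek^y_1,\dots,\modek^y_{p_y}\}$ around $y$ with $p_y\le n_2$ satisfying \eqref{first condition B set}; combined with $|\modek|_2\le r^2$ and $\max_z|\omega'_z|\le 2\delta$ this yields $|\modek^y_j\cdot\omega|\le L^{n_2+1}\delta$ for all vectors of the cluster once $L$ is large. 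In cases (b) and (c), $\vartheta_{a,x}$ and $\vartheta_{a,*}$ are smooth functions of the finitely many $\theta_y$, $y\in\mathrm B(a,n_3)$, so by the chain rule $L_{\widetilde H}\theta_y\neq 0$ at $(q,\omega)$ for some such $y$, and the second point of Proposition \ref{pro: invariance resonant set} places $\omega$ in $\mathsf S_{n_2}(y)$: a cluster $\{\modek_1,\dots,\modek_{n_2}\}$ around $y$ with $|\modek_j\cdot\omega|\le L^{n_2+1}\delta$.

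It then remains to extract, in each case, $n_2$ \emph{linearly independent} vectors of $K_r$, all obeying $|\modek\cdot\omega|\le L^{n_2+1}\delta$ and with union of supports inside $\mathrm B(a,2n_3)$. A cluster of $p\le n_2$ vectors around $y$ is linearly independent by definition and, being chained through support overlaps from a vector supported in $\mathrm B(y,4r)$, has support within a ball about $y$ of radius $\mathrm C(n_2,r)$; for $y\in\mathrm B(a,n_3)$ and $n_3$ large this ball lies in $\mathrm B(a,2n_3)$. In cases (b) and (c) the single $n_2$-vector cluster therefore already does the job. In case (a) I would instead pick $y_1,\dots,y_{n_2}\in\mathrm B(a,n_3)$ pairwise farther apart than $2\mathrm C(n_2,r)$ — possible for $n_3$ large relative to $n_2$ and $r$ — so that the associated clusters have pairwise disjoint supports, and take one vector out of each; these $n_2$ vectors are linearly independent, obey the resonance bound, and have support in $\mathrm B(a,2n_3)$. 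In all cases $(\omega,q)\in\mathsf Z$. For the second assertion, observe that a vector of $K_r$ supported in $\mathrm B(a,2n_3)$ is specified by its at most $4n_3+1$ entries in $\{-r,\dots,r\}$, so the number of $n_2$-tuples entering the definition of $\mathsf Z$ is bounded by some $\mathrm C(r,n_2,n_3)$ independent of $N$; by a union bound it suffices to bound $\langle\chi\{|\modek_j\cdot\omega|\le L^{n_2+1}\delta,\ j=1,\dots,n_2\}\rangle_T$ for one fixed linearly independent tuple. Since the potential $V$ in \eqref{Hamiltonian Rotors} depends on $q$ only, the $\omega$-marginal of the Gibbs state factorizes into $N$ copies of $\mathcal N(0,T)$, so $(\modek_1\cdot\omega,\dots,\modek_{n_2}\cdot\omega)$ is a centered Gaussian on $\R^{n_2}$ with covariance $T$ times the Gram matrix $G=(\modek_i\cdot\modek_j)_{ij}$, and $\det G\ge 1$ since the $\modek_i$ are linearly independent integer vectors; hence its density is bounded by $(2\pi T)^{-n_2/2}$ and the probability of the box $[-L^{n_2+1}\delta,L^{n_2+1}\delta]^{n_2}$ is at most $(2L^{n_2+1}\delta)^{n_2}(2\pi T)^{-n_2/2}$. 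Multiplying by the number of tuples gives $\langle\chi_{\mathsf Z}\rangle_T\le \mathrm C\,\delta^{n_2}$ with $\mathrm C=\mathrm C(L,n_1,n_2,n_3)$, the $n_1$-dependence entering only through $r=r(n_1)$.

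The one genuinely delicate point is the combinatorial step in case (a): converting the lone fact that $\theta_y(\omega)<1$ throughout the block $\mathrm B(a,n_3)$ into $n_2$ truly independent resonances forces one to spread the chosen sites apart enough for their resonant clusters to be support-disjoint, which is exactly why $n_3$ has to be taken large in terms of $n_1$ and $n_2$; everything else — the resonance-bound upgrades controlled by large $L$, the locality counting that puts supports inside $\mathrm B(a,2n_3)$, and the Gaussian small-box estimate — is routine.
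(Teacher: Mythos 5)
Your proof is correct and mirrors the paper's own argument: the same three-way case analysis from \eqref{expression for L H tilde after cancellation}, the same use of Proposition \ref{pro: invariance resonant set} to extract resonance vectors (either from $\theta_y<1$ throughout $\mathrm B(a,n_3)$ or from $L_{\widetilde H}\theta_y\neq0$ at some $y$), the same spacing-out-of-sites device to secure $n_2$ linearly independent vectors supported in $\mathrm B(a,2n_3)$, and the same product-Gaussian factorization of the $\omega$-marginal for the measure estimate. You are in fact somewhat more careful in two spots that the paper leaves terse: you demand separation $>2\mathrm C(n_2,r)$ between the sites $y_j$ (the published text only asks $|x-x'|>4r$, which by itself does not force disjoint supports of the extracted vectors), and you make explicit the union bound over the $\mathrm C(r,n_2,n_3)$ admissible $n_2$-tuples together with the Gram-determinant $\geq 1$ density bound, which the paper compresses into the phrase ``a straightforward computation.''
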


\begin{proof}
Let us start with the first point.
From \eqref{expression for L H tilde after cancellation}, we conclude that, 
if $L_{\widetilde H}  \widetilde{H}_{>a} (\omega , q) \ne 0$, 
then at least one of the following quantities needs to be non-zero: 
$\vartheta_{a,*}  (\omega)$, 
or $L_{\widetilde H} \vartheta_{a,*}(\omega,q)$, 
or $L_{\widetilde H}\vartheta_{a,x} (\omega,q)$ for some $x\in \mathrm B(a,n_3)$. 
In fact, since $0 \le \prod_{x\in\mathrm B (a,n_3)} (1 - \theta_x) \le 1$, and since $L_{\widetilde H}$ is a differential operator, 
$L_{\widetilde H} \prod_{x\in\mathrm B (a,n_3)} (1 - \theta_x) (\omega,q) \ne 0$ implies $\prod_{x\in\mathrm B (a,n_3)} (1 - \theta_x)  (\omega) \ne 0$.
Therefore, by inspection of the definitions \eqref{definition of vartheta a x} and \eqref{definition of vartheta a *}, the condition $L_{\widetilde H}  \widetilde{H}_{>a} (\omega , q) \ne 0$ implies actually 
\begin{equation*}
\prod_{x\in\mathrm B (a,n_3)} (1 - \theta_x) \ne 0 
\qquad \text{or} \qquad 
L_{\widetilde H}\theta_{x} (\omega,q) \ne 0 
\quad \text{for some} \quad  
x\in \mathrm B(a,n_3). 
\end{equation*}

If $\prod_{x\in\mathrm B (a,n_3)} (1 - \theta_x) \ne 0 $, then $\theta_x (\omega) < 1$ for all $x \in \mathrm B(a,n_3)$. 
If $\theta_x (\omega) < 1$, there exists then, by the definition \eqref{smooth indicator good set}, some $\omega' \in \R^N$ such that $\max_y |\omega_y'| \le 2 \delta$, 
and such that $\omega'' = \omega + \omega' \in \mathsf R (x)$.
There exists therefore a cluster $\{\modek_1, \dots , \modek_p \}$ around $x$, with $p \leq n_2$, such that \eqref{first condition B set} holds. 
This implies 
\begin{align*}
|\modek_1 \cdot \omega''|
\; &= \;
|\modek_1|_2
\big| \omega'' - P (\omega'',\pi(\modek_1)) \big|_2
\; \le \; 
|\modek_1|_2
\big| \omega'' - P \big(\omega'',\pi(\modek_1) \cap \dots \cap \pi(\modek_{p}) \big) \big|_2\\
\; &\le \; 
|\modek_1|_2
L^{p}\delta
\end{align*}
and therefore $|\omega \cdot \modek_1| \le |\modek_1|_2 L^{p}\delta + r^2 \delta \le L^{n_2+1} \delta$ if $L$ is large enough and using that $p \leq n_2$. 
It holds by definition of a cluster around $x$ that $\supp(\modek_1) \subset \mathrm B(x,4r)$.   Let us now take another $x'$ such that $\theta_{x'} >0$ and $\str x-x'\str >4r$. Then the same reasoning gives a vector $\modek'_1 \neq \modek_1$ satisfying again  $|\omega \cdot \modek'_1|  \le L^{n+1} \delta$.   By taking   $n_3$ large enough, we can find $n_2$ linearly independent vectors and thus  guarantee that $\omega\in \mathsf Z$.

Suppose now that $L_{\widetilde H}\theta_{x} (\omega,q) \ne 0$ for some $x\in \mathrm B(a,n_3)$.  
It then follows from the second assertion of Proposition \ref{pro: invariance resonant set} that $\omega \in \mathsf S(x)$, 
so that, by definition, there exists a cluster $\{ \modek_1, \dots , \modek_{n_2} \}$ around $x$ such that $|\omega \cdot k_j| \le L^{n_2 + 1} \delta$. 
This implies $\omega \in \mathsf Z$.

We now move to the second claim of the Lemma.
Since the function $\chi_\mathsf{Z}$ depends only on the variables $\omega_x$ with $x \in \mathrm B(a,2n_3)$, 
and since the Gibbs measure factorizes with respect to the variables $\omega_y$ ($y\in \Z_N$), we get
\begin{equation*}
\langle \chi_{\mathsf{Z}} \rangle_T \; = \; 
\frac{\int \chi_{\mathsf Z} (\omega) \prod_{x\in \mathsf B(a, 2n_3)}  \ed^{- \omega_x^2/T} \, \dd \omega_x }{\int \prod_{x\in \mathsf B(a, 2n_3)}  \ed^{- \omega_x^2/T} \, \dd \omega_x}
\; \le \; 
\mathrm C (n_3) \int \chi_\mathsf Z (\omega)  \prod_{x\in \mathsf B(a, 2n_3)} \ed^{- \omega_x^2/T}  \, \dd \omega_x.
\end{equation*}
The result follows by a straightforward computation that exploits that the set $\mathsf Z$ is determined by $n_2$ constraints. 
\end{proof}

\subsection{Bounds on the norms of $U_a$ and $G_a$}

Let $\partial_\sharp$ be the derivative with respect to any $\omega_z$ or $q_z$ with $z\in \Z_N$, or even no derivative at all ($\partial_\sharp f = f$).
To lighten some notations, we again use the shorthand
$g \sim \delta^{-n}$ with the same meaning as in the proof of \eqref{H depends on delta} and \eqref{R depends on delta}.
We now allow $\delta$ to depend on $\epsilon$, and we fix the values of $n_1$ and $n_2$. 

Let us first obtain the bounds  $ \langle U^2_a \rangle_T \le \mathrm C_{n_0}\ep^{1/4}, \langle (\partial_\sharp U_a)^2 \rangle_T \le \mathrm C_{n_0}\ep^{-1/4}$ 
(recall that  we denote by $n_0$ the number $n$ appearing in Theorem \ref{the: decomposition fo the current}). 
We use \eqref{H depends on delta} and \eqref{R depends on delta}, 
and the fact that the functions $\vartheta_{a,x},\vartheta_{a,*}$ are bounded, to see that the local function $H^\mathrm O_{> a} - H_{> a}$ takes the form 
\begin{align}
H^\mathrm O_{> a} - H_{> a}
\; &= \; \sum_{n=0}^{n_1} \ep^{n} (H^\mathrm O_{> a} - H_{> a})^{(n)} \label{decomposition  H original - H in powers of epsilon} \\
\; &= \; 
H^\mathrm O_{> a} - \mathcal T_{n_1} \big( R \widetilde H_{>a}\big)
\; = \;
H^\mathrm O_{> a} - \sum_{n=0}^{n_1} \epsilon^n \sum_{k=0}^n R^{(n-k)} \widetilde H^{(k)}_{> a}
\; \sim \;
\sum_{n=0}^{n_1} \epsilon^{n} \delta^{-2n} \label{decomposition  H original - H differently}
\end{align}
with the dominant contribution for each $n$ coming from the $k=0$ term. 
Let us fix $\delta = \epsilon^{1/4}$. 
Let us start with the term corresponding to $n=0$ in \eqref{decomposition  H original - H in powers of epsilon}. 
From \eqref{decomposition  H original - H differently} we have
\begin{align*}
(H^\mathrm O_{> a} - H_{> a})^{(0)} & = 
\sum_{x >a} D_x -  \tilde H^{(0)}_{>a}  \\
 &=   \sum_{x >a} D_x -   \sum_{x\in \mathrm B(a,n_3)}  \vartheta_{a,x}  \sum_{y >x}  D_y -  \vartheta_{a,*} \sum_{y > a} D_y
\end{align*}
Let  $\mathsf W \subset \Omega$ be the set containing all $(\om,q)$ such that $\theta_x(\omega) <1$ for some $x\in \mathrm B(a,n_3)$. 
By inspection of the definitions \eqref{definition of vartheta a x} and \eqref{definition of vartheta a *}, we have
\begin{equation*}
\vartheta_{a,x}(\om)=\delta_{a,x}, \quad  \vartheta_{a,*}(\om)=0, \quad  (H^\mathrm O_{> a} - H_{> a})^{(0)}(\om,q) =0, \quad \text{for}\, \,  (\om,q) \in \Omega \setminus \mathsf W.
\end{equation*}
Therefore $(H^\mathrm O_{> a} - H_{> a})^{(0)} = \chi_{\mathsf W} \cdot (H^\mathrm O_{> a} - H_{> a})^{(0)} $.
So, arguing as in the proof of Lemma \ref{Lem: exceptional set Z}, we find that, 
since $\chi_{\mathsf W} \cdot (H^\mathrm O_{> a} - H_{> a})^{(0)} $ depends ony on the variables $\omega_x$, with $|x-a| \le \mathrm C $,
\begin{align}
\left\langle \left( (H^\mathrm O_{> a} - H_{> a})^{(0)} \right)^2 \right\rangle_T
\; &\le \; 
\mathrm C  \int \chi_{\mathsf W} (\omega)  \left( (H^\mathrm O_{> a} - H_{> a})^{(0)} \right)^2 (\omega) \prod_{x:|x-a| \le \mathrm C} \ed^{-\omega_x^2/T} \,\dd \omega_x 
\nonumber\\
\; &\le \; 
\mathrm C' \int \chi_{\mathsf W} (\omega)   \prod_{x:|x-a| \le \mathrm C} \ed^{-\omega_x^2/2T} \,\dd \omega_x
\; \le \; 
\mathrm C'' \delta
\; = \; 
\mathrm C'' \epsilon^{1/4}, \label{the first bound obtained in the 5.6 proof}
\end{align}
where, to get the last inequality, we used that, for any  $(\omega,q) \in \mathsf W$, 
there is at least  one $\modek \in K_r$, with $\supp(\modek) \subset \mathrm B(a,n_3)$, such that  $|\omega \cdot \modek| \le  L^{n_2+1} \delta$. 
Similarly, since $\partial_\sharp \vartheta_{a,x}, \partial_\sharp \vartheta_{a,*} \sim \delta^{-1}$, we have
\begin{align}
\left\langle \left( \partial_\sharp (H^\mathrm O_{> a} - H_{> a})^{(0)} \right)^2 \right\rangle_T
\; &\le \; 
\mathrm C  \int \chi_{\mathsf W} (\omega)  \left( \partial_\sharp (H^\mathrm O_{> a} - H_{> a})^{(0)} \right)^2 (\omega) \prod_{x:|x-a| \le \mathrm C} \ed^{-\omega_x^2/T} \,\dd \omega_x 
\nonumber\\
\;& \le \; 
\mathrm C \frac{\delta}{\delta^2}  \; \le \; \mathrm C \epsilon^{-1/4}.  \label{similar to first bound in the 5.6 proof}
\end{align}
Next, we conclude from \eqref{polynomial bounds} and \eqref{decomposition  H original - H differently} that for $1 \le n \le n_1$, we have
\begin{equation} \label{the second bound obtained in the 5.6 proof}
\left\langle \left( \epsilon^n \partial_\sharp ( H^\mathrm O_{> a} - H_{> a})^{(n)}  \right)^2 \right\rangle_T 
\; \le \;  \mathrm C \left( \epsilon^n \delta^{-(2n+1)}\right)^2 \; \le \; \mathrm C \epsilon^{1/2}.
\end{equation}
Using (\ref{the first bound obtained in the 5.6 proof}-\ref{the second bound obtained in the 5.6 proof}) in \eqref{decomposition  H original - H in powers of epsilon}, we deduce
\begin{equation*}
\langle  (H^\mathrm O_{> a} - H_{> a})^2 \rangle_T  
\; \le \; 
\mathrm C \epsilon^{1/4},\qquad  \langle (\partial_\sharp (H^\mathrm O_{> a} - H_{> a}))^2 \rangle_T  \; \le \; \mathrm C \epsilon^{-1/4},
\end{equation*}
from which the claimed bounds on $U_a$ follow.

Next, to obtain the bound $\langle (\partial_\sharp G_a)^2 \rangle_T \le \mathrm C_{n_0}$, we start from the definition \eqref{definition fo G a} and we note that both terms in this definition are local;
for $\mathcal T_{n_1} (R L_{\widetilde{H}} \widetilde{H}_{> a}) $ this follows from the explicit expression in Section \ref{sec: expression for commutator} 
and the other term is then local as a difference of local terms. 
We compute
\begin{equation}\label{second brol preuve theoreme prinicpal}
\langle (\partial_\sharp G_a)^2 \rangle_T 
\; \le \; 
2\epsilon^{-2 (n_0 + 1)} \big\langle \big( \partial_\sharp \mathcal T_{n_1} (R L_{\widetilde{H}} \widetilde{H}_{> a})  \big)^2 \big\rangle_T
\; + \; 
2\epsilon^{2 (n_1 - n_0)} \Big\langle \Big(  \partial_\sharp L_V \sum_{k=0}^{n_1} R^{(n_1-k)} \widetilde{H}_{> a}^{(k)}  \Big)^2 \Big\rangle_T .
\end{equation}

We look first at the second term and conclude, by means of \eqref{H depends on delta} and \eqref{R depends on delta} that it is of the form 
\begin{equation*}
\partial_\sharp L_V \sum_{k=0}^{n_1} R^{(n_1-k)} \widetilde{H}_{> a}^{(k)}
\; \sim \; 
\delta^{-2n_1 - 2}.  
\end{equation*}
We thus obtain, using locality, 
\begin{equation*}
\epsilon^{2 (n_1 - n_0)} \Big\langle \Big(\partial_\sharp  L_V \sum_{k=0}^{n_1} R^{(n_1-k)} \widetilde{H}_{> a}^{(k)}  \Big)^2 \Big\rangle_T 
\; \le \; \mathrm C_{n_0}
\qquad \text{if} \qquad 
\delta = \epsilon^{1/4} 
\quad \text{and} \quad
n_1=2n_0+1.
\end{equation*}

We then analyze the first term in \eqref{second brol preuve theoreme prinicpal}.
By the first point of Lemma \ref{Lem: exceptional set Z}, the function $L_{\widetilde{H}} \widetilde{H}_{> a}$ vanishes on the open set $\Omega \setminus \mathsf Z$, 
so that $ \partial_\sharp \mathcal T_{n_1} (R L_{\widetilde{H}} \widetilde{H}_{> a}) $ vanishes on this set as well. 
Using Cauchy-Schwarz inequality, and the second point of Lemma \ref{Lem: exceptional set Z}, we conclude that 
\begin{equation*}
\big\langle \big( \partial_\sharp \mathcal T_{n_1} (R L_{\widetilde{H}} \widetilde{H}_{> a})  \big)^2 \big\rangle_T
\; = \;
\big\langle \big( \partial_\sharp \mathcal T_{n_1} (R L_{\widetilde{H}} \widetilde{H}_{> a})  \big)^2 \chi_\mathsf Z \big\rangle_T
\; \le \;
\mathrm C \delta^{n_2/2} \big\langle \big( \partial_\sharp \mathcal T_{n_1} (R L_{\widetilde{H}} \widetilde{H}_{> a})  \big)^4 \big\rangle_{T}^{1/2}.
\end{equation*}
Using (\ref{H depends on delta}, \ref{R depends on delta}), we find that
\begin{equation*}
\partial_\sharp\mathcal T_{n_1} (R L_{\widetilde{H}} \widetilde{H}_{> a}) 
\; = \; 
\partial_\sharp \sum_{n=0}^{n_1} \epsilon^n \sum_{m=0}^n R^{(n-m)} \sum_{j=0}^m \big\{ \widetilde H^{(m-j)} , \widetilde H_{>a}^{(j)} \big\}
\; \sim \; 
\sum_{n=1}^{n_1} \epsilon^n \delta^{-2n} 
\; \sim \; \delta^{2}, 
\end{equation*}
where we have used the fact that $\{ \widetilde H^{(0)}, \widetilde H_{>a}^{(0)} \} = 0$ to start the last sum at $n=1$. 
We conclude that $\delta = \epsilon^{1/4}$ guarantees that 
\begin{equation*}
\big\langle \big( \partial_\sharp \mathcal T_{n_1} (R L_{\widetilde{H}} \widetilde{H}_{> a})  \big)^2 \big\rangle_T
\; \le \; 
\mathrm C \delta^{n_2/2+4}  \; \le \;  \mathrm C \epsilon^{n_2/8+1}.
\end{equation*}
taking $n_2 = 16n_0 +8$, we conclude from \eqref{second brol preuve theoreme prinicpal} that  $\langle (\partial_\sharp G_a)^2 \rangle_T \le \mathrm C_{n_0}$. 
This shows Theorem  \ref{the: decomposition fo the current} in the case where $H$ is given by \eqref{Hamiltonian Rotors}.

\section{Proof of Theorem \ref{the: decomposition fo the current} for the NLS chain}\label{sec: adaptation to nls}

The proof of the theorem given for rotors can readily be taken over to the non-linear Schr\"odinger chain by a adapting to this case 
the definition of the space $\mathcal S (\Omega)$ given in Section \ref{sec: approximate change of variables}.

We simply view functions on $\Omega = \C^N$ as functions on $\R^{2N}$. 
We use the notation $\omega = (\omega_x)_{x\in \Z_N} = (|\psi_x|^2)_{x\in\Z_N}$, and we observe that now $\omega \in (\R_+)^N$.
We will say that $f\in \mathcal S(\Omega)$ if the following conditions are met for some $r(f)<+\infty$:
We first ask, as before, that $f$ is smooth, 
that $f$ and all its derivatives are of polynomial growth
and that $f$ can be expressed as a sum of local terms, each of which is indexed by $z$ in a ball or radius $r(f)$.
We next replace \eqref{f finite number of Fourier modes} by the requirement that $f$ takes the form
\begin{equation*}
f (\psi) 
\; = \;
\sum_{\modek \in \Z_N} \widehat{f}(\modek,\omega) \psi^{\modek}
\qquad \text{with} \qquad
\widehat{f} (\modek,\omega) \; = \; 0 
\quad \text{if} \quad \max_x |\modek_x| \;\ge\; r(f),
\end{equation*}
where we have used the notation
\begin{equation*}
\psi^\modek
\; = \; 
\prod_{x\in\Z_N} \widetilde{\psi}^{\modek_x}_x
\quad \text{with} \quad 
\widetilde{\psi}^{\modek_x}_x
\; = \;
\left\{
\begin{array}{lll}
\psi_x^{\modek_x} & \text{if} & \modek_x > 0, \\
1 &\text{if} & \modek_x = 0,\\
\overline{\psi}_x^{\modek_x} & \text{if} & \modek_x < 0.
\end{array}
\right. 
\end{equation*}
It is then checked that the formulas involving explicitly the decomposition of $f$ in its Fourier components 
is simply transposed by means of the substitution $\ed^{i \modek \cdot q} \mapsto \psi^\modek$.

All the definitions and conclusions of Section \ref{sec: approximate change of variables} remain valid,
whereas the third claim of Proposition \ref{pro: change of variable} can be omitted since it is only used to prove Theorem \ref{the: noisy dynamics} stated for the rotor chain only. 
A word of comment on the derivation of the equivalent of (\ref{H depends on delta},\ref{R depends on delta}) may be useful.  
The notation $g \sim \delta^{-n}$ is now understood to mean that $g$ is a sum of terms of the form
$\delta^{-n} b(\omega/\delta,\delta)f(\psi,\bar\psi)$ with $b$ being exactly as in Section \ref{sec: approximate change of variables} and $f \in \mathcal S (\Omega)$.   
The definition of $\mathcal R$ and of the solution of the equation $L_D u=f$ is  again simply taken over from Section \ref{sec: approximate change of variables}. 
Let us show that, as before, it holds that if $g \sim \delta^{-n}$, $h \sim \delta^{-m}$, then 
\begin{equation}\label{Lgh relation for DNLS}
L_g h \sim  \delta^{-m-n-1}.
\end{equation}
Writing a term in $g(\psi,\overline{\psi})$ as $\delta^{-n}b(\omega/\delta,\delta)f(\psi,\overline{\psi})$, 
and a term in $h(\psi,\overline{\psi})$ as $\delta^{-m}b'(\omega/\delta,\delta)f'(\psi,\overline{\psi})$, we find that 
\begin{equation*}
L_{bf} (b'f') \; = \; (L_b b') ff' + (L_bf')fb' + (L_fb')bf' + (L_f f') bb'. 
\end{equation*}
As $b$ and $b'$ are functions of the variable $\omega = (|\psi_x|^2)_{x\in \Z_N}$ alone, it holds that $L_b b' = 0$, and \eqref{Lgh relation for DNLS} follows.
The relations (\ref{H depends on delta},\ref{R depends on delta}) are then derived as in Section \ref{sec: approximate change of variables}.

The results of Section \ref{sec: resonant frequencies} can also be taken over without change. 
It is observed that the functions $\theta_x$, $x\in \Z_N$, can still be defined for $\omega \in \R^N$, 
even though only their restriction to $(\R_+)^N$ is needed. 

The arguments of Section \ref{sec: proof of the principal theorem} remain all valid as well, 
except for the claims relying on the fact that the Gibbs measure factorizes with respect to the variables $\omega_x$ ($x\in \Z_N$).
This includes the proof of the second claim of Lemma \ref{Lem: exceptional set Z}, 
the argument leading to (\ref{the first bound obtained in the 5.6 proof},\ref{similar to first bound in the 5.6 proof}), 
and the property that the polynomials that depend only on some fixed variables, are integrable with respect to the Gibbs measure, with bounds that do not depend on the volume.
We now prove that $\langle \chi_{\mathsf Z} \rangle_T \le \mathrm C \delta^{n_2}$ for the non-linear Schr\"odinger chain as well 
(the other properties are shown in the same way).  
Let us assume $T=1$ for simplicity of notation. 

We have
\begin{equation}\label{expression chi Z NLS chain}
\langle \chi_{\mathsf Z} \rangle_1
\; = \;
\frac{\int \chi_\mathsf Z \ed^{- H(\psi)} \, \dd \psi}{ \int  \ed^{- H(\psi)} \, \dd \psi}, 
\end{equation}
with $\dd \psi = \dd \Re(\psi) \dd \Im(\psi)$. 
The function $\chi_\mathsf Z$ only depends on the variables indexed by $x\in\mathrm B(a,2n_3)$,
and it is thus natural to factorize the integrals into three pieces. 
For the numerator, we simply use the fact that $\ed^{-\phi} \le 1$ for all $\phi\ge 0$, to obtain that 
\begin{multline*}
\int \chi_\mathsf Z \ed^{- H(\psi)} \, \dd \psi
\; \le \; \\
\int \prod_{x < a - 2n_3} \ed^{-H'_x (\psi)} \, \dd \psi_x
\, \cdot \,
\int \chi_\mathsf Z \prod_{x\in \mathrm B(a,2n_3)} \ed^{-H'_x (\psi)} \, \dd \psi_x
\, \cdot \,
\int \prod_{x> a + 2n_3} \ed^{-H'_x (\psi)} \, \dd \psi_x,
\end{multline*}  
where $H_x'$ differs from $H_x$ only for the boundary terms $x=a-2n_3-1$ and $x=a+2n_3$ : 
\begin{align*}
H'_{a - 2n_3 - 1} \; &= \; H_{a-2n_3-1} - \epsilon |\psi_{a-2n_3} - \psi_{a-2n_3 -1}|^2,\\
H'_{a + 2n_3} \; &=\; H_{a+2n_3} - \epsilon |\psi_{a+2n_3+1} - \psi_{a+2n_3}|^2.
\end{align*}
These definitions are chosen such that $H'_{a - 2n_3 - 1}$ does not depend on $\psi_{a - 2n_3 }$ and $H'_{a + 2n_3}$ does not depend on $\psi_{a + 2n_3+1}$.
The middle integral is estimated as in the case of rotors:
\begin{equation*}
\int \chi_\mathsf Z \prod_{x\in \mathrm B(a,2n_3)} \ed^{-H'_x (\psi)} \, \dd \psi_x
\; \le \; 
\int \chi_\mathsf Z \prod_{x\in \mathrm B(a,2n_3)}  \ed^{-\frac{1}{2}|\psi_x|^4} \, \dd \psi_x 
\; \le \; 
\mathrm C \, \delta^{n_2}.
\end{equation*}
We then similarly factorize the numerator in \eqref{expression chi Z NLS chain}, using this time the bound
$|\psi_x - \psi_{x+1}|^2 \le 2 (|\psi_x|^2 + |\psi_{x+1}|^2)$ for the boundary terms:
\begin{multline*}
\int  \ed^{- H(\psi)} \, \dd \psi
\; \ge \; \\ 
\int \prod_{x < a - 2n_3} \ed^{-H''_x (\psi)} \, \dd \psi_x
\, \cdot \,
\int \prod_{x\in \mathrm B(a,2n_3)} \ed^{-H''_x (\psi)} \, \dd \psi_x
\, \cdot \,
\int \prod_{x> a + 2n_3} \ed^{-H''_x (\psi)} \, \dd \psi_x
\end{multline*}
where $H''_x$ differs from $H_x$ only for the following boundary terms:
\begin{align*}
H''_{a - 2n_3 - 1} \; &= \; H_{a - 2n_3 - 1} - \epsilon  |\psi_{a-2n_3} - \psi_{a-2n_3 -1}|^2 + 2 \epsilon |\psi_{a - 2n_3 - 1}|^2,\\
H''_{a -2n_3} \; &= \; H_{a-2n_3} + 2 \epsilon |\psi_{a - 2n_3}|^2 
\end{align*}
as well as $H''_{a+2n_3}$ and $H''_{a+2n_3 +1}$ that are defined similarly. 
The middle integral is immediately seen to be bounded from below by a constant depending on $n_3$.

To finish the proof, it is thus enough to show that there exists a constant $\mathrm C < +\infty$ such that 
\begin{equation*}
\frac{\int \prod_{x < a - 2n_3} \ed^{-H'_x (\psi)} \, \dd \psi_x}{\int \prod_{x < a - 2n_3} \ed^{-H''_x (\psi)} \, \dd \psi_x}
\; \le \; \mathrm C
\qquad \text{and} \qquad 
\frac{\int \prod_{x > a + 2n_3} \ed^{-H'_x (\psi)} \, \dd \psi_x}{\int \prod_{x > a + 2n_3} \ed^{-H''_x (\psi)} \, \dd \psi_x}
\; \le \; \mathrm C.
\end{equation*}
Both cases are treated similarly, and we consider the second one only. 
Writing $b = a + 2n_3 + 1$, we have
\begin{equation*}
\frac{\int \prod_{x \ge b} \ed^{-H'_x (\psi)} \, \dd \psi_x}{\int \prod_{x \ge b} \ed^{-H''_x (\psi)} \, \dd \psi_x}
\; = \; 
\frac{\int \ed^{2 \epsilon |\psi_b|^2} \prod_{x \ge b} \ed^{-H''_x (\psi)} \, \dd \psi_x}{\int \prod_{x \ge b} \ed^{-H''_x (\psi)} \, \dd \psi_x}.
\end{equation*}
Brascamp-Lieb inequalities furnish a possible way to estimate this integral (see \cite{caf}).
Since the function $z \mapsto |z|^4$ is not strictly convex at origin, we need however to slightly modify the measure at $x=b$.
There exist constants $c,c'>0$ small enough such that
\begin{equation*}
\psi_b \; \mapsto \; \frac{1}{2}|\psi_b|^4 + c \Big(1 - \frac{1}{1 + |\psi_b|^2}\Big) - c' |\psi_b|^2
\end{equation*}
is convex on $\C$. 
Letting then
\begin{equation*}
H_b''' (\psi) \; = \; H_b'' + c \Big(1 - \frac{1}{1 + |\psi_b|^2}\Big)
\end{equation*}
and $H_x''' = H_x''$ for $x> b$, we have
\begin{equation*}
\frac{\int \prod_{x \ge b} \ed^{-H'_x (\psi)} \, \dd \psi_x}{\int \prod_{x \ge b} \ed^{-H''_x (\psi)} \, \dd \psi_x}
\; \le \; 
\mathrm C \frac{\int \ed^{2 \epsilon |\psi_b|^2} \prod_{x \ge b} \ed^{-H'''_x (\psi)} \, \dd \psi_x}{\int \prod_{x \ge b} \ed^{-H'''_x (\psi)} \, \dd \psi_x}.
\end{equation*}
It is checked that 
\begin{equation*}
\psi \; \mapsto \; \sum_{x=b}^{(N-1)/2} H'''_x (\psi) - \Big( c' |\psi_b|^2 + \frac{\epsilon}{2}\sum_{x=b}^{(N-1)/2} |\psi_{x+1} - \psi_x|^2 \Big)
\end{equation*}
is convex, with the convention that $\psi_{(N+1)/2} = \psi_{(N-1)/2}$.
By Brascamp-Lieb inequality (Corollary 7 in \cite{caf}),
followed by the change of variables $\phi_b = \psi_b$ and $\phi_{x+1} = \psi_{x+1} - \psi_x$ for $b \le x \le (N-3)/2$, 
we conclude that 
\begin{align*}
\frac{\int \prod_{x \ge b} \ed^{-H'_x (\psi)} \, \dd \psi_x}{\int \prod_{x \ge b} \ed^{-H''_x (\psi)} \, \dd \psi_x}
\; &\le \; 
\mathrm C \,
\frac{\int \ed^{ -( c' - 2 \epsilon) |\psi_b|^2} 
\prod_{x \ge b} \ed^{-\frac{\epsilon}{2}|\psi_{x+1} - \psi_x|^2} \, \dd \psi_x}{\int \ed^{-c'|\psi_b|^2}\prod_{x \ge b} \ed^{-\frac{\epsilon}{2}|\psi_{x+1} - \psi_x|^2} \, \dd \psi_x}
\qquad (\psi_{(N+1)/2} = \psi_{(N-1)/2}) \\
& = \; 
\mathrm C \,
\frac{\int \ed^{ -( c' - 2 \epsilon) |\phi_b|^2} 
\prod_{x \ge b} \ed^{-\frac{\epsilon}{2}|\phi_x|^2} \, \dd \phi_x}{\int  \ed^{-c'|\phi_b|^2}\prod_{x \ge b} \ed^{-\frac{\epsilon}{2}|\phi_x|^2} \, \dd \phi_x} \\
& = \; 
\mathrm C \,
\frac{\int \ed^{ -( c' - 2 \epsilon) |\phi_b|^2}  \dd \phi_b}{\int \ed^{-c'|\phi_b|^2} \dd \phi_b}
\; \le \; \mathrm C'
\end{align*}
for some $\mathrm C' < + \infty$. 
$\square$

\section{Proofs of Theorems  \ref{the: weak coupling conductivity},  \ref{the: noisy dynamics} and \ref{the: nekoroshev} }\label{sec: last section}

The proof of Theorems \ref{the: weak coupling conductivity} and \ref{the: noisy dynamics} closely follows the proof of analog results in \cite{huv}, itself inspired by \cite{liv}.
We will need some decorrelation properties of the Gibbs measure. 
General results in \cite{led} apply to the measures corresponding to the Hamiltonians \eqref{Hamiltonian Rotors} and \eqref{Hamiltonian DNLS}, 
if $\epsilon$ is small enough for a given temperature $T$.
Given $A,B\subset \Z_N$, let $d(A,B) = \min\{ |x-y| : x\in A, y \in B\}$. 
Given a local function $f$ on $\Omega$, let $S(f)\subset \Z_N$ be the set of points such that $f$ does only depend on variables indexed by points in $S(f)$.
There exist constants $\mathrm C < + \infty$ and $c > 0$ such that 
given two smooth functions $f$ and $g$ on $\Omega$ satisfying $\langle f \rangle_T = \langle g \rangle_T = 0$, 
it holds that 
\begin{equation}\label{decorrelation}
| \langle f g \rangle_T |
\; \le \; 
\mathrm C \ed^{-c d(S(f),S(g))}
\langle \str\nabla f\str^2 \rangle_T^{1/2} \langle \str\nabla g\str^2 \rangle_T^{1/2}
\end{equation}
where $\str\nabla f\str^2=\sum_{x \in \Z_N}  \big(\str \partial_{\omega_x} f \str^2+\str \partial_{q_x} f \str^2\big) $ for the rotor chain and  $\str\nabla f\str^2=\sum_{x \in \Z_N}  \big(\str \partial_{\psi_x} f \str^2+\str \partial_{\bar \psi_x} f \str^2\big) $ for the NLS chain.  
Strictly speaking, \eqref{decorrelation} is stated in \cite{led} only in the case where the one-site phase space is $\R$,
 but  the proof goes through without any changes in our case as well (our one-site phase space is $\T \times \R$, and $\C$ respectively) since the only genuine requirement is a Poincar\' e inequality for the one-site measure. 

\begin{proof}[Proof of Theorem  \ref{the: weak coupling conductivity}]
Applying Theorem \ref{the: decomposition fo the current}, we write 
\begin{align*}
\epsilon \int_0^{\epsilon^{-n}t} \mathcal J_N (X_\epsilon^s) \, \dd s
\; =& \; 
\frac{\epsilon}{\sqrt N} \sum_{a\in \Z_N}\int_0^{\epsilon^{-n}t}  J_{a,a+1} (X_\epsilon^s) \, \dd s \\
\; =& \;
\frac{1}{\sqrt N} \sum_{a\in \Z_N} \big( U_a (X_\epsilon^{\epsilon^{-n}t}) - U_a (X_\epsilon^{0}) \big)
+
\frac{\epsilon^{n+1}}{\sqrt N} \sum_{a\in \Z_N}\int_0^{\epsilon^{-n}t}  G_a (X_\epsilon^s) \, \dd s.
\end{align*}
Therefore
\begin{multline*}
\Big\langle \Big(
\epsilon \int_0^{\epsilon^{-n}t} \mathcal J_N (X_\epsilon^s) \, \dd s
\Big)^2 \Big\rangle_T
\; \le \; 
2\Big\langle \Big(
\frac{1}{\sqrt N} \sum_{a\in \Z_N} \big( U_a (X_\epsilon^{\epsilon^{-n}t}) - U_a (X_\epsilon^{0}) \big)
\Big)^2 \Big\rangle_T \\
+ \;
2\Big\langle \Big(
\frac{\epsilon^{n+1}}{\sqrt N} \sum_{a\in \Z_N}\int_0^{\epsilon^{-n}t}  G_a (X_\epsilon^s) \, \dd s
\Big)^2 \Big\rangle_T
\end{multline*}
We conclude by stationarity of the Gibbs measure, 
by the decorrelation inequality \eqref{decorrelation}, and by the bounds \eqref{bornes sur U et G}  that
\begin{equation*}
\Big\langle \Big(
\frac{1}{\sqrt N} \sum_{a\in \Z_N} \big( U_a (X_\epsilon^{\epsilon^{-n}t}) - U_a (X_\epsilon^{0}) \big)
\Big)^2 \Big\rangle_T
\; \le \; 
2\Big\langle \Big(
\frac{1}{\sqrt N} \sum_{a\in \Z_N} U_a
\Big)^2 \Big\rangle_T
\; \le \; \mathrm C \ep^{-1/2}.
\end{equation*}
Next, by Jensen's inequality, by the invariance of the Gibbs measure, 
by the decorrelation inequality \eqref{decorrelation}, and by the bounds \eqref{bornes sur U et G},
we have
\begin{align*}
\Big\langle \Big(
\frac{\epsilon^{n+1}}{\sqrt N} \sum_{a\in \Z_N}\int_0^{\epsilon^{-n}t}  G_a (X_\epsilon^s) \, \dd s
\Big)^2 \Big\rangle_T
\; &\le \; 
\epsilon^{2(n+1)} \epsilon^{-n} t \int_0^{\epsilon^{-n}t}
\Big\langle\Big(
\frac{1}{\sqrt N} \sum_{a\in\Z_N} G_a (X_\epsilon^s)
\Big)^2 \Big\rangle_T \, \dd s  \\
\; &= \; 
\epsilon^{2(n+1)} \epsilon^{-2n} t^2 
\Big\langle\Big(
\frac{1}{\sqrt N} \sum_{a\in\Z_N} G_a
\Big)^2 \Big\rangle_T
\; \le \; 
\mathrm C \epsilon^2 t^2.  
\end{align*}
Finally we obtain that 
\begin{equation*}
\epsilon^{-m}\Big\langle\Big( 
\frac{\epsilon}{\sqrt{\epsilon^{-n}t}} \int_0^{\epsilon^{-n}t} \mathcal J_N (X^s_\epsilon) \, \dd s
\Big)^2\Big\rangle_T
\; = \; 
\frac{\mathrm C\epsilon^{n-m}}{t} (\ep^{-1/2} + \epsilon^2 t^2).
\end{equation*}
Because $n-m > 0$,
this quantity goes to zero when taking successively the limits $N\rightarrow \infty$, $\epsilon \rightarrow 0$ and $t\rightarrow \infty$. 
\end{proof}

\begin{proof}[Proof of Theorem  \ref{the: noisy dynamics}]
Let us write $\mathsf E_T(\cdot)$ for $\langle \mathsf E (\cdot)\rangle_T$.
Theorem \ref{the: decomposition fo the current} implies that, for any $a\in \Z_N$,
\begin{equation*}
\epsilon J_{a,a+1} 
\; = \; 
L_H U_a + \epsilon^{n+1} S U_a + \epsilon^{n+1} G_a - \epsilon^{n+1} S U_a
\; = \; 
\mathcal L U_a + \epsilon^{n+1} G_a - \epsilon^{n+1} S U_a
\end{equation*}
where $\mathcal L$ is defined by \eqref{noisy generator}.
Since $G_a$ is local and antisymmetric under the exchange $\omega \mapsto -\omega$, 
there exists a local function $F_a$ that solves the Poisson equation $S F_a = G_a$ and inherits of the properties of $G_a$ (see Lemma 2 in \cite{huv}).
To simplify notations, let us write 
\begin{equation*}
\mathcal U_N \; = \; \frac{1}{\sqrt N} \sum_{a\in \Z_N} U_a, 
\qquad
\mathcal K_N \; = \; \frac{1}{\sqrt N} \sum_{a \in \Z_N} (F_A - U_a).
\end{equation*}
We find that
\begin{equation*}
\mathsf E_T \Big( \frac{\epsilon}{\sqrt t} \int_0^t \mathcal J_N (\mathcal X_\epsilon^s) \, \dd s \Big)^2
\; \le \; 
2\mathsf E_T \Big( \frac{1}{\sqrt t} \int_0^t \mathcal L \mathcal U_N (\mathcal X_\epsilon^s) \, \dd s \Big)^2
\; + \; 
2\epsilon^{2(n+1)} \mathsf E_T \Big( \frac{1}{\sqrt t} \int_0^t S \mathcal K_N (\mathcal X_\epsilon^s) \, \dd s \Big)^2.
\end{equation*}
The first term is written as a sum of the variance of a stationary martingale and a rest term:
\begin{align}
\mathsf E_T \Big( \frac{1}{\sqrt t} \int_0^t \mathcal L \mathcal U_a \Big)^2
\; &= \; 
\big\langle \mathcal U_N \cdot (-\epsilon^{n+1} S) \mathcal U_{N}) \big\rangle_T
\; + \; \frac{1}{t}E_T \big( \mathcal U_N (\mathcal X_\epsilon^t) - \mathcal U_N (\mathcal X_\epsilon^0) \big)^2
\nonumber\\
\; &\le \; 
\epsilon^{n+1} \big\langle \mathcal U_N \cdot (-S) \mathcal U_{N}) \big\rangle_T
\; + \; \frac{2\langle \mathcal U^2_N \rangle_T}{t}
\; \le \; \mathrm C \ep^{-1/2} (\epsilon^{n+1} + 1/t),
\label{noisy case bound 1}
\end{align} 
where the last bound has been obtained as in the proof of Theorem \ref{the: weak coupling conductivity}.
The second term involves a function $S\mathcal K$, that obviously lies in the image of the symmetric part $S$ of the generator $\mathcal L$. 
A classical bound \cite{kip} yields
\begin{align}
\epsilon^{2(n+1)} \mathsf E_T \Big( \frac{1}{\sqrt t} \int_0^t S \mathcal K_N (\mathcal X_\epsilon^s) \, \dd s \Big)^2
\; &\le \;
\mathrm C  \epsilon^{2(n+1)} \big\langle S\mathcal K_N \cdot (-\epsilon^{n+1}S)^{-1} S\mathcal K_N \big\rangle_T
\nonumber\\
\; &= \;
\mathrm C \epsilon^{n+1} \big\langle S\mathcal K_N \cdot \mathcal K_N \big\rangle_T
\; \le \; 
\mathrm C' \epsilon^{n+1/2}, 
\label{noisy case bound 2}
\end{align}
where the last bound has been obtained as in the proof of Theorem \ref{the: weak coupling conductivity}.
The theorem is obtained by taking the limit $N\rightarrow \infty$ and then the limit $t\rightarrow \infty$ in \eqref{noisy case bound 1} and \eqref{noisy case bound 2}.
\end{proof}

\begin{proof}[Proof of Theorem  \ref{the: nekoroshev}]
By the definition of the currents $J_{a,a+1}$ we get
\begin{equation*}
L_H(H_I) \; = \; \epsilon J_{a_1,a_1+1} -  \epsilon J_{a_2,a_2+1} 
\end{equation*}
so that, by integrating over time the statement of Theorem \ref{the: decomposition fo the current}, 
\begin{equation*}
H_I(X^t_\epsilon) -H_I 
\; = \; 
\sum_{j=1,2} (-1)^{j+1} \Big(U_{a_j}(X^t_\epsilon)- U_{a_j}  + \epsilon^{n+1} \int_0^t \dd s  \, G_{a_j}(X^s_\epsilon) \Big).
\end{equation*}
By invariance of the Gibbs state, we have $\langle (U_{a_j}(X^t_\epsilon))^2 \rangle_T=\langle U_{a_j}^2 \rangle_T$ for $j=1,2$. 
Hence by analogous manipulations as those in  the proof of Theorem \ref{the: weak coupling conductivity}, we get 
\begin{equation*}
\langle ( H_I(X^t_\epsilon) -H_I)^2 \rangle_T \;\leq\;  \mathrm C(n) \sum_{j=1,2} \left( \langle U_{a_j}^2 \rangle_T \, + \, \epsilon^{2n+1}  t^2   \langle G_{a_j}^2 \rangle_T \right).
\end{equation*}
The theorem now follows by the bounds on $U_a,G_a$ stated in Theorem \ref{the: decomposition fo the current}, upon taking $t=\epsilon^{-n}$.  
\end{proof}

\paragraph{Acknowledgements:}
We thank the anonymous referee for pointing out several inaccuracies in our original manuscript. 
W.D.R thanks the DFG (German Research Fund) and the Belgian Interuniversity Attraction Pole  (P07/18 Dygest) for financial support. 
F.H. thanks the University of Helsinki and Heidelberg University for hospitality,
as well as the ERC MALADY and the ERC MPOES for financial support.   

\frenchspacing
\bibliographystyle{plain}

\end{document}